\documentclass[11pt]{article}
\usepackage[T1]{fontenc}
\usepackage{lmodern}
\usepackage[margin=1in]{geometry}
\usepackage{amsmath,amssymb,amsthm,mathtools}
\usepackage{microtype,enumitem,booktabs}
\usepackage{tikz}
\usepackage[colorlinks=true,linkcolor=blue,citecolor=blue,urlcolor=blue]{hyperref}
\usetikzlibrary{arrows.meta,positioning,calc}
\numberwithin{equation}{section}
\newtheorem{theorem}{Theorem}[section]
\newtheorem{proposition}[theorem]{Proposition}
\newtheorem{corollary}[theorem]{Corollary}
\newtheorem{lemma}[theorem]{Lemma}
\theoremstyle{definition}

\newtheorem{remark}[theorem]{Remark}
\newcommand{\R}{\mathbb R}
\newcommand{\N}{\mathbb N}
\newcommand{\Z}{\mathbb Z}
\newcommand{\E}{\mathbb E}
\newcommand{\Prob}{\mathbb P}
\newcommand{\eps}{\varepsilon}
\newcommand{\A}{\mathcal A}
\newcommand{\1}{\mathbf 1}
\newcommand{\ceil}[1]{\left\lceil #1\right\rceil}
\newcommand{\floor}[1]{\left\lfloor #1\right\rfloor}
\DeclareMathOperator{\val}{val}
\DeclareMathOperator{\diam}{diam}
\DeclareMathOperator{\dist}{dist}
\DeclareMathOperator{\Unif}{Unif}

\title{Deterministic Causal Quantization with Bounded Error:\\ Adaptive Reconstruction and Entropy Bounds }
\author{Rama Cont\thanks{Mathematical Institute, University of Oxford.}
\and Massoud Heidari\thanks{LeastBit Information Lab, New York.}}
\date{\today}
\begin{document}

\maketitle
\begin{abstract}
We study deterministic causal quantization of real-valued data under a prescribed error bound: every sample must be reconstructed within tolerance $\varepsilon$ on every input sequence, using only present and past inputs. We construct an adaptive quantizer whose encoder and decoder maintain a common finite set of reconstruction values, updated from the transmitted symbols alone. The quantizer requires no knowledge of the source distribution, and suitable initialization reproduces any uniform grid. 

For stationary sources that can be quantized with finite entropy,
the scheme uses at most two additional bits per sample compared
with the best block quantizer satisfying the same error bound,
provided its initial reconstruction values are at least
$\varepsilon$ apart. This guarantee requires no density,
independence, or mixing assumptions. Initializing the scheme
to reproduce a uniform grid reduces the gap to one bit per sample.
  Rates are measured by the entropy of the quantizer output, which block
  entropy coding can approach.
Under entropy and Sobolev regularity conditions, the grid's excess
rate is at most $2\mathcal J\varepsilon^2/(\pi^2\ln 2)$, where
$\mathcal J$ is the asymptotic Fisher information per sample.
Thus the cost of causal reconstruction vanishes at a 
quadratic rate as tolerance decreases.
For sources uniform within separated intervals of width $2\varepsilon$, storing the interval centers attains the optimum and its Shannon lower bound.  These configurations can outperform every uniform grid of the same
cell width. Over sources with well-separated intervals, the sharp
worst-case excess rate of the best uniform grid is
$1/(2\ln 2)\approx0.72$ bits per sample.
\end{abstract}
{\bf Keywords:} bounded-error quantization; adaptive quantization; sequential source
coding; epsilon-entropy; entropy-coded quantization; Shannon lower bound;
Fisher information; logarithmic Sobolev inequality.
\newpage
\tableofcontents
\newpage
\section{Introduction}\label{sec.introduction}

\subsection{Causal compression with deterministic error bound}

Many data streams are compressed under a per-sample error guarantee: every
reconstructed value must lie within a prescribed tolerance of the original.
This is the criterion of near-lossless image coding~\cite{Weinberger2000} and of
error-bounded compression of scientific data~\cite{DiCappello2016}. It certifies
each individual recording rather than an average over a source model, so it
can be checked without knowing the source distribution. For a real-valued
stream $(x_t)_{t\geq1}$ and a tolerance $\eps>0$, we require reconstructions
$\widehat x_t$ with
\begin{equation}\label{eq.pathwise-error}
 |x_t-\widehat x_t|\leq\eps\qquad(t\geq1)
\end{equation}
on every input sequence. We focus on causal schemes, in which $\widehat x_t$
depends only on $x_1,\dots,x_t$, so that each sample can be encoded as it
arrives; the transmitted symbols may then be entropy coded in blocks.

The benchmark is the least rate of any block code
meeting~\eqref{eq.pathwise-error}. For a stationary source $X$, the process
$\eps$-entropy $R_\eps(X)$ is the smallest entropy per sample of a partition of
$\R^n$ into cells contained in sup-norm balls of radius $\eps$, in the limit
$n\to\infty$~\cite{Posner1967,posner1971,Bucklew1991}. Block codes may look
ahead; causal codes may not. The causal floor $R^{c}_\eps(X)$ is the least rate
of a code whose reconstructions depend only on current and past inputs, with
block entropy coding allowed~\cite{NeuhoffGilbert1982}, so that
$R_\eps\leq R^{c}_\eps$. We focus here on the following questions: 
\begin{itemize}
\item Can an adaptive causal quantizer guarantee the error
bound~\eqref{eq.pathwise-error} on every input sequence, without
knowledge of the source distribution or side information
to synchronize encoder and decoder?
\item How close can its compression rate come to the optimum
$R_\eps(X)$ achievable by unrestricted block quantization?
\item When can stored reconstruction values outperform every
uniform grid with the same error tolerance?
\end{itemize}
We give explicit responses to these questions, for every tolerance level $\varepsilon$: causality costs little
under~\eqref{eq.pathwise-error}, the simplest causal scheme is already within
one bit per sample of the optimal rate $R_\eps(X)$, and reconstruction values matched to the
source can do better than any uniform grid. 
\subsection{Contributions}
The paper makes four main contributions.

\paragraph{An adaptive scheme with a distribution-free guarantee.}
We construct a deterministic causal quantizer satisfying
\eqref{eq.pathwise-error}, in which encoder and decoder maintain
the same set of $\omega$ reconstruction values and update it from
the transmitted symbols alone, without side information or knowledge
of the source distribution. A sample within $\eps$ of a stored value
is reconstructed by that value. Otherwise, the gap containing the
sample is divided into cells of width at most $2\eps$, anchored at
the neighboring stored values; the sample is reconstructed at the
midpoint of its cell, that midpoint is stored, and the oldest stored
value is discarded. We call this scheme the Midpoint Representative
Realization (MRR).

If the initial values are at least $\eps$ apart, the rate of MRR
is within two bits per sample of the block-code optimum $R_\eps(X)$
for every stationary source admitting finite-entropy scalar
quantization, without density, independence, or mixing assumptions
(Corollary~\ref{cor.two-bit-process}). The proof shows that every
interval of length $2\eps$ meets at most four cells of the adaptive
partition, and this overlap bound is sharp
(Theorem~\ref{thm.four-overlap}). It then counts the branches of the
tree of possible symbol sequences. More generally, if at every
reachable state each interval of length $2\eps$ meets at most $m$
cells, the same argument gives a $\log_2 m$-bit comparison for
synchronized causal quantizers under the hypotheses of
Proposition~\ref{prop.causal-overlap}. We are not aware of a comparable
distribution-free comparison for adaptive quantizers.

Let $\mathcal J=\limsup_{n\to\infty}J(f_n)/n$ be the asymptotic
Fisher information per sample of the block densities $f_n$, defined
in~\eqref{eq.block-fisher-definition} and~\eqref{eq.fisher-rate}.
Under the entropy and Sobolev regularity assumptions of
Theorem~\ref{thm.near-floor}, with $\mathcal J<\infty$, we also
obtain a bound from an arbitrary initial state:
\begin{equation}\label{eq.intro-gap}
 0\leq R_{\mathrm{MRR}}-R_\eps
 \leq \frac{2\mathcal J\eps^2}{\pi^2\ln2}
 +\frac{\eps\sqrt{\mathcal J}}{e\ln2}
   \left(\floor{\frac{3\omega}{2}}-1\right),
\end{equation}
where $R_{\mathrm{MRR}}$ denotes the entropy rate of the transmitted
symbols, as defined in the paper. The second term grows at most
linearly with the capacity $\omega$.

\paragraph{A uniform grid is within one bit per sample of the optimal rate.}
With consistent boundary ownership, initialization on a uniform grid
of spacing $2\eps$ makes MRR reproduce that grid
(Proposition~\ref{prop.lattice}); its transmitted symbols and the
grid indices have identical block entropy. This specialization gives
sharper guarantees, uniformly over capacity and grid origin.
With entropy coding of its indices, the uniform quantizer's rate
exceeds $R_\eps(X)$ by at most one bit per sample for every stationary
source admitting finite-entropy scalar quantization. No density,
independence, or mixing assumption is needed, and the constant is
sharp (Corollary~\ref{cor.lattice-one-bit}).

\paragraph{The vanishing cost of causal reconstruction.}
Under the entropy and Sobolev regularity assumptions of
Theorem~\ref{thm.near-floor}, grid initialization removes the
capacity term in~\eqref{eq.intro-gap}. The gap of the uniform grid
is therefore at most
$\min\{1,\,2\mathcal J\eps^2/(\pi^2\ln2)\}$ bits per sample;
see~\eqref{eq.lattice-combined-gap} and
Proposition~\ref{prop.lattice}. Unbounded scores, including Gaussian
ones, are allowed. Since the grid is itself causal, the price of
causal reconstruction under~\eqref{eq.pathwise-error} is
$O(\eps^2)$ as $\eps\downarrow0$
(Remark~\ref{rem.price-causality}). Here causality constrains
reconstruction, while subsequent block entropy coding is allowed.
This contrasts with the classical high-resolution space-filling
loss of $\tfrac12\log_2(\pi e/6)\approx0.254$ bits per sample for
entropy-coded scalar quantization under mean-squared error and
the corresponding regularity assumptions~\cite{LinderZamir2006}.
The geometric distinction is that the error ball under the
blockwise maximum norm is a cube, which tiles space by
translation~\cite{PosnerRodemich1969}.

For memoryless sources, the optimal causal rate equals the scalar
$\eps$-entropy, without the time-sharing that can be required under
average distortion~\cite{NeuhoffGilbert1982}
(Proposition~\ref{prop.causal-iid}). For a single sample satisfying
the corresponding entropy and Sobolev assumptions, the estimate
gives a nonasymptotic form of R\'enyi's
theorem~\cite{Renyi1959}: the entropy of a sample quantized with
step $\Delta$ exceeds $h(X)-\log_2\Delta$ by at most
$\Delta^2J(X)/(2\pi^2\ln2)$, for every $\Delta>0$ and every grid
offset (Lemma~\ref{lem.rectangle-kl}).

\paragraph{Exact optimal rates and source-matched reconstruction values.}
For sources with structure at the scale of the tolerance, the optimal rate
can be computed exactly and compared with every grid.
Let $X_t=c_{J_t}+U_t$, where the finitely many centers $c_j$ are
at least $2\eps$ apart, $J$ is stationary, and $(U_t)$ is i.i.d.,
uniform on $[-\eps,\eps)$, and independent of $J$.
Then the $\eps$-entropy equals its Shannon lower bound at every
block length. MRR initialized with exactly these centers attains
the optimum, whether or not they lie on a common grid
(Theorem~\ref{thm.exact-separated}). This includes histograms
with bin width $2\eps$, with initialization matched to the bin
centers.

Source-matched reconstruction values can outperform every uniform
grid. Over this source class with centers more than $4\eps$ apart,
the sharp worst-case excess rate of the best uniform grid of width
$2\eps$ is $1/(2\ln2)\approx0.72$ bits per sample (Corollary \ref{cor:sharp-uniform-band-penalty}). For two equally
weighted intervals with centers $5\eps$ apart, the excess is
exactly half a bit (Proposition~\ref{prop.capacity-advantage} and
Section~\ref{sec.exact-floor}). These comparisons use stored values
matched to the source, which MRR then keeps unchanged; whether
first-in, first-out replacement finds such values from an arbitrary
start is left open.

\subsection{Related work}

Adaptive quantization with synchronized encoder and decoder goes back to
Jayant's one-word-memory quantizer~\cite{Jayant1973}; see also Ortega and
Vetterli~\cite{ortega1997} and the survey~\cite{GrayNeuhoff1998}. Zero-delay
schemes that compete with the best scalar quantizer in hindsight have been
developed for individual sequences under average
distortion~\cite{LinderLugosi2001,GyorgyLinderLugosi2008}.
Bhatt et al.~\cite{BhattNazerOrdentlichPolyanskiy2021} bound
the number of quantization levels needed to preserve information
about a target variable; here we minimize output entropy under
a deterministic reconstruction-error constraint.
Kostina and Hassibi~\cite{KostinaHassibi2019} study causal compression with a
shared predictor and quadratic cost; here the shared state is a set of
reconstruction values and fidelity is~\eqref{eq.pathwise-error}.

Causal source codes were introduced in~\cite{NeuhoffGilbert1982}, and their
high-resolution loss under squared error was identified
in~\cite{LinderZamir2006}. Under expected difference distortion, the Shannon
lower bound is asymptotically tight~\cite{LinderZamir1994,Koch2016}, and
entropy-coded uniform quantization is asymptotically optimal among scalar
quantizers~\cite{gish1968}. These results are asymptotic; ours hold at every
tolerance for the pathwise constraint~\eqref{eq.pathwise-error}.
Nonasymptotic lattice-entropy bounds under density regularity, including for
maximum-coordinate distortion, are given in~\cite[Theorems~8 and~13]{Kostina2017};
for a fixed lattice, the decomposition in
Proposition~\ref{prop.rectangle-identity} reduces
to~\cite[Eq.~(88)]{Kostina2017}. Under the common assumption
$|\partial_i\ln f_n|\leq L$, the cubic-grid bound of~\cite{Kostina2017} gives an
entropy excess above the Shannon lower bound of at most $2L\eps/\ln2$ per
coordinate, whereas Lemma~\ref{lem.rectangle-kl}, which rests on the sharp
interval logarithmic Sobolev inequality~\cite{GhangMartinWaruhiu2014}, gives
$2L^2\eps^2/(\pi^2\ln2)$. Related entropy-continuity bounds via Wasserstein
distance appear in~\cite{PolyanskiyWu2016}. Asymptotic tightness of uniform
quantization is classical~\cite{Renyi1959,gish1968}; our grid estimates are
nonasymptotic. The four-cell bound is combined with the partition-overlap
inequality~\cite{Hochman2014}. The finite-alphabet computation in
Appendix~\ref{app.computation} specializes graph-entropy
optimization~\cite{Csiszar1990,Harangi2023} to the absolute-error constraint.

\subsection{Organization and notation}

Section~\ref{sec.ingredients} defines causal quantizers with a shared state, and
Section~\ref{sec.dynamics} constructs MRR and proves its pathwise and
synchronization properties; for i.i.d.\ inputs with a bounded, atomless distribution,
its age-ranked state is a Feller Markov chain
(Theorem~\ref{thm.feller}). Section~\ref{sec.overlap} proves
the four-cell bound and the scalar comparisons, including the two-bit bound for
single samples (Theorem~\ref{thm.two-bit}). Section~\ref{sec.process} treats
rates: the distribution-free process bound, the Fisher-information estimate,
uniform grids and the price of causality (Sections~\ref{sec.grid-certificates}
and~\ref{sec.fixed-grid}), and exact floors (Section~\ref{sec.exact-floor}).
It also gives certified brackets for autoregressive and independent Laplace and
Gaussian sources (Corollaries~\ref{cor.ar-laplace}, \ref{cor.iid-laplace-grid}
and~\ref{cor.ar-gaussian}), the capacity-one case
(Corollary~\ref{cor.capacity-one-grid}), histograms
(Corollary~\ref{cor.uniform-cell-shannon-equality},
Remark~\ref{rem.histograms}), the uniqueness of the two-interval configuration
(Remark~\ref{rem.isolated-geometry}), and a rate formula for independent inputs
in terms of limiting state laws (Lemma~\ref{lem.iid-rate},
Proposition~\ref{prop.invariant-state-rate}). Section~\ref{sec.discussion}
discusses limitations and open questions. Readers interested mainly in uniform
grids can go directly to Sections~\ref{sec.grid-certificates}
and~\ref{sec.fixed-grid}.

Throughout, $H$ and $h$ denote discrete and differential entropy in bits, and
$D_2$ is relative entropy with logarithms to base two; natural logarithms are
written $\ln$. Standard entropy identities and noiseless coding results are used
as in~\cite{CoverThomas2006}.

\section{Causal quantizers with a shared state}\label{sec.ingredients}
Our aim is to compress a real-valued stream sample by sample, with every
reconstruction within $\eps$ of the original, at a rate as close as possible to
the block optimum $R_\eps(X)$. A fixed quantizer, such as a uniform grid, meets
the error bound but cannot adapt its cells to the data. An adaptive quantizer
can, but the decoder must then know which cells the encoder is using at each
time. As in backward-adaptive quantization~\cite{Jayant1973,ortega1997}, we
achieve this without side information by letting encoder and decoder maintain a
common \emph{state}, which both sides update from the transmitted symbols alone.
This section defines this class of schemes, their error guarantee and their
rate; Section~\ref{sec.dynamics} constructs the scheme we analyze.

At each time, the encoder sees the new sample and the current state and
transmits a symbol from a countable alphabet $\mathcal S$; the decoder
reconstructs the sample from the state and the symbol; and both sides update
the state from the same information. Formally, a \emph{causal quantizer with a
shared state} consists of
a measurable state space $\mathcal A$, a countable alphabet $\mathcal S$, and
three Borel maps
\[
 S:\mathcal A\times\R\to\mathcal S,\qquad
 d:\mathcal A\times\mathcal S\to\R,\qquad
 V:\mathcal A\times\mathcal S\to\mathcal A,
\]
the encoder, the decoder and the state update. Starting from a common
deterministic state $A_0$, the scheme runs as
\begin{equation}\label{eq.shared-state}
 S_{t+1}=S(A_t,x_{t+1}),\qquad
 \widehat x_{t+1}=d(A_t,S_{t+1}),\qquad
 A_{t+1}=V(A_t,S_{t+1})\qquad(t\geq0).
\end{equation}
Because the state update uses only the transmitted symbol, the decoder
reproduces the encoder's state at every step and needs no side information. We
call this property \emph{decoder compatibility}; it is the only structural
requirement of the class. Writing $U(A,x)=V(A,S(A,x))$ for the induced update,
the state evolves as $A_{t+1}=U(A_t,x_{t+1})$.

For a fixed state $A$, the sets $C_A(s)=\{x:S(A,x)=s\}$ partition $\R$ into
cells, and the cell containing $x$ determines both the transmitted symbol and
the reconstruction $\pi_A(x)=d(A,S(A,x))$. The quantizer is
\emph{$\eps$-faithful} if
\begin{equation}\label{eq.faithful}
 |x-\pi_A(x)|\leq\eps\qquad\text{for every state }A\text{ and every }x\in\R,
\end{equation}
that is, if every cell lies in the closed interval of radius $\eps$ around its
reconstruction. An $\eps$-faithful quantizer satisfies~\eqref{eq.pathwise-error}
on every input sequence, and it is causal in the sense
of~\cite{NeuhoffGilbert1982}: $\widehat x_t$ depends only on $x_1,\dots,x_t$.
Its \emph{rate} from the initial state $A_0$ is the entropy rate
$\limsup_n n^{-1}H(S_{1:n})$ of the transmitted symbols, which block entropy
coding approaches~\cite{CoverThomas2006}. Since the symbol string determines a
reconstruction within $\eps$ of every coordinate, this rate is at least the
process $\eps$-entropy $R_\eps(X)$ of Section~\ref{sec.process}.

The simplest example has a single state. For an origin $\theta\in\R$, the
uniform quantizer with cells of width $2\eps$ has $\mathcal A=\{A_0\}$,
$S(A_0,x)=\lfloor (x-\theta)/(2\eps)+1/2\rfloor$, $d(A_0,k)=\theta+2\eps k$, and
$V(A_0,k)=A_0$. The adaptive scheme of Section~\ref{sec.dynamics} uses as its
state a finite set of stored reconstruction values: its cells are built around
the stored values, and each new reconstruction outside them replaces the oldest
stored value.

\section{Midpoint Representative Realization}\label{sec.dynamics}
\subsection{Construction}\label{sec.mrr}
MRR builds its cells around a set of stored values. A sample within $\eps$ of
a stored value is reconstructed by that value. Otherwise, the gap containing
the sample is divided into cells of width at most $2\eps$, anchored at the
neighboring stored values; the sample is reconstructed at the midpoint of its
cell, and that midpoint replaces the oldest stored value. 

For example, take $\eps=1$ and two stored values $0$ and $5$,
with $0$ the older value. An input $0.4$ is reconstructed as $0$,
and the state is unchanged. The uncovered interval $(1,4)$
is divided at $3$, with reconstruction values $2$ and $3.5$.
Thus an input $3.6$ is reconstructed as $3.5$: the encoder
sends the address of the second gap cell, and both encoder
and decoder replace $0$ by $3.5$. The decoder needs neither
the original input nor a separately transmitted reconstruction
value, since the shared state and the address determine the
midpoint.

We now formally define the construction.
\paragraph{States.}
MRR stores $\omega$ distinct reconstruction values, called representatives,
together with timestamps that record the order in which they were stored. Its
state is a set
\begin{equation}\label{eq.rep-state}
 A=\{(s_i,\tau_i):0\leq i<\omega\}\subset\R\times\N_0
\end{equation}
whose values $s_i$ are pairwise distinct and whose timestamps $\tau_i$ are
pairwise distinct. Write $\A_\omega$ for the collection of these states and
$\mathcal R_\omega$ for the collection of $\omega$-element subsets of $\R$.
The set of stored values,
\begin{equation}\label{eq.value-set}
 \val(A)=\{s:(s,\tau)\in A\text{ for some }\tau\},
\end{equation}
is the geometry of the state, written in increasing order as
$R=\{s_0<\cdots<s_{\omega-1}\}$. Sorting by value identifies $\A_\omega$ with a
Borel subset of $\R^\omega\times\N_0^\omega$. The cells and reconstructions
depend only on the geometry $R=\val(A)$; the timestamps serve only to identify
the oldest representative, which is the one replaced. In the notation of
Section~\ref{sec.ingredients}, $S(A,x)=S_R(x)$ and $d(A,s)=d_R(s)$, so that
$\pi_A=\pi_R$.

\paragraph{Address alphabet.}
Fix $R=\{s_0<\cdots<s_{\omega-1}\}$. The Midpoint
Representative Realization (MRR) uses the address alphabet
\[
 \begin{aligned}
 \mathcal S={}&\{(\mathrm P,i):0\leq i<\omega\}
 {}\cup\{(\mathrm I,i,j):0\leq i<\omega-1,\ j\in\N_0\}{}\cup\{(\mathrm L,j),(\mathrm R,j):j\in\N_0\}.
 \end{aligned}
\]
The priority and boundary conventions below are part of the definition.

\paragraph{Proximity.}
If $\dist(x,R)\leq\eps$, select the nearest representative, with ties
resolved in favor of the smaller value. If it is $s_i$, emit
$(\mathrm P,i)$ and reconstruct $s_i$. Denote its cell by $P_i(R)$.
It is the intersection of the closed halo $[s_i-\eps,s_i+\eps]$ with
the nearest-neighbor region assigned to $s_i$.

\paragraph{Interior refinement.}
For consecutive representatives put
\begin{equation}\label{eq.gap-parameters}
 a_i=s_i+\eps,\qquad b_i=s_{i+1}-\eps,
 \qquad d_i=b_i-a_i.
\end{equation}
If $d_i>0$, the uncovered gap is $G_i=(a_i,b_i)$. Set
\begin{equation}\label{eq.gap-grid}
 n_i=\ceil{\frac{d_i}{2\eps}},\qquad
 l_{ij}=a_i+2j\eps,\qquad
 u_{ij}=\min\{a_i+2(j+1)\eps,b_i\},
 \quad 0\leq j<n_i.
\end{equation}
For $x\in G_i$ choose $j=\floor{(x-a_i)/(2\eps)}$, emit
$(\mathrm I,i,j)$, and reconstruct
\begin{equation}\label{eq.interior-midpoint}
 q_{ij}=\frac{l_{ij}+u_{ij}}2.
\end{equation}
An internal cut belongs to the cell on its right. Both endpoints of $G_i$
belong to proximity cells. All gap cells except possibly the last have
length $2\eps$.

\paragraph{Exterior refinement.}
For $x>s_{\omega-1}+\eps$, put
\[
 j=\floor{\frac{x-s_{\omega-1}-\eps}{2\eps}},\qquad
 S_R(x)=(\mathrm R,j),\qquad
 \pi_R(x)=s_{\omega-1}+2(j+1)\eps.
\]
For $x<s_0-\eps$, put
\[
 j=\floor{\frac{s_0-\eps-x}{2\eps}},\qquad
 S_R(x)=(\mathrm L,j),\qquad
 \pi_R(x)=s_0-2(j+1)\eps.
\]
All exterior cells have length $2\eps$. At a halo endpoint, the proximity
rule takes precedence. These rules cover the case $\omega=1$ as well.

\paragraph{Replacement.}
For $A\in\A_\omega$ define
\begin{equation}\label{eq.clock}
 T(A)=\max\{\tau:(s,\tau)\in A\},\qquad
 (r_*(A),\tau_*(A))=\underset{(s,\tau)\in A}{\operatorname{argmin}}\ \tau.
\end{equation}
The state is unchanged on a proximity address. On every refinement address
with reconstruction $q$, set
\begin{equation}\label{eq.mrr-update}
 V(A,s)=\bigl(A\setminus\{(r_*(A),\tau_*(A))\}\bigr)
                \cup\{(q,T(A)+1)\}.
\end{equation}
For addresses unused at $\val(A)$ set $V(A,s)=A$. This convention defines
$V$ everywhere. Finally set $U(A,x)=V(A,S_{\val(A)}(x))$.
The timestamps form an insertion clock: they increase only when a
refinement occurs.

Figure~\ref{fig.framework} displays the reconstruction and update paths.
The encoder and decoder share $A_t$; the transmitted address determines
  the reconstruction and the next state through the maps $d$ and $V$.

\begin{figure}[tbp]
\centering
\includegraphics[width=\textwidth]{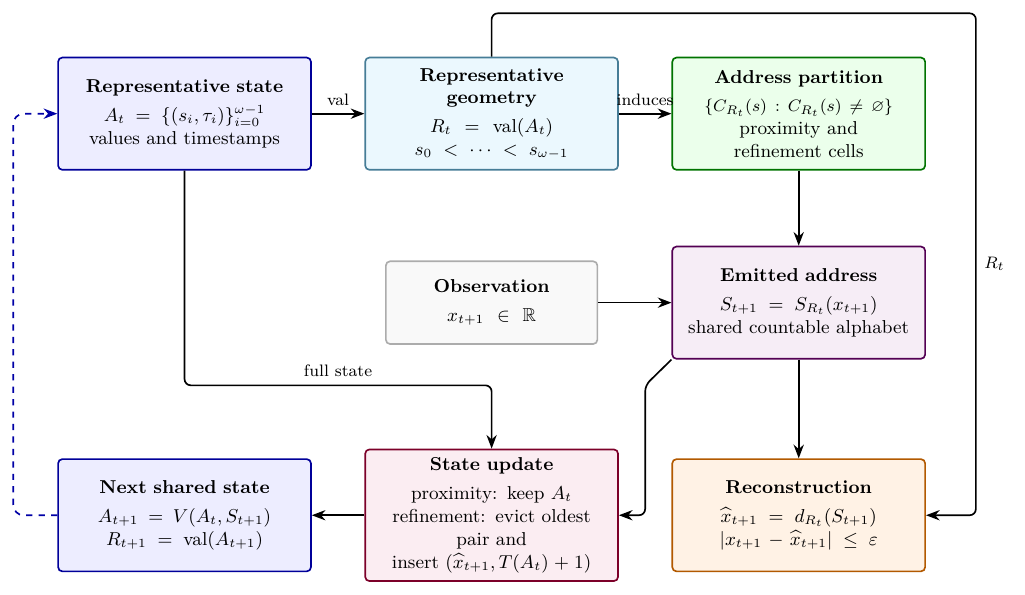}
\caption{Representative-state reconstruction and synchronized evolution
for MRR. The value projection determines the address partition. The full
timestamped state and the emitted address determine the update, with
$T(A_t)$ defined in \eqref{eq.clock}. Proximity cells reconstruct to their
stored representatives; refinement cells reconstruct to their midpoints.}
\label{fig.framework}
\end{figure}

\subsection{Structural properties}\label{sec.axioms}
MRR has four structural properties, each reflecting a design constraint.
\begin{enumerate}[label=(P\arabic*)]
\item \emph{Constant capacity.} Every state stores exactly $\omega$ values.  The number of stored reconstruction values is fixed in advance.
This is a capacity constraint, not a bound on storage in bits.
\item \emph{Distinct values.} The stored values are pairwise distinct, and so
 are their timestamps. Each stored value is a separate reconstruction point,
 and the oldest value is unique.
\item \emph{Time-invariant update.} The update is a single map $U(A,x)$: it
 does not depend on the time index, and it depends on past inputs only through
 the current state. For i.i.d.\ inputs this makes the state a time-homogeneous
Markov chain (Section~\ref{sec.compact}).
\item \emph{One replacement per step.} Each step either leaves the state
 unchanged or replaces one stored value by a new one. The shared state
 therefore changes by at most one value per transmitted symbol, and
 first-in, first-out replacement is well defined.
\end{enumerate}
Theorem~\ref{thm.realization} below verifies these properties, together with
decoder compatibility and~\eqref{eq.pathwise-error}.

\subsection{Well-posedness and separation}
\begin{theorem}[Pathwise realization]\label{thm.realization}
For every $\eps>0$, $\omega\geq1$, and $A_0\in\A_\omega$, MRR is a causal
quantizer with a shared state in the sense of Section~\ref{sec.ingredients}
  and satisfies properties (P1)--(P4). Every address
cell is an interval of positive length at most $2\eps$, and
\eqref{eq.pathwise-error} holds for every input sequence. Every refinement
inserts a value at distance strictly greater than $\eps$ from every value
in the preceding state. Moreover,
\begin{equation}\label{eq.idempotence}
 \pi_R(\pi_R(x))=\pi_R(x)\qquad(R\in\mathcal R_\omega, x\in\R).
\end{equation}
\end{theorem}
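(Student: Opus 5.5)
The proof is a direct verification from the construction. I would go through the geometry of a fixed $R$ first, then the dynamics.

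\emph{Step 1: the cells of $S_R$ partition $\R$.} Fix $R=\{s_0<\dots<s_{\omega-1}\}$. The real line splits into five kinds of pieces:
\begin{itemize}[label=--]
\item the halo union $\{x:\dist(x,R)\le\eps\}$;
\item the left exterior $(-\infty,s_0-\eps)$;
\item the right exterior $(s_{\omega-1}+\eps,\infty)$;
\item the gaps $G_i=(a_i,b_i)$ with $d_i>0$;
\item nothing else, since for $d_i\le0$ the halos of $s_i$ and $s_{i+1}$ overlap or touch.
\end{itemize}
Within the halo union, the nearest-representative rule with ties to the smaller value assigns each point a unique index $i$. Each $P_i(R)$ is then an interval containing $s_i$, of positive length (it contains a neighborhood of $s_i$ intersected with the halo, since the values are distinct), contained in $[s_i-\eps,s_i+\eps]$.

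Within $G_i$, the floor rule gives $j\in\{0,\dots,n_i-1\}$, because $(x-a_i)/(2\eps)<d_i/(2\eps)\le n_i$. The cell is $[l_{ij},u_{ij})\cap G_i$, of positive length at most $2\eps$; the first cell is open at $a_i$, which belongs to a proximity cell. The exterior cells are half-open intervals of length $2\eps$, with the endpoint at the halo excluded.

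\emph{Step 2: faithfulness, idempotence and positivity of length.} The midpoint of an interval of length at most $2\eps$ is within $\eps$ of each of its points, and the same holds for the exterior reconstructions $s_{\omega-1}+2(j+1)\eps$ and $s_0-2(j+1)\eps$, which are the midpoints of their cells. This gives~\eqref{eq.faithful} and hence~\eqref{eq.pathwise-error} by induction on $t$.

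For idempotence~\eqref{eq.idempotence}:
\begin{itemize}[label=--]
\item On proximity cells, $\pi_R(s_i)=s_i$, because $s_i$ is at distance $0$ from itself and strictly closer to itself than to any other value.
\item On refinement cells, I must show that the midpoint $q$ lies in its own cell. For gap cells with $j<n_i-1$, the midpoint $a_i+(2j+1)\eps$ lies in the interior of the cell, and so does the last midpoint since $u>l$. Exterior midpoints lie in the interior of their cells.
\end{itemize}
Note that $q$ need not be in the open gap in any other sense, but lying in the cell is enough.

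\emph{Step 3: separation of inserted values.} A refinement midpoint $q$ lies in the open interior of a cell disjoint from the closed halo union. Hence $\dist(q,R)>\eps$.

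This is the step needing the most care: cells are half-open, so I argue with interiors. Each refinement cell is an interval $(l,u)$ or $[l,u)$ with $l\ge a_i$ and $u\le b_i$ (respectively with $l>s_{\omega-1}+\eps$ in the exterior). Its midpoint is strictly between $l$ and $u$, hence strictly inside $(a_i,b_i)$, or strictly beyond $s_{\omega-1}+\eps$. Therefore $q$ is at distance greater than $\eps$ from every $s_k$: the nearest stored values are $s_i$ and $s_{i+1}$, and values beyond them are farther away.

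\emph{Step 4: structural properties (P1)--(P4).} Since $q\notin R$, removing the oldest pair and adding $(q,T(A)+1)$ keeps the values pairwise distinct and exactly $\omega$ in number. The new timestamp exceeds all existing ones, so timestamps remain distinct and the argmin in~\eqref{eq.clock} is unique. This gives (P1), (P2) and (P4). The map $U$ is time-free, which gives (P3).

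\emph{Step 5: measurability and decoder compatibility.} Measurability of $S$, $d$ and $V$ follows because, on the Borel identification of $\A_\omega$ with sorted tuples, each address region is defined by finitely many inequalities among the coordinates and $x$, together with floor functions, which are Borel. Decoder compatibility holds by construction, since $d$ and $V$ use only $A$ and the symbol.
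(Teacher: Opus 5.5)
Your proposal is correct and follows essentially the same route as the paper. Both arguments partition $\R$ into proximity, gap and exterior cells; derive faithfulness from cells of length at most $2\eps$; separate refinement midpoints from the closed halo union; use first-in, first-out replacement to preserve distinctness and capacity; check Borel measurability; and obtain idempotence because each refinement midpoint lies in the interior of its own cell. One harmless slip: the first right-exterior cell is $(s_{\omega-1}+\eps,s_{\omega-1}+3\eps)$, so its left endpoint equals $s_{\omega-1}+\eps$ rather than exceeding it, but its midpoint is still at distance $2\eps$ from $s_{\omega-1}$, so your separation conclusion is unaffected.
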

\begin{proof}
The proximity cells are intersections of intervals. Each contains an open
neighborhood of its representative because the representative values are
distinct. Where two halos overlap, the nearest-neighbor rule divides the
overlap at their midpoint, with a single specified owner of the tie.
Consequently, the proximity cells are disjoint intervals of positive
length, each contained in its representative's halo.

Their complement is the disjoint union of the nonempty open gaps $G_i$
and the two open exterior rays. If $x\in G_i$, then
$0<(x-a_i)/(2\eps)<d_i/(2\eps)$, so its prescribed index satisfies
$0\leq j<n_i$. The endpoints in \eqref{eq.gap-grid} satisfy
$0<u_{ij}-l_{ij}\leq2\eps$ for every such $j$. The floor rule and the
proximity priority assign every endpoint exactly once. The same argument
applies to the exterior rays. Thus all nonempty address cells are
intervals of positive length and form a countable partition of $\R$.

On proximity cells the error is at most $\eps$ by definition. On a
refinement cell the reconstruction is the midpoint of its two finite
endpoints, and the length is at most $2\eps$. The error bound follows,
including at any assigned endpoint.

For an interior refinement,
$s_i+\eps<q_{ij}<s_{i+1}-\eps$. Its distance from either neighboring
representative, and hence from every representative, is strictly greater
than $\eps$. An exterior midpoint is at distance at least $2\eps$ from
the nearest extreme representative. Thus the inserted value is new.
The inserted timestamp $T(A)+1$ exceeds every current timestamp. Removing
one pair and inserting this new pair therefore preserves both
distinctness conditions and the capacity. All rules are autonomous, and
the only possible change is the replacement in \eqref{eq.mrr-update}.
Hence MRR has properties (P1)--(P4), and its update depends only on the
current state and the transmitted symbol, as required
in~\eqref{eq.shared-state}.

Sorting distinct coordinates, comparison with a finite collection of
values, the floor and ceiling functions, and the finite minimum and
maximum operations are Borel. Each address class is described by Borel
inequalities. These observations establish the measurability,
including that of the update on unused addresses. Finally, a proximity
reconstruction is a representative and is fixed by $\pi_R$. A refinement
midpoint lies in the interior of its own refinement cell, so it has the
same address and reconstruction when quantized again. This proves
\eqref{eq.idempotence}.
\end{proof}

Define the number of refinements through time $t$ by
\begin{equation}\label{eq.refinement-count}
 N_t=\sum_{j=1}^t\1_{\{A_j\ne A_{j-1}\}}.
\end{equation}
The insertion clock obeys the exact identity
\begin{equation}\label{eq.clock-count}
 T(A_t)=T(A_0)+N_t.
\end{equation}
In particular, no restriction relating initial timestamps to the external
time index is needed.

\begin{proposition}[Separation and the insertion trace]\label{prop.separation}
Let $\delta_0$ be the minimum distance between distinct initial
representatives, with $\delta_0=+\infty$ if $\omega=1$, and put
$\delta=\min\{\delta_0,\eps\}$. Then every geometry along every input
path is $\delta$-separated. If $R_0$ is $\eps$-separated, this property
holds at all times. After $\omega$ refinements have occurred, every pair
of current representatives has distance strictly greater than $\eps$.

Let $q_k$ be the numeric reconstruction inserted at the $k$th refinement.
After $k\geq\omega$ refinements, the state consists exactly of the pairs
\begin{equation}\label{eq.fifo-trace}
 \{(q_j,T(A_0)+j):k-\omega+1\leq j\leq k\}.
\end{equation}
Before then, it consists of the inserted pairs and the $\omega-k$ initial
pairs with largest timestamps.
\end{proposition}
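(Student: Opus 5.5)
We argue by induction on the number of refinements, using the last assertion of Theorem~\ref{thm.realization}: every refinement inserts a value at distance strictly greater than $\eps$ from every value in the preceding state. Since the state changes only on refinement steps (proximity addresses leave it unchanged, and unused addresses never occur along a path), it suffices to track the state just after each refinement. For a geometry $R$ write $\mathrm{sep}(R)$ for its minimal pairwise distance.

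\textbf{Separation.} Suppose the current geometry $R$ is $\delta$-separated. The next geometry is $R'=(R\setminus\{r_*\})\cup\{q\}$. Pairs inside $R\setminus\{r_*\}$ keep their distances, so they remain at distance at least $\delta$. Every pair involving $q$ has distance $>\eps\geq\delta$ by Theorem~\ref{thm.realization}. Hence $R'$ is $\delta$-separated, and induction from $R_0$ (which is $\delta_0$-separated, so $\delta$-separated) proves the first claim. When $\delta_0\geq\eps$ we have $\delta=\eps$, which gives the second claim. The case $\omega=1$ is vacuous.

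\textbf{Insertion trace.} Let $q_k$ be inserted at the $k$th refinement. By \eqref{eq.mrr-update} and \eqref{eq.clock-count}, its timestamp is $T(A_0)+k$, because $T$ increases by exactly one at each refinement. We prove by induction on $k$ that after $k$ refinements the state consists of the pairs $(q_j,T(A_0)+j)$ for $\max(1,k-\omega+1)\le j\le k$, together with the $\max(\omega-k,0)$ initial pairs of largest timestamps. Every initial timestamp is at most $T(A_0)$, so it is strictly below every inserted timestamp. Hence, while initial pairs remain, the argmin in \eqref{eq.clock} removes the oldest remaining initial pair. Once none remain, it removes $(q_{k-\omega+1},\cdot)$. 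The timestamps are distinct by (P2), so the argmin is unique. This yields \eqref{eq.fifo-trace} for $k\ge\omega$ and the stated description before then.

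\textbf{Strict separation after $\omega$ refinements.} For $k\ge\omega$ the geometry is $\{q_{k-\omega+1},\dots,q_k\}$. Take $j<j'$ in this range. At the time $q_{j'}$ was inserted, $q_j$ was still stored, since it is removed only at refinement $j+\omega>k\ge j'$. By Theorem~\ref{thm.realization}, $|q_{j'}-q_j|>\eps$. The main care needed is exactly this bookkeeping: $q_j$ must belong to the preceding state when $q_{j'}$ is inserted, which follows from the FIFO trace, and so the trace should be established before this step.
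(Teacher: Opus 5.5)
Your proposal is correct and follows the paper's proof. Both arguments use induction over refinements, with the strict $\eps$-distance of each inserted value from the preceding state (Theorem~\ref{thm.realization}). Both use the timestamp ordering, under which eviction follows the initial timestamp order and then the insertion order. Both then note that, of two current values, the later one was inserted while the earlier one was still stored. Your explicit induction hypothesis with the ranges $\max(1,k-\omega+1)\le j\le k$ and $\max(\omega-k,0)$ initial pairs spells out bookkeeping that the paper compresses into a single sentence.
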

\begin{proof}
On a proximity event the state does not change. At a refinement, all
distances between retained values are unchanged, and the new value has
distance greater than $\eps$ from every retained value by
Theorem~\ref{thm.realization}. Induction proves the first property.

The first inserted timestamp exceeds all initial timestamps. At each
subsequent refinement the inserted timestamp exceeds all surviving ones.
Thus the eviction order is the initial timestamp order, followed by the
insertion order. This proves \eqref{eq.fifo-trace} and the assertion for
$k<\omega$. Once no initial pair remains, consider any two current
values in their insertion order. The later value was inserted while the
earlier one was present, so their distance is greater than $\eps$.
\end{proof}

The number of observations between refinements is unrestricted. In
particular, the time of the $\omega$th refinement can be infinite.
Equation~\eqref{eq.fifo-trace} concerns numeric insertions. Local addresses
are interpreted relative to the preceding geometry, so the last
$\omega$ addresses alone need not determine the current state.

\subsection{Synchronization and local stability}
\begin{proposition}[Synchronization and stable address paths]\label{prop.synchronization}
Starting from the shared state $A_0$, the decoder recovers all
reconstructions and states from the emitted addresses. At every finite
time, the set of reachable states is countable.

For a fixed input string $x_1,\ldots,x_n$, let $C_t$ be its address cell
at time $t$, and set
\begin{equation}\label{eq.stability-margin}
 m_t=\dist(x_t,\R\setminus C_t).
\end{equation}
Any string $y_1,\ldots,y_n$ satisfying $|y_t-x_t|<m_t$ for all $t$
produces the same addresses, reconstructions, and states through time $n$.
\end{proposition}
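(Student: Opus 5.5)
The proof has three parts: decoder recovery, countability of reachable states, and local stability of the address path. Each goes by induction on $t$ through the recursion~\eqref{eq.shared-state}.

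\textbf{Decoder recovery.} Suppose the decoder holds $A_t$; initially this is the shared $A_0$. On receiving $S_{t+1}$ it computes $\widehat x_{t+1}=d(A_t,S_{t+1})$ and $A_{t+1}=V(A_t,S_{t+1})$. Neither map involves the input. By Theorem~\ref{thm.realization}, MRR is a causal quantizer with a shared state, so these maps are well defined on every address. This includes unused addresses, which never occur on a genuine path.

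\textbf{Countability.} Let $\mathcal Q_t$ be the set of states reachable at time $t$. Then $\mathcal Q_0=\{A_0\}$ and
\[
\mathcal Q_{t+1}\subseteq\{V(A,s):A\in\mathcal Q_t,\ s\in\mathcal S\}.
\]
The alphabet $\mathcal S$ is countable, so $\mathcal Q_{t+1}$ is the image of a countable set $\mathcal Q_t\times\mathcal S$ and is countable. Induction gives the claim at every finite time.

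\textbf{Local stability.} We prove by induction on $t$ that the $y$-path has state $A_{t-1}$ at time $t-1$, the same as the $x$-path; at $t=1$ both start from $A_0$. At step $t$ the address cell $C_t=C_{A_{t-1}}(S_t)$ therefore depends only on the common state. The hypothesis $|y_t-x_t|<m_t=\dist(x_t,\R\setminus C_t)$ means $y_t$ is not in $\R\setminus C_t$, so $y_t\in C_t$. Hence $S(A_{t-1},y_t)=S_t$. The reconstruction $d(A_{t-1},S_t)$ and the next state $V(A_{t-1},S_t)$ are then identical for both strings, which closes the induction.

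The only delicate point is to read $m_t$ correctly. Cells are half-open intervals because of the boundary conventions (right ownership of internal cuts, proximity priority at halo endpoints). If $x_t$ sits on an owned closed endpoint of $C_t$, then $m_t=0$ and the hypothesis is vacuous, which is consistent with the claim. Otherwise the open ball of radius $m_t$ around $x_t$ lies in $C_t$ by the definition of distance to the complement. No measurability issues arise, since the argument is pathwise.
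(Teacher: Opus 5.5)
Your proof is correct and follows the paper's argument essentially step by step. Synchronization goes by induction through $d$ and $V$. Countability comes from the countable alphabet; your inductive image bound is equivalent to the paper's remark that $A_t$ is a function of a string in $\mathcal S^t$. Stability goes by induction using $y_t\in C_t$, and $m_t=0$ makes the hypothesis vacuous.

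One aside is inaccurate but harmless: not every cell is half-open. An unclipped proximity cell is the closed interval $[s_i-\eps,s_i+\eps]$, and a lone gap cell $(a_i,b_i)$ is open. Your argument only uses that $|y_t-x_t|<\dist(x_t,\R\setminus C_t)$ forces $y_t\in C_t$, which holds for any set.
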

\begin{proof}
Given $A_{t-1}$ and $s_t$, the decoder computes
$d_{\val(A_{t-1})}(s_t)$ and $V(A_{t-1},s_t)$. Induction gives
synchronization. A finite string over a countable alphabet has only
countably many possible values. Since $A_t$ is a function of that string
and $A_0$, its range is countable.

For the second assertion, the two systems have the same initial state.
Suppose their states agree before step $t$. The strict inequality in the
hypothesis implies $y_t\in C_t$, so the addresses agree at that step.
Decoder compatibility then gives the same reconstruction and next state.
Induction completes the proof. If $m_t=0$, the displayed strict
perturbation condition has no solution at that coordinate; positive
margins are the substantive case.
\end{proof}

A replacement also has a local effect on the ordered representative
gaps. Deleting a representative merges its two adjacent gaps, where an
extreme gap may be unbounded. Inserting the new value splits one gap of
the remaining geometry. Hence at most three gaps of the original
geometry are affected. This locality does not imply monotonic decrease
of the maximum gap: deletion can merge two large neighboring intervals.

\subsection{Bounded inputs and compact age-ranked dynamics}\label{sec.compact}
\begin{proposition}[Trapping]\label{prop.trapping}
Suppose $x_t\in[L_0,U_0]$ for every $t$. Put
\begin{equation}\label{eq.trapping-intervals}
 K_\eps=[L_0-\eps,U_0+\eps],\qquad
 J=\operatorname{conv}(R_0\cup K_\eps).
\end{equation}
All representatives remain in $J$. Every inserted value belongs to
$K_\eps$, and after $\omega$ refinements all representatives belong to
$K_\eps$.
\end{proposition}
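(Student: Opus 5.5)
The proof has two parts: first show that every inserted value lies in $K_\eps$, then deduce the remaining claims from the eviction order of Proposition~\ref{prop.separation}. The central claim is that whenever a refinement occurs at input $x_t\in[L_0,U_0]$, the inserted midpoint $q$ satisfies $|q-x_t|\leq\eps$. This holds because the quantizer is $\eps$-faithful (Theorem~\ref{thm.realization}), and on a refinement address the reconstruction equals the inserted value. Hence $q\in[x_t-\eps,x_t+\eps]\subset[L_0-\eps,U_0+\eps]=K_\eps$. This step needs no case analysis over interior and exterior refinements. It uses only the error bound~\eqref{eq.pathwise-error}, together with the fact from~\eqref{eq.mrr-update} that the inserted value is exactly the reconstruction.

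The claim that all representatives remain in $J$ then follows by induction on $t$. At time $0$, $\val(A_0)=R_0\subset J$ by definition of the convex hull. A proximity step leaves the state unchanged. A refinement step removes one value and inserts $q\in K_\eps\subset J$, while every retained value was already in $J$ by the induction hypothesis. So $\val(A_t)\subset J$ for all $t$.

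For the last assertion, I will invoke the insertion trace~\eqref{eq.fifo-trace}. After $k\geq\omega$ refinements, the state consists exactly of the values $q_{k-\omega+1},\dots,q_k$, all of which were inserted. By the first step each of them lies in $K_\eps$, so all representatives lie in $K_\eps$.

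I do not expect a genuine obstacle. The only point needing care is to confirm that the inserted value coincides with $\pi_R(x_t)$ in both the interior case ($q_{ij}$) and the exterior cases ($s_{\omega-1}+2(j+1)\eps$ or $s_0-2(j+1)\eps$). That identification is what lets the faithfulness bound be applied directly to the inserted value rather than only to the reconstruction.
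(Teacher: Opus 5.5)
Your proposal is correct and follows the paper's proof essentially step for step. The inserted value is the reconstruction, so $\eps$-faithfulness from Theorem~\ref{thm.realization} places it in $K_\eps$. Induction over refinements gives invariance of $J$, and the insertion trace of Proposition~\ref{prop.separation} shows that after $\omega$ refinements only inserted values remain.
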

\begin{proof}
At a refinement the inserted value equals the reconstruction of the
current input. Theorem~\ref{thm.realization} puts it within $\eps$ of
$[L_0,U_0]$, hence in $K_\eps$. Retained representatives are unchanged,
which proves invariance of $J$ by induction. The last assertion follows
from Proposition~\ref{prop.separation} once every initial pair has been
evicted.
\end{proof}

Order the values by increasing timestamp and write
$a=(r_1,\ldots,r_\omega)$, with $r_1$ oldest. Numerical timestamp values
are irrelevant to this ordering. The induced update is
\begin{equation}\label{eq.age-update}
 \Phi(a,x)=
 \begin{cases}
 a,&\text{on a proximity event},\\
 (r_2,\ldots,r_\omega,\pi_{\{r_1,\ldots,r_\omega\}}(x)),
       &\text{on a refinement event}.
 \end{cases}
\end{equation}
For $\omega=1$, the second line is the one-coordinate tuple containing
the new value. With $J$ and $\delta$ as above, define
\begin{equation}\label{eq.compact-state}
 \mathcal K_{J,\delta}
 =\{a\in J^\omega:|r_i-r_j|\geq\delta\text{ for }i\ne j\}.
\end{equation}
This is a compact state space invariant under all inputs in $[L_0,U_0]$.

\begin{theorem}[Feller age-ranked dynamics]\label{thm.feller}
Let the inputs be independent with a common atomless law $\nu$ supported
on $[L_0,U_0]$. On $\mathcal K_{J,\delta}$, the age-ranked process is a
time-homogeneous Markov chain with transition kernel
\begin{equation}\label{eq.feller-kernel}
 P(a,B)=\int\1_B(\Phi(a,x))\,\nu(dx).
\end{equation}
This kernel is Feller and has at least one invariant probability measure.
\end{theorem}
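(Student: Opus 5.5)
The argument has three parts: the Markov property, the Feller property, and existence of an invariant measure. The Markov property is immediate. Proposition~\ref{prop.trapping} and Proposition~\ref{prop.separation} show that $\Phi(a,x)\in\mathcal K_{J,\delta}$ whenever $a\in\mathcal K_{J,\delta}$ and $x\in[L_0,U_0]$. By \eqref{eq.age-update}, the next age-ranked state is $\Phi(a_t,x_{t+1})$ with $x_{t+1}$ independent of $(a_0,\dots,a_t)$. This gives a time-homogeneous chain with kernel \eqref{eq.feller-kernel}. Measurability of $\Phi$ follows from the Borel statements in Theorem~\ref{thm.realization}: $\pi_R$ depends Borel-measurably on the sorted values, and the proximity/refinement dichotomy is described by Borel inequalities.

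\textbf{Feller property.} Take $g$ bounded and continuous on $\mathcal K_{J,\delta}$ and let $a^n\to a$. We need
\[
\int g(\Phi(a^n,x))\,\nu(dx)\to\int g(\Phi(a,x))\,\nu(dx).
\]
By dominated convergence, it suffices to prove $\Phi(a^n,x)\to\Phi(a,x)$ for $\nu$-a.e.\ $x$. Fix $a$ and let $R=\val(a)$. The partition of $\R$ into address cells of $R$ has countably many boundary points. These are the halo endpoints $s_i\pm\eps$, the nearest-neighbour midpoints $(s_i+s_{i+1})/2$, the internal cuts $a_i+2j\eps$, the gap endpoints, and the exterior cuts $s_{\omega-1}+\eps+2j\eps$, $s_0-\eps-2j\eps$. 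Since $\nu$ is atomless, this exceptional set $E_a$ is $\nu$-null. For $x\notin E_a$ the address of $x$ is locally constant in the geometry. All cut points depend continuously on $(s_0,\dots,s_{\omega-1})$, and the sorted order is stable near $a$ because the coordinates are $\delta$-separated. The ceiling $n_i$ does not enter the choice of $j$ except through $u_{ij}$.

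There is one subtle sub-case. It is where $d_i$ crosses a multiple of $2\eps$, or where $d_i=0$ so that the gap $G_i$ appears or disappears. In that situation $u_{ij}=\min\{a_i+2(j+1)\eps,b_i\}$ is still continuous in the geometry. The appearance of a new gap only affects points near the degenerate point $s_i+\eps=s_{i+1}-\eps$, which is a single point of $E_a$. So for $x\notin E_a$ and $n$ large, $x$ receives the same address type and index under $a^n$ as under $a$. Its reconstruction, whether a stored value or the midpoint $q_{ij}$ or an exterior value, is a continuous function of the geometry. The oldest coordinate $r_1$ is identified by position in the tuple, not by value, so the shift in \eqref{eq.age-update} is continuous in $a$. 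Hence $\Phi(a^n,x)\to\Phi(a,x)$.

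\textbf{Invariant measure.} Since $\mathcal K_{J,\delta}$ is compact and the kernel is Feller, Krylov--Bogoliubov applies. Take Ces\`aro averages $\frac1n\sum_{k<n}\delta_{a_0}P^k$. They are tight by compactness, so some subsequence converges weakly. The Feller property gives $\mu P=\mu$ for any such limit $\mu$.

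I expect the main obstacle to be the careful enumeration of boundary configurations in the Feller step. The risk is a cut point that moves discontinuously as the geometry varies, especially the emergence of a gap when $s_{i+1}-s_i$ crosses $2\eps$. The check must confirm that every discontinuity of $a\mapsto\Phi(a,x)$ occurs only for $x$ in a countable set depending on $a$.
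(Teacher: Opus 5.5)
Your proposal is correct and follows essentially the same route as the paper. You get the Markov property from independence and one-step invariance of $\mathcal K_{J,\delta}$, then the Feller property from convergence of $\Phi(a^{(m)},x)\to\Phi(a,x)$ off an exceptional set of cut points, with the same treatment of a gap crossing a multiple of $2\eps$, followed by dominated convergence and Krylov--Bogoliubov on the compact space. The only difference is that your exceptional set is countable, since you keep all exterior cuts, whereas the paper's is finite because it restricts the cuts to $[L_0,U_0]$; atomlessness of $\nu$ makes either set null.
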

\begin{proof}
The update depends only on the current age-ranked values and the new
input. Independence of the latter gives the Markov property and
\eqref{eq.feller-kernel}. The measurability follows from
Theorem~\ref{thm.realization}. Invariance of the compact state space
follows from the separation and trapping propositions.

Fix $a\in\mathcal K_{J,\delta}$ and a sequence $a^{(m)}\to a$ in this
space. The spatial order of the coordinates is eventually the same as
at $a$, since distinct coordinates of $a$ are separated by $\delta>0$.
Write their spatially ordered values as $s_i^{(m)}\to s_i$.
We show that
\begin{equation}\label{eq.ae-continuity}
 \Phi(a^{(m)},x)\longrightarrow\Phi(a,x)
 \quad\text{for all }x\in[L_0,U_0]\setminus E_a,
\end{equation}
where $E_a$ is finite.

Include in $E_a$ every halo endpoint $s_i\pm\eps$, every midpoint
$(s_i+s_{i+1})/2$, all interior cuts $s_i+\eps+2j\eps$ lying in
$[s_i+\eps,s_{i+1}-\eps]$, and all exterior cuts in
$[L_0,U_0]$. There are finitely many such points, since both the state
interval and the input interval are bounded and $\eps>0$.

For $x\notin E_a$ the relevant inequalities selecting proximity,
an uncovered gap, or an exterior ray are strict and remain valid for
large $m$. On a proximity event the selected representative is stable,
and the update in \eqref{eq.age-update} is the identity. On a refinement
event the floor index is eventually constant. Its reconstruction is a
continuous function of the relevant representative coordinates: for an
interior cell this follows from the minimum in \eqref{eq.gap-grid}, and
for an exterior cell it follows from its explicit formula. If a gap
length crosses a multiple of $2\eps$, the additional cell has vanishing
length at the right endpoint; the anchored cuts before that endpoint
do not move to new indices. Thus a fixed $x\notin E_a$ is unaffected
by that change for large $m$. The age order fixes the deleted coordinate,
which proves \eqref{eq.ae-continuity}.

For a continuous function $g$ on $\mathcal K_{J,\delta}$, compactness
implies boundedness. Atomlessness gives $\nu(E_a)=0$. Dominated
convergence applied to \eqref{eq.ae-continuity} yields
$Pg(a^{(m)})\to Pg(a)$. Hence $P$ is Feller. The existence of an
invariant probability now follows from the compact-state
Krylov--Bogoliubov theorem; see \cite[Chapter~12]{MeynTweedie2009}.
\end{proof}

Invariant measures need not be unique. For example, with $\eps=1$,
$\omega=2$, and inputs uniform on $[0,1]$, each of the geometries
$\{0,1\}$ and $\{-1/4,3/4\}$ covers the input interval by proximity
halos. Either fixed age ordering is absorbing. Numerical timestamps are
unnecessary for the compact formulation: when infinitely many refinements
occur, their maximum increases without bound by \eqref{eq.clock-count}.
\begin{remark}[Finite reconstruction alphabets]
\label{rem.finite-reconstruction}
Suppose every refinement reconstruction belongs to a finite set $G$.
Put $H=G\cup R_0$ and $m=|H|$. The age-ranked state then takes values
among the ordered $\omega$-tuples of distinct elements of $H$, of
which there are
\[
N=\frac{m!}{(m-\omega)!}.
\]
For i.i.d.\ inputs, this gives a finite-state time-homogeneous Markov
chain. For a deterministic input of period $p$, augmenting the state
with the input phase gives a deterministic map on at most $pN$
states. Hence the age-ranked states, reconstructions, and addresses
are eventually periodic, with preperiod and period each at most $pN$.
For example, for inputs in $[L_0,U_0]$ with $R_0\subset\Lambda_{\theta,\varepsilon}$, the finite-alphabet hypothesis holds:
by Propositions~\ref{prop.trapping} and~\ref{prop.lattice}, every inserted value
lies in the finite set $\Lambda_{\theta,\varepsilon}\cap[L_0-\varepsilon,U_0+\varepsilon]$.
\end{remark}
\section{Geometry of the adaptive partition}\label{sec.overlap}
\subsection{Minimality with prescribed proximity regions}
For a representative gap of width $W=s_{i+1}-s_i>0$, the uncovered
length is $(W-2\eps)_+=\max\{W-2\eps,0\}$. The proximity regions are
prescribed by Section~\ref{sec.mrr}; minimality below is conditional on
retaining these regions and their reconstruction values.

\begin{proposition}[Constrained local cell count]\label{prop.local-count}
On $[s_i,s_{i+1}]$, the minimum number of additional cells of diameter at
most $2\eps$ needed to cover the part outside the prescribed proximity
regions is
\begin{equation}\label{eq.additional-count}
 n_i=\ceil{\frac{(W-2\eps)_+}{2\eps}}.
\end{equation}
MRR attains this minimum. Counting the two nonempty proximity pieces,
the number of address cells intersecting this representative interval is
\begin{equation}\label{eq.local-count}
 N_i=2+n_i=1+\ceil{\frac{W}{2\eps}}.
\end{equation}
\end{proposition}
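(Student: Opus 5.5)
The proof splits into a lower bound, an attainment statement, and a count. We work on one representative interval $[s_i,s_{i+1}]$ with $W=s_{i+1}-s_i>0$. By construction, the part of $[s_i,s_{i+1}]$ lying in the prescribed proximity regions is $[s_i,s_i+\eps]\cup[s_{i+1}-\eps,s_{i+1}]$ when $W>2\eps$. When $W\le 2\eps$ it is all of $[s_i,s_{i+1}]$, because the halos overlap or touch and the nearest-neighbor rule splits the overlap. In that case the uncovered part is empty, and $n_i=0=\ceil{0}$. From now on assume $W>2\eps$. The uncovered set is the open gap $G_i=(a_i,b_i)$ of length $d_i=W-2\eps$, in the notation of~\eqref{eq.gap-parameters}.

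\emph{Lower bound.} Suppose $G_i$ is covered by $k$ sets, each of diameter at most $2\eps$. Each set lies in a closed interval of length at most $2\eps$, so the $k$ sets cover an open interval of Lebesgue measure $d_i$ using total length at most $2\eps k$. By subadditivity of Lebesgue measure, $d_i\le 2\eps k$, hence $k\ge \ceil{d_i/(2\eps)}$. This is already~\eqref{eq.additional-count}, and it needs no disjointness or interval assumption on the cells.

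\emph{Attainment.} By~\eqref{eq.gap-grid} and Theorem~\ref{thm.realization}, MRR partitions $G_i$ into exactly $n_i=\ceil{d_i/(2\eps)}$ cells. These are the intervals between consecutive cuts $l_{ij}$ and $u_{ij}$, each of length at most $2\eps$. All of them are nonempty, since $l_{i,n_i-1}=a_i+2(n_i-1)\eps<b_i$ by the definition of the ceiling.

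\emph{Count.} Besides these $n_i$ gap cells, the interval $[s_i,s_{i+1}]$ meets exactly two address cells, namely $P_i(R)$ and $P_{i+1}(R)$. Both do intersect it, since $s_i\in P_i(R)$ and $s_{i+1}\in P_{i+1}(R)$. No other proximity cell meets it: every other representative lies outside $[s_i,s_{i+1}]$, and its proximity cell, a subset of its nearest-neighbor region, cannot cross $s_i$ or $s_{i+1}$. The one possible exception is a tie point, and ties are resolved to the smaller value, which keeps the endpoints $s_i,s_{i+1}$ in their own cells. No exterior cell meets the interval either. Hence $N_i=2+n_i$.

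It remains to verify the closed form. For $W>2\eps$, we have $\ceil{(W-2\eps)/(2\eps)}=\ceil{W/(2\eps)}-1$, which gives $N_i=1+\ceil{W/(2\eps)}$. For $W\le2\eps$, we have $n_i=0$, and $\ceil{W/(2\eps)}=1$ since $W>0$, so the two expressions agree.

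The only delicate point is the boundary bookkeeping in the count. We must confirm that the two proximity pieces are genuinely distinct and nonempty inside $[s_i,s_{i+1}]$, and that no third cell intrudes at a tie endpoint. Distinctness follows from the positive-length argument in the proof of Theorem~\ref{thm.realization}, so we plan to invoke it there.
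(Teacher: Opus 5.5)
Your proof is correct and follows the paper's argument. Both use the same case split at $W=2\eps$, the elementary length bound for covering the open gap of length $W-2\eps$, attainment by the cells of \eqref{eq.gap-grid}, and $\ceil{z-1}=\ceil z-1$ for the closed form. You also supply bookkeeping the paper leaves implicit: nonemptiness of the last gap cell, exclusion of other proximity and exterior cells, and the check of the closed form when $W\le2\eps$. Your tie-point caveat is unnecessary, since $s_i$ and $s_{i+1}$ are each at distance zero from their own representative and no tie can occur there.
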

\begin{proof}
If $W\leq2\eps$, the two nearest-representative proximity pieces cover
the interval, so no additional cell is required. If $W>2\eps$, their
complement in the interval is the open gap of length $W-2\eps$.
The elementary one-dimensional covering bound underlying
epsilon-capacity \cite{KolmogorovTikhomirov1959} requires at least the
number of cells in \eqref{eq.additional-count}. The cells in \eqref{eq.gap-grid}
give exactly that many intervals and cover the gap with the stipulated
endpoint convention. Adding the two proximity pieces and using
$\ceil{z-1}=\ceil z-1$ for $z>1$ gives \eqref{eq.local-count}.
\end{proof}

Without the prescribed proximity pieces, the covering number of a
positive-length interval of width $W$ by intervals of diameter at most
$2\eps$ is $\ceil{W/(2\eps)}$ \cite{KolmogorovTikhomirov1959}.
The constraint in Proposition~\ref{prop.local-count} is therefore
essential. The remainder cell can have any length in $(0,2\eps]$.

\subsection{Four-cell overlap}
Let $\mathcal P_R$ be the MRR address partition. For an interval $J$,
count cells by nonempty intersection, including any assigned endpoints.
Define
\begin{equation}\label{eq.overlap-number}
 \kappa_\eps(R)=
 \sup_{J:\,\diam J\leq2\eps}
 \#\{C\in\mathcal P_R:C\cap J\ne\varnothing\}.
\end{equation}

\begin{lemma}[Cell widths and short cells]\label{lem.width-structure}
For every geometry, at most
\begin{equation}
m_\omega:=\left\lfloor\frac{3\omega}{2}\right\rfloor-1 \label{eq.momega}    
\end{equation}
address cells have length strictly less than $2\varepsilon$. This bound is sharp, even among $\varepsilon$-separated geometries. If the geometry is $\eps$-separated, every
proximity cell has length at least $\eps$, and every three consecutive
address cells have total length strictly greater than $2\eps$.
\end{lemma}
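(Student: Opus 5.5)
The proof reduces to bookkeeping on the ordered representatives $s_0<\cdots<s_{\omega-1}$ and the $\omega-1$ gaps $W_i=s_{i+1}-s_i$. First I would write down the length of every cell.

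\emph{Exterior cells.} They have length exactly $2\eps$ by construction.

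\emph{Gap cells.} A gap with $W_i>2\eps$ is cut by \eqref{eq.gap-grid} into cells of length $2\eps$, except possibly the last (rightmost) one. So each such gap contributes at most one short cell.

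\emph{Proximity cells.} By the nearest-neighbor rule, $P_i(R)$ extends to the left of $s_i$ by $\min\{\eps,W_{i-1}/2\}$ and to the right by $\min\{\eps,W_i/2\}$. At an extreme representative, the outer side contributes the full $\eps$. Hence $P_i$ is short if and only if at least one of its adjacent gaps satisfies $W<2\eps$.

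With these formulas the first claim is a counting argument. Let $k$ be the number of gaps with $W_i<2\eps$ (call them small). Every short proximity cell belongs to a representative incident to a small gap, so there are at most $\min\{\omega,2k\}$ of them. Every short gap cell is the remainder of a gap with $W_i>2\eps$, so there are at most $\omega-1-k$ of them. Gaps with $W_i=2\eps$ contribute nothing. The total is at most $\omega-1-k+\min\{\omega,2k\}$.
\begin{itemize}
\item If $2k\le\omega$, the total is $\omega-1+k\le\omega-1+\floor{\omega/2}$.
\item If $2k>\omega$, the total is $2\omega-1-k<\tfrac32\omega-1$.
\end{itemize}
In both cases the integer bound is $\floor{3\omega/2}-1=m_\omega$. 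For $\omega=1$ both sides are $0$.

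For sharpness I would use an $\eps$-separated configuration built from pairs of representatives at distance exactly $\eps$. Consecutive pairs are separated by a gap of width $3\eps$, whose single gap cell has length $\eps<2\eps$.
\begin{itemize}
\item For even $\omega$: all $\omega$ proximity cells are short, and the $\omega/2-1$ inter-pair gap cells are short. The total is $3\omega/2-1$.
\item For odd $\omega$: append one isolated representative at distance $3\eps$ to the right of the last pair. Its own proximity cell has length $2\eps$, and there are $\omega-1+(\omega-1)/2=m_\omega$ short cells.
\end{itemize}
I would check these lengths directly from the formulas above.

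For the $\eps$-separated claims, every half-gap is at least $\eps/2$, so each proximity cell has length at least $\eps/2+\eps/2=\eps$. For three consecutive cells, all of positive length by Theorem~\ref{thm.realization}, I would split into cases.
\begin{itemize}
\item If one of the three cells has length $2\eps$, the sum is strictly greater than $2\eps$, because the other two cells have positive length.
\item If two of them are proximity cells, their lengths are each at least $\eps$, and the third cell adds positive length.
\item Otherwise at most one cell is a proximity cell, and the others are short gap cells, i.e.\ remainders. A remainder lies at the right end of its gap and is followed immediately by a proximity cell. Also, exterior cells are never short. So two short non-proximity cells can only occur in the pattern remainder, $P_{i+1}$, remainder. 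That pattern requires $W_i>2\eps$ and $W_{i+1}>2\eps$, and the formula then gives $P_{i+1}$ length $2\eps$, which returns us to the first case.
\end{itemize}

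The main obstacle is making this last case exhaustive. This requires carefully listing which cell can follow which in the partition order, including the exterior rays and the case $n_i=1$ where the remainder is the only gap cell. I would also verify the boundary ownership conventions, so that the closed-halo endpoints do not create extra cells.
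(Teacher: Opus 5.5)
Your proposal is correct and follows essentially the same route as the paper. You bound short proximity cells by $\min\{2k,\omega\}$ via the small gaps, allow at most one short remainder per remaining gap, and reduce the three-cell claim to the remainder--proximity--remainder pattern, whose middle cell has full length $2\eps$. The only cosmetic difference is your sharpness example: you use pair spacing exactly $\eps$ (admissible, since $\eps$-separation means distances $\geq\eps$), where the paper alternates gaps $3\eps/2$ and $3\eps$.
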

\begin{proof}
Let $c$ be the number of consecutive representative gaps smaller
than $2\varepsilon$. A proximity cell is short only if its
representative is incident to one of these gaps. Hence there are
at most $\min\{2c,\omega\}$ short proximity cells.

The $c$ gaps contain no refinement cells. Each of the remaining
$\omega-1-c$ gaps contributes at most one short refinement cell,
and all exterior cells have full length. Consequently,
\[
N_{\mathrm{short}}
\le \min\{2c,\omega\}+\omega-1-c
\le \left\lfloor\frac{3\omega}{2}\right\rfloor-1.
\]
When $\omega\ge2$, we choose consecutive representative
gaps alternately equal to $3\varepsilon/2$ and $3\varepsilon$,
starting with $3\varepsilon/2$. The shorter gaps give
$2\lfloor\omega/2\rfloor$ short proximity cells, and the longer
gaps give $\lfloor(\omega-1)/2\rfloor$ short refinement cells.
Their sum is $m_\omega$. This geometry is
$\varepsilon$-separated. For $\omega=1$, all cells have full
length and $m_1=0$.

For an interior representative, its proximity interval extends to each
side by
$\min\{\eps,(s_i-s_{i-1})/2\}$ and
$\min\{\eps,(s_{i+1}-s_i)/2\}$, respectively. Under separation each
extent is at least $\eps/2$. An extreme proximity interval has extent
$\eps$ on its unbounded side and at least $\eps/2$ on its other side.
For $\omega=1$ its length is $2\eps$. Endpoint assignments do not
change these lengths.

Consider three consecutive cells. If one is a full-width refinement
cell, the other two have positive length, and the sum exceeds $2\eps$.
If at least two are proximity cells, their combined length is at least
$2\eps$, and the third has positive length. The only remaining
possibility is two short refinement cells and a single proximity cell.
Two short refinement cells cannot be adjacent: each gap contains at most
one, and distinct gaps are separated by a proximity cell. Hence the
pattern must be remainder--proximity--remainder. The middle
representative then has an uncovered gap on both sides, so neither side
of its halo is clipped by a neighboring halo. Its proximity length is
$2\eps$, and the two remainders have positive length. This proves the
last assertion.
\end{proof}

\begin{theorem}[Sharp four-cell overlap]\label{thm.four-overlap}
If $R$ is $\eps$-separated, then $\kappa_\eps(R)\leq4$.
The constant four cannot be decreased uniformly over separated
geometries and capacities.
\end{theorem}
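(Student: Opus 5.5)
The upper bound will follow directly from the last assertion of Lemma~\ref{lem.width-structure}. Fix an interval $J$ with $\diam J\leq2\eps$. The address cells of $\mathcal P_R$ are intervals forming a partition of $\R$, so the cells meeting $J$ are consecutive in the left-to-right order; call them $C_1,\dots,C_k$. If $k\geq5$, then $C_2,C_3,C_4$ are entirely contained in $J$. Indeed, $J$ is an interval meeting $C_1$ and $C_5$, so it contains everything strictly between them. Since these three consecutive cells are disjoint, their total length is at most $\diam J\leq2\eps$. This contradicts the lemma, which says that any three consecutive cells of an $\eps$-separated geometry have total length strictly greater than $2\eps$. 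Hence $k\leq4$, and taking the supremum over $J$ gives $\kappa_\eps(R)\leq4$. Some care is needed with endpoint assignment: a middle cell might share a boundary point with $J$. Only inclusion of $C_2\cup C_3\cup C_4$ in the convex hull of $J$ is needed, and Lebesgue measure ignores endpoints, so this causes no difficulty.

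For sharpness, I will exhibit an $\eps$-separated geometry and an interval of diameter $2\eps$ meeting four cells. The natural candidate is the tight pattern from the lemma's proof, namely short cell, two proximity halves, short cell, read as follows. Take $\omega=2$ with $R=\{0,\eps\}$, which is $\eps$-separated. The proximity cells of $0$ and $\eps$ are then $[-\eps,\eps/2]$ and $(\eps/2,2\eps]$, of lengths $3\eps/2$ each, and the exterior cells have length $2\eps$. This gives only about three cells per window, so I need short remainder cells instead.

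Use instead a remainder, proximity, remainder pattern, or two clipped proximity cells flanked by short remainders. Take $\omega=4$ and $R=\{0,\,2\eps+\eta,\,3\eps+\eta,\,5\eps+2\eta\}$ with small $\eta>0$. The gap between $0$ and $2\eps+\eta$ has uncovered part $(\eps,\eps+\eta)$, a single short refinement cell of length $\eta$. The pair $2\eps+\eta,\,3\eps+\eta$ is exactly $\eps$ apart, so its proximity cells are $[\eps+\eta,\,5\eps/2+\eta]$ and $(5\eps/2+\eta,\,4\eps+\eta]$. The last gap gives the remainder $(4\eps+\eta,4\eps+2\eta)$. The window $J=[\eps+\eta/2,\,3\eps+\eta/2]$ meets the first remainder, the proximity cell of $2\eps+\eta$, and the proximity cell of $3\eps+\eta$. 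That is only three cells, since the second remainder begins at $4\eps+\eta$.

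The main obstacle is therefore designing the sharpness example. The counting above suggests that four requires two cells inside $J$ of total length below $2\eps$ together with two partial end cells, for instance two short cells adjacent to one clipped proximity cell. Taking the first approach, I would search within the family used for sharpness in Lemma~\ref{lem.width-structure}, with gaps alternating $3\eps/2$ and $3\eps$, and its variants with slack $\eta\downarrow0$. There I would look for a window covering a clipped proximity cell of length near $\eps$ and a short remainder. Such a configuration yields a string of the form proximity piece (partial), proximity cell of length $\eps+$, remainder of length $\eta$, proximity piece (partial), with interior total below $2\eps$. A concrete instance is $R=\{0,\eps,2\eps+\eta\}$ with $J=[\eps/2-\eta', 5\eps/2-\eta']$. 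I would verify it by listing the cells explicitly.
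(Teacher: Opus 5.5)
Your upper bound is correct and is the paper's argument. If $J$ met five cells, the three middle ones would lie between a point of the first and a point of the fifth. Their total length would then be at most $\diam J\le2\eps$, contradicting the last assertion of Lemma~\ref{lem.width-structure}.

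\textbf{The sharpness half has a genuine gap: the concrete instance you end with fails.} Take $R=\{0,\eps,2\eps+\eta\}$ with small $\eta>0$. The gap between $\eps$ and $2\eps+\eta$ is $\eps+\eta<2\eps$, so the two halos overlap and there is no remainder cell at all. The cells near your window are $[-\eps,\eps/2]$, $(\eps/2,3\eps/2+\eta/2]$ and $(3\eps/2+\eta/2,3\eps+\eta]$, flanked by exterior cells of length $2\eps$. Every two adjacent cells have total length above $2\eps$, so no interval of diameter at most $2\eps$ meets four of them. Your $J=[\eps/2-\eta',5\eps/2-\eta']$ meets exactly three.

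The heuristic behind the instance is also off. Under $\eps$-separation, a proximity cell adjacent to a remainder is unclipped on that side, so its length is at least $3\eps/2$, never ``$\eps+$''. Likewise, a proximity cell between two remainders has length $2\eps$, by the lemma. Neither point matters, because a remainder of length $\eta$ plus a one-side-clipped proximity cell of length $3\eps/2$ already totals $3\eps/2+\eta<2\eps$.

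In fact the $\omega=4$ geometry $\{0,2\eps+\eta,3\eps+\eta,5\eps+2\eta\}$ that you discarded works; you only chose the window badly. For $0<\eta<\eps/2$, the window $[\eps,3\eps]$ meets four cells:
\begin{itemize}
\item the proximity cell $[-\eps,\eps]$ of $0$, at the point $\eps$;
\item the remainder $(\eps,\eps+\eta)$;
\item the proximity cell $[\eps+\eta,5\eps/2+\eta]$;
\item the proximity cell $(5\eps/2+\eta,4\eps+\eta]$, at the point $3\eps$.
\end{itemize}
Dropping the last representative gives the paper's witness: $\eps=1$, $R=\{0,2.1,3.1\}$, with the points $0.99$, $1.05$, $2$, $2.61$.

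Alternatively, replace $2\eps+\eta$ by $3\eps+\eta$ in your last instance. This gives the mirror image, with cells $[-\eps,\eps/2]$, $(\eps/2,2\eps]$, $(2\eps,2\eps+\eta)$ and $[2\eps+\eta,4\eps+\eta]$. Your window $[\eps/2-\eta',5\eps/2-\eta']$ then meets all four whenever $\eta+\eta'\le\eps/2$.
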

\begin{proof}
If an interval $J$ meets five cells, it meets five consecutive cells
between its first and last intersections. The three middle cells lie
between a point of the first cell and a point of the fifth. Consequently
their total length is at most $\diam J\leq2\eps$. This contradicts
Lemma~\ref{lem.width-structure}. The argument includes endpoint-only
intersections because all cells are intervals of positive length.

For sharpness take $\eps=1$ and $R=\{0,2.1,3.1\}$. The four points
$0.99$, $1.05$, $2$, and $2.61$ lie respectively in the proximity cell
of $0$, the remainder gap $(1,1.1)$, the proximity cell of $2.1$, and
the proximity cell of $3.1$. Their interval hull has length $1.62<2$.
Scaling the configuration gives the same conclusion for every
$\eps>0$; see Figure~\ref{fig.overlap}.
\end{proof}

\begin{figure}[tb]
\centering
\begin{tikzpicture}[x=2.35cm,y=1cm,>=Stealth,font=\small]
\fill[blue!12] (-0.12,0) rectangle (1,0.55);
\fill[orange!40] (1,0) rectangle (1.1,0.55);
\fill[blue!20] (1.1,0) rectangle (2.6,0.55);
\fill[blue!12] (2.6,0) rectangle (3.35,0.55);
\draw[->] (-0.18,0)--(3.48,0);
\foreach \x in {1,1.1,2.6} {\draw[gray] (\x,0)--(\x,0.58);}
\foreach \x/\lab in {0/0,2.1/2.1,3.1/3.1} {
 \fill (\x,0) circle (1.8pt);
 \node[below=5pt] at (\x,0) {$\lab$};
}
\node at (0.43,0.28) {$P_0$};
\node at (1.87,0.28) {$P_1$};
\node at (3.03,0.28) {$P_2$};
\draw[->] (1.05,1.12)--(1.05,0.57);
\node[above] at (1.05,1.12) {remainder};
\foreach \x in {0.99,1.05,2,2.61} {\fill[red!75!black] (\x,0.76) circle (2pt);}
\draw[red!75!black] (0.99,0.76)--(2.61,0.76);
\node[above] at (2.18,0.8) {$\operatorname{diam}=1.62$};
\node[below=5pt] at (1.05,-0.42) {$1\quad1.1$};
\node[below=5pt] at (2.6,0) {$2.6$};
\end{tikzpicture}
\caption{A separated geometry with a short remainder cell. With
$\eps=1$ and $R=\{0,2.1,3.1\}$, the four marked inputs have different
addresses and lie in an interval of length $1.62$. The remainder has
length $0.1$; the neighboring proximity cells preserve the four-cell
overlap bound. Endpoint ownership is specified in Section~\ref{sec.mrr}.}
\label{fig.overlap}
\end{figure}

\subsection{Scalar $\varepsilon$-entropy at equal conditioning}\label{sec:overlap}
For a probability measure $\mu$ on $\R$, define the scalar $\varepsilon$-entropy:
\begin{equation}\label{eq.scalar-floor}
 F_\eps(\mu)=\inf_Q H_\mu(Q(X)),
\end{equation}
where $Q$ ranges over countable Borel partitions equipped with a
reconstruction value for each cell such that the error is at most
$\eps$ everywhere. Equivalently, each cell must be contained in a closed
interval of radius $\eps$. The infimum is allowed to be infinite.

If every cell of a countable partition $\mathcal Q$ intersects at most
$k$ cells of a countable partition $\mathcal P$, then, conditional on
the $\mathcal Q$-cell, at most $k$ labels of $\mathcal P$ are possible.
The entropy chain rule therefore gives
\begin{equation}\label{eq.partition-overlap}
 H(\mathcal P)\leq H(\mathcal P,\mathcal Q)
 =H(\mathcal Q)+H(\mathcal P\mid\mathcal Q)
 \leq H(\mathcal Q)+\log_2 k.
\end{equation}
The same comparison appears, with its companion estimate for
translated partitions, among the basic properties of entropy at a
fixed scale used in fractal geometry~\cite[Lemma~3.2]{Hochman2014}.

\begin{theorem}[Sharp two-bit scalar comparison]\label{thm.two-bit}
For every $\eps$-separated geometry $R$ and every probability law $\mu$,
\begin{equation}\label{eq.two-bit}
 F_\eps(\mu)\leq H_\mu(S_R(X))\leq F_\eps(\mu)+2.
\end{equation}
The upper bound is sharp already for a three-representative geometry
and a finitely supported law with $F_\eps(\mu)=0$.

More generally, let $B$ be countably valued, let $R(B)$ be
$\eps$-separated for every value of $B$ having positive probability,
and let $\mu_b$ be the conditional law of $X$ given $B=b$. Then
\begin{equation}\label{eq.conditional-two-bit}
 \sum_b\Prob(B=b)F_\eps(\mu_b)
 \leq H(S_{R(B)}(X)\mid B)
 \leq\sum_b\Prob(B=b)F_\eps(\mu_b)+2.
\end{equation}
\end{theorem}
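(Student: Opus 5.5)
The lower bound in \eqref{eq.two-bit} follows from the definition. By Theorem~\ref{thm.realization}, the MRR address partition $\mathcal P_R$ is a countable Borel partition of $\R$ into intervals. Each cell carries a reconstruction within $\eps$ of all of its points, so $S_R$ is admissible in \eqref{eq.scalar-floor} and $F_\eps(\mu)\leq H_\mu(S_R(X))$.

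For the upper bound, take any admissible partition $\mathcal Q$ with $H_\mu(\mathcal Q)<\infty$; if no such partition exists, $F_\eps(\mu)=\infty$ and there is nothing to prove. Each cell of $\mathcal Q$ lies in a closed interval of radius $\eps$, and hence in an interval $J$ with $\diam J\leq2\eps$. By Theorem~\ref{thm.four-overlap}, the $\eps$-separation of $R$ implies that $J$, and therefore the cell, meets at most four cells of $\mathcal P_R$. The overlap inequality \eqref{eq.partition-overlap} with $k=4$ then gives $H_\mu(S_R(X))\leq H_\mu(\mathcal Q)+2$. Taking the infimum over $\mathcal Q$ yields \eqref{eq.two-bit}.

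The conditional version runs the same argument for each value $b$ with $\Prob(B=b)>0$. Conditioning on $B=b$ fixes the geometry $R(b)$, which is $\eps$-separated, so the scalar result applied to $\mu_b$ gives
\[
F_\eps(\mu_b)\leq H(S_{R(b)}(X)\mid B=b)\leq F_\eps(\mu_b)+2 .
\]
Averaging with weights $\Prob(B=b)$ gives \eqref{eq.conditional-two-bit}. Infinite terms need a little care, but both sides are sums of nonnegative terms, so the inequalities persist in $[0,\infty]$.

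Sharpness is the main obstacle, since we must exhibit the extremal example explicitly. We use the configuration from the proof of Theorem~\ref{thm.four-overlap}: $\eps=1$, $R=\{0,2.1,3.1\}$, and $\mu$ uniform on the four points $0.99,1.05,2,2.61$. These points lie in an interval of length $1.62\leq2$. The single cell $[0.99,2.61]$ has center $1.8$ within radius $1$ of every point, so it is admissible and $F_\eps(\mu)=0$. The four points lie in four distinct address cells, so $H_\mu(S_R(X))=\log_2 4=2$, and the upper bound holds with equality. The points to verify are the cell memberships, namely the remainder gap $(1,1.1)$ and the halo boundaries, which must respect the tie and endpoint conventions of Section~\ref{sec.mrr}, and the $\eps$-separation of $R$, which holds because the minimum gap is $1$.
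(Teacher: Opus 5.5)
Your proposal is correct and follows the paper's proof essentially step for step. The steps match: admissibility of the MRR partition for the lower bound, Theorem~\ref{thm.four-overlap} with \eqref{eq.partition-overlap} at $k=4$ for the upper bound, conditioning on each $b$ and averaging for \eqref{eq.conditional-two-bit}, and the four-point example on $R=\{0,2.1,3.1\}$ for sharpness. One detail should be stated explicitly, as the paper does: before concluding $F_\eps(\mu)=0$, you must complete the cell $[0.99,2.61]$ to a partition of $\R$ using zero-probability faithful cells.
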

\begin{proof}
The lower bound holds because the MRR partition is admissible in
\eqref{eq.scalar-floor}. Every cell of an admissible competitor is
contained in an interval of diameter $2\eps$. By
Theorem~\ref{thm.four-overlap} it intersects at most four MRR cells.
Inequality~\eqref{eq.partition-overlap} gives
$H_\mu(S_R(X))\leq H_\mu(Q(X))+2$ for every competitor. Taking the
infimum proves \eqref{eq.two-bit}, with the extended-real convention
when $F_\eps(\mu)=\infty$.

For sharpness use the geometry in Theorem~\ref{thm.four-overlap} and
assign mass $1/4$ to each of its four displayed points. Their four
addresses have entropy two bits. Their common interval hull has radius
$0.81<1$, so one faithful cell contains all the mass; the rest of the
line can be partitioned into zero-probability faithful cells. Therefore
the scalar $\varepsilon$-entropy is zero. Finally, apply \eqref{eq.two-bit} separately
under each conditional law $\mu_b$ and sum with the probabilities of
$B=b$. Countability makes these sums unambiguous.
\end{proof}

With deterministic initialization, Proposition~\ref{prop.synchronization}
allows $B=A_{t-1}$ or $B=(S_1,\ldots,S_{t-1})$ at every finite time.
The comparison holds throughout a trajectory initialized with separated
representatives. For arbitrary initialization it applies on the states
reached after at least $\omega$ refinements. These are statements at a
specified conditioning. No identification of the representative state
with the full source past is required or implied.

\section{Entropy rates}\label{sec.process}
At a fixed state, geometric complexity is the number and arrangement of
the reconstruction cells. By Proposition~\ref{prop.local-count}, the representative hull
$[s_0,s_{\omega-1}]$ meets exactly
$\omega+\sum_{i=0}^{\omega-2}\lceil(\Delta_i-2\varepsilon)_+/(2\varepsilon)\rceil$
cells, where $\Delta_i=s_{i+1}-s_i$. For a source law, the probabilities of these cells
determine symbolic entropy. This section relates their changing geometry
to block entropy and to the process $\eps$-entropy. The decisive geometric facts
are that address preimages are rectangles and that at most
$m_\omega$ scalar cells per state have less than full width.

\subsection{The block benchmark}\label{sec.block-floor}
For an $\R^n$-valued random vector $X_{1:n}$, define the \emph{block floor}
\begin{equation}\label{eq.block-floor}
 C_{n,\eps}(X)=\inf_Q H(Q(X_{1:n})),
\end{equation}
where $Q$ is a countable Borel partition of $\R^n$ and each of its cells
has a reconstruction vector $y$ satisfying
$\|x-y\|_\infty\leq\eps$ throughout that cell. We use closed error
balls. Null-probability parts of the domain may be covered by additional
cells without changing entropy.

For a stationary process with $C_{1,\eps}<\infty$, concatenating
admissible block partitions makes $C_{n,\eps}$ subadditive. Consequently
the process $\varepsilon$-entropy is
\begin{equation}\label{eq.process-floor}
 R_\eps(X)=\lim_{n\to\infty}\frac{C_{n,\eps}(X)}n
          =\inf_{n\geq1}\frac{C_{n,\eps}(X)}n.
\end{equation}
This is the entropy-of-partitions formulation of the bounded-error
process $\eps$-entropy \cite{Posner1967,posner1971,Bucklew1991}.    Countable
variable-length block coding approaches this entropy per coordinate by
the noiseless coding theorem \cite{CoverThomas2006}. The convention
here enforces the error bound on every reconstructed block; it does not
require a finite codebook covering an unbounded input space.

If $X_{1:n}$ has a density $f_n$ and finite differential entropy, the
standard maximum-entropy bound for a set contained in a cube of volume
$(2\eps)^n$ gives
\begin{equation}\label{eq.shannon-lower}
 C_{n,\eps}(X)\geq h(X_{1:n})-n\log_2(2\eps).
\end{equation}
See \cite[Corollary~1]{Bucklew1991} and the conditional-entropy
formulation in \cite{CoverThomas2006}. We call the right-hand side the
Shannon lower bound.
A conditional form of this bound, with rate measured by directed mutual
information, is the main tool in the rate--cost analysis of feedback control in~\cite{KostinaHassibi2019}.

For MRR initialized at a fixed shared $A_0$, write $S_t$ for the address
at time $t$ and define
\begin{equation}\label{eq.mrr-rate}
 R_{\mathrm{MRR}}(X;A_0)=
 \limsup_{n\to\infty}\frac1n H(S_{1:n}).
\end{equation}
Every address string specifies a faithful reconstruction vector, so
$C_{n,\eps}(X)\leq H(S_{1:n})$ and
$R_\eps(X)\leq R_{\mathrm{MRR}}(X;A_0)$.
The chain rule uses $H(S_t\mid S_{1:t-1})$. Replacing the full address
history by $A_{t-1}$ generally increases this conditional entropy.

\paragraph{Causal benchmark.}
A \emph{causal code} with tolerance $\varepsilon$ consists of Borel maps
$g_t:\mathbb{R}^t\to\mathcal{L}$ into a countable label set and maps
$\phi_t:\mathcal{L}^t\to\mathbb{R}$ such that
\[
  \bigl|x_t-\phi_t\bigl(g_1(x_1),\dots,g_t(x_{1:t})\bigr)\bigr|\le\varepsilon
  \qquad (t\ge 1)
\]
for every input sequence. Its labels are $L_t=g_t(X_{1:t})$, its rate is
$\limsup_{n} n^{-1}H(L_{1:n})$, and the labels may be entropy-coded in blocks,
as in~\cite{NeuhoffGilbert1982}. We call the infimum $R^{\mathrm{c}}_\varepsilon(X)$ of these rates  the {\it causal floor}. Each label block determines a faithful reconstruction
vector, so $C_{n,\varepsilon}(X)\le H(L_{1:n})$ and
$R_\varepsilon(X)\le R^{\mathrm{c}}_\varepsilon(X)$. By decoder compatibility,
MRR is a causal code with $L_t=S_t$. Hence
\begin{equation}\label{eq.causal-sandwich}
  R_\varepsilon(X)\le R^{\mathrm{c}}_\varepsilon(X)\le R_{\mathrm{MRR}}(X;A_0).
\end{equation}
 
\subsection{A distribution-free process comparison}\label{sec.causal-overlap}
The scalar overlap bound   controls entire address strings. The
relevant comparison is between block partitions, so no identification of
the representative state with the source history is needed.

\begin{proposition}[Overlap along a causal address tree]\label{prop.causal-overlap}
Consider a shared deterministic initial state of an $\eps$-faithful causal quantizer with a shared state (Section~\ref{sec.ingredients}). Suppose that an integer $m\geq1$ has the
following property: at every state reachable from this initial state,
each closed interval of length $2\eps$ intersects at most $m$ nonempty
address cells. Then every faithful block cell in $\R^n$ intersects at
most $m^n$ cells of the system's length-$n$ address partition.
Consequently, for every $\R^n$-valued random vector,
\begin{equation}\label{eq.causal-overlap-block}
 C_{n,\eps}(X)\leq H(S_{1:n})
                 \leq C_{n,\eps}(X)+n\log_2m.
\end{equation}
If $X$ is stationary and $C_{1,\eps}(X)<\infty$, then
\begin{equation}\label{eq.causal-overlap-rate}
 0\leq\limsup_{n\to\infty}\frac{H(S_{1:n})}{n}-R_\eps(X)
 \leq\log_2m.
\end{equation}
\end{proposition}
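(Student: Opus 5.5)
The plan is a counting argument along the tree of address strings, carried out by induction on $n$, followed by the partition-overlap inequality~\eqref{eq.partition-overlap}.

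\textbf{Counting step.} Fix a faithful block cell $Q\subset\R^n$. By the equivalent form of faithfulness, $Q$ lies in a closed sup-norm ball of radius $\eps$ around its reconstruction $y$. Hence $Q\subset\prod_{t=1}^n I_t$ with $I_t=[y_t-\eps,y_t+\eps]$, each $I_t$ a closed interval of length $2\eps$. It therefore suffices to bound the number of length-$n$ address strings $s_{1:n}$ realized by some input $x_{1:n}\in\prod_t I_t$.

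I will show by induction on $k\le n$ that at most $m^k$ prefixes $s_{1:k}$ are realized by inputs with $x_t\in I_t$ for $t\le k$.
\begin{itemize}[label={}]
\item \emph{Base case.} For $k=0$ there is the single empty prefix.
\item \emph{Inductive step.} Fix a realized prefix $s_{1:k}$. By decoder compatibility~\eqref{eq.shared-state}, the state $A_k$ is a deterministic function of $A_0$ and $s_{1:k}$, hence the same for every input realizing that prefix. The state $A_k$ is reachable, so by hypothesis the interval $I_{k+1}$ meets at most $m$ nonempty cells $C_{A_k}(s)$. Thus the symbol $s_{k+1}=S(A_k,x_{k+1})$ with $x_{k+1}\in I_{k+1}$ takes at most $m$ values, and each prefix has at most $m$ realized children.
\end{itemize}
This gives the $m^n$ bound. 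The only subtle point, which is also the one place the argument uses synchronization, is that the state after a prefix does not depend on which inputs produced it. Only then is a single-state overlap bound valid uniformly over the whole rectangle, even though the rectangle is not a product set in address space.

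\textbf{Block inequality~\eqref{eq.causal-overlap-block}.} The lower bound holds because every address string determines a reconstruction vector within $\eps$ in each coordinate. Each length-$n$ address cell is therefore admissible in~\eqref{eq.block-floor}, after adding null cells, and so $C_{n,\eps}(X)\le H(S_{1:n})$. For the upper bound, apply~\eqref{eq.partition-overlap} with $\mathcal P$ the length-$n$ address partition (countable, since the alphabet is countable) and $\mathcal Q$ any admissible block partition. With $k=m^n$ this gives $H(S_{1:n})\le H(Q(X_{1:n}))+n\log_2 m$, and taking the infimum over $Q$ gives the claim; if $C_{n,\eps}(X)=\infty$ the inequality is trivial.

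\textbf{Rate inequality~\eqref{eq.causal-overlap-rate}.} If $X$ is stationary with $C_{1,\eps}(X)<\infty$, then by subadditivity $C_{n,\eps}(X)\le nC_{1,\eps}(X)<\infty$, and the limit~\eqref{eq.process-floor} exists. Dividing~\eqref{eq.causal-overlap-block} by $n$ and taking $\limsup$ gives both bounds, since $C_{n,\eps}/n\to R_\eps(X)$ is a true limit. The main obstacle is the rectangle and tree step above; the rest is bookkeeping with~\eqref{eq.partition-overlap}.
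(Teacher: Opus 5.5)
Your proposal is correct and follows the paper's proof essentially step for step. You count the prefix tree, using decoder compatibility to fix the state after each realized prefix, which gives at most $m^n$ address strings per faithful cell; then \eqref{eq.partition-overlap} with an infimum over competitors gives \eqref{eq.causal-overlap-block}, and dividing by $n$ and using the limit \eqref{eq.process-floor} gives the rate bound. The only cosmetic difference is that you count prefixes realized anywhere in the enclosing box $\prod_t I_t$ rather than in the cell itself, which yields the same bound.
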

\begin{proof}
Let $B$ be a nonempty cell of a competing faithful block quantizer,
with reconstruction $y=(y_1,\ldots,y_n)$. Then
\[
 B\subseteq\prod_{t=1}^n I_t,\qquad
 I_t=[y_t-\eps,y_t+\eps].
\]
For $0\leq t\leq n$, let $\mathcal T_t(B)$ be the set of address
prefixes of length $t$ produced by some input block in $B$. There is
one empty prefix at $t=0$. Fix a prefix in $\mathcal T_{t-1}(B)$.
Decoder compatibility determines a unique state from that prefix and
the initial state. Any extension realized by a point of $B$ must select
an address cell at this state that intersects $I_t$. There are at most
$m$ such cells, and hence at most $m$ extensions. Induction gives
$|\mathcal T_t(B)|\leq m^t$. At $t=n$ this is the asserted overlap
bound, even when $B$ is not a rectangle.

Apply \eqref{eq.partition-overlap} to the address partition and
an arbitrary faithful block partition $Q$. The bound just proved gives
$H(S_{1:n})\leq H(Q(X_{1:n}))+\log_2(m^n)$. Taking the infimum
over $Q$ proves the upper bound in \eqref{eq.causal-overlap-block}.
Both bounds hold in the extended reals; in particular, $H(S_{1:n})$ is finite
whenever $C_{n,\eps}(X)$ is.
The lower bound follows
because the address string determines a faithful reconstruction vector.

If $X$ is stationary and $C_{1,\eps}(X)<\infty$, concatenation gives
$C_{n,\eps}\leq nC_{1,\eps}<\infty$. Divide
\eqref{eq.causal-overlap-block} by $n$ and use
\eqref{eq.process-floor}. Taking the limsup of the middle term proves
\eqref{eq.causal-overlap-rate}.
\end{proof}

\begin{corollary}[Two-bit process comparison]\label{cor.two-bit-process}
Initialize MRR with an $\eps$-separated geometry. For every block law,
\begin{equation}\label{eq.two-bit-process-block}
 C_{n,\eps}(X)\leq H(S_{1:n})\leq C_{n,\eps}(X)+2n.
\end{equation}
For every stationary source with $C_{1,\eps}(X)<\infty$,
\begin{equation}\label{eq.two-bit-process-rate}
 0\leq R_{\mathrm{MRR}}(X;A_0)-R_\eps(X)\leq2.
\end{equation}
No density, independence, or mixing assumption is required.
\end{corollary}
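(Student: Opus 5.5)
The corollary follows by specializing Proposition~\ref{prop.causal-overlap} to MRR with $m=4$. The whole task is to verify that proposition's hypothesis: at every state reachable from an $\eps$-separated initial state, each closed interval of length $2\eps$ meets at most four nonempty address cells.

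First, MRR is an $\eps$-faithful causal quantizer with a shared state. This is Theorem~\ref{thm.realization}, which gives decoder compatibility, the Borel structure, and the error bound \eqref{eq.faithful}.

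Next, Proposition~\ref{prop.separation} applies with $\delta_0\geq\eps$, so $\delta=\eps$. Hence every geometry $\val(A_t)$ along every input path is $\eps$-separated. Every reachable state is attained along some input path, so the geometry of every reachable state is $\eps$-separated.

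At each such state, the address partition is $\mathcal P_{\val(A)}$, because cells depend only on the geometry. Theorem~\ref{thm.four-overlap} then gives $\kappa_\eps(\val(A))\leq 4$. This is exactly the required overlap property with $m=4$. The definition \eqref{eq.overlap-number} takes the supremum over intervals of diameter at most $2\eps$, which covers closed intervals of length exactly $2\eps$, and it counts endpoint-only intersections, as the proposition needs.

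Now invoke Proposition~\ref{prop.causal-overlap}. Inequality \eqref{eq.causal-overlap-block} with $\log_2 4=2$ gives \eqref{eq.two-bit-process-block} for every $\R^n$-valued random vector, in the extended reals. For a stationary source with $C_{1,\eps}(X)<\infty$, inequality \eqref{eq.causal-overlap-rate} gives \eqref{eq.two-bit-process-rate}, once we note that $\limsup_n H(S_{1:n})/n$ is by definition $R_{\mathrm{MRR}}(X;A_0)$ from \eqref{eq.mrr-rate}. The lower bound $0\leq R_{\mathrm{MRR}}-R_\eps$ is also \eqref{eq.causal-sandwich}.

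The phrase ``no density, independence, or mixing assumption'' needs no separate argument. The overlap count is purely combinatorial along the address tree, and \eqref{eq.partition-overlap} holds for any joint law.

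The only point needing care is the passage from separation ``along every input path'' to separation at ``every reachable state''. These are the same thing, since reachability means realization by some finite input string. With that settled, the argument has no real obstacle.
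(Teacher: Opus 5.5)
Your proposal is correct and is exactly the paper's argument. Proposition~\ref{prop.separation} keeps every reachable geometry $\eps$-separated, Theorem~\ref{thm.four-overlap} then gives the four-cell bound at each such state, and Proposition~\ref{prop.causal-overlap} with $m=4$ yields both \eqref{eq.two-bit-process-block} and \eqref{eq.two-bit-process-rate}. Your additional checks only make explicit what the paper's three-line proof leaves implicit: faithfulness via Theorem~\ref{thm.realization}, and that $\kappa_\eps$ covers closed intervals of length $2\eps$ including endpoint contacts.
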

\begin{proof}
Proposition~\ref{prop.separation} preserves $\eps$-separation at
every reachable state. Theorem~\ref{thm.four-overlap} bounds the
number of cells intersecting each closed interval of length $2\eps$
by four. Apply Proposition~\ref{prop.causal-overlap} with $m=4$.
\end{proof}

The separation assumption cannot be dropped from a bound uniform in
capacity. For $\eps=1$, take $\omega$ distinct initial values in
$[0,1]$ and let the source be independent and uniform on these values.
Every observation is reconstructed by its own stored representative,
so the state remains fixed and $H(S_{1:n})=n\log_2\omega$.
The constant reconstruction vector $(1/2,\ldots,1/2)$ is faithful on
the support, so $C_{n,1}(X)=0$. The rest of the input space can be
partitioned into additional zero-probability faithful cells.

\subsection{Adaptive reconstruction rectangles}
For a valid address string $s_{1:n}$, decoder compatibility determines
the preceding state at each step. Let $C_t(s_{1:t})$ be the scalar cell
selected at that state and let $\ell_t(s_{1:t})$ be its length. The
preimage of the string is exactly
\begin{equation}\label{eq.adaptive-rectangle}
 C(s_{1:n})=\prod_{t=1}^n C_t(s_{1:t}),\qquad
 v(s_{1:n})=\prod_{t=1}^n\ell_t(s_{1:t}).
\end{equation}
Indeed, after fixing a prefix, the next state is fixed; the restriction
on the next input is membership in its specified scalar cell.
Induction over the prefix proves the product description. Thus adaptation
changes the collection of rectangles without changing the rectangular
shape of an individual address preimage.

For a block density $f_n$, put $p_s=\Prob(S_{1:n}=s)$ and define its
cellwise average by
\begin{equation}\label{eq.cell-average}
 \overline f_n(x)=\frac{p_s}{v(s)}\quad(x\in C(s)).
\end{equation}
Choices on rectangle boundaries have Lebesgue measure zero.

\begin{proposition}[Exact adaptive entropy decomposition]\label{prop.rectangle-identity}
Suppose $X_{1:n}$ has density $f_n$ with
$\int f_n|\log_2 f_n|<\infty$, and suppose
$H(S_{1:n})<\infty$ and $\E|\log_2 v(S_{1:n})|<\infty$.
Then
\begin{equation}\label{eq.entropy-decomposition}
 H(S_{1:n})-h(X_{1:n})+n\log_2(2\eps)
 =D_2(f_n\|\overline f_n)
   +\sum_{t=1}^n\E\log_2\frac{2\eps}{\ell_t(S_{1:t})}.
\end{equation}
Both terms on the right are nonnegative. They vanish simultaneously
exactly when every positive-probability rectangle has full side lengths
$2\eps$ and $f_n$ is constant almost everywhere on each such rectangle.
\end{proposition}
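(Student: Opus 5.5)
The identity is a bookkeeping rearrangement of $H(S_{1:n})$ through the cellwise average density $\overline f_n$; positivity and equality then follow from standard facts about relative entropy and the constraint $\ell_t\le2\eps$.

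\textbf{Step 1: rewrite the left side as an integral.} By \eqref{eq.adaptive-rectangle}, the rectangles $C(s)$ partition $\R^n$ up to null sets, so
\[
 H(S_{1:n})=-\sum_s p_s\log_2 p_s=-\int f_n\log_2\bigl(\overline f_n\,v\bigr),
\]
where $v(x)=v(s)$ for $x\in C(s)$. This uses $p_s=\int_{C(s)}f_n$ together with $p_s=\overline f_n\,v(s)$ on $C(s)$. Splitting the logarithm gives
\[
 H(S_{1:n})=-\int f_n\log_2\overline f_n-\E\log_2 v(S_{1:n}).
\]
Each piece is finite: the second by hypothesis, and the first because it is the difference of two finite quantities.

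\textbf{Step 2: separate the relative entropy.} Write
\[
 -\int f_n\log_2\overline f_n=\int f_n\log_2\frac{f_n}{\overline f_n}-\int f_n\log_2 f_n=D_2(f_n\|\overline f_n)+h(X_{1:n}).
\]
The hypothesis $\int f_n|\log_2f_n|<\infty$ justifies this split and shows that $D_2$ is finite. Substituting into Step 1 gives
\[
 H(S_{1:n})-h(X_{1:n})=D_2(f_n\|\overline f_n)-\E\log_2 v(S_{1:n}).
\]
Since $\log_2 v(S_{1:n})=\sum_t\log_2\ell_t(S_{1:t})$, adding $n\log_2(2\eps)$ to both sides yields \eqref{eq.entropy-decomposition}.

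\textbf{Step 3: positivity and the equality case.} The function $\overline f_n$ is a probability density, since its integral over each rectangle is $p_s$. Hence $D_2(f_n\|\overline f_n)\ge0$, with equality iff $f_n=\overline f_n$ a.e. On each positive-probability rectangle, $\overline f_n$ is the constant $p_s/v(s)$, so this condition says that $f_n$ is a.e. constant on every such rectangle. (Zero-probability rectangles carry $f_n=0$ a.e. anyway.)

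For the second term, Theorem~\ref{thm.realization} gives every cell length $\ell_t\le2\eps$. Each summand $\log_2(2\eps/\ell_t)$ is therefore nonnegative. The expectation vanishes iff $\ell_t(S_{1:t})=2\eps$ almost surely for every $t$, i.e. iff every positive-probability string has full side lengths. Two nonnegative terms sum to zero iff both vanish. This gives the characterization, read as "the sum of the two terms is zero iff both conditions hold."

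\textbf{Main obstacle.} The delicate point is justifying the splits of integrals in Steps 1 and 2 without an $\infty-\infty$ issue. I would handle it by expanding $\int f_n\log_2(f_n/\overline f_n)$ rectangle by rectangle, using the three integrability hypotheses. A secondary point is the measure-zero ambiguity on rectangle boundaries and the possible countable infinity of strings. The hypothesis $H(S_{1:n})<\infty$ makes the sums over strings converge absolutely, which covers the latter.
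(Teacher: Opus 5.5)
Your proposal is correct and follows the paper's proof. Both arguments rewrite $\int f_n\log_2\overline f_n$ rectangle by rectangle as $-H(S_{1:n})-\E\log_2 v(S_{1:n})$ and compare it with $\int f_n\log_2 f_n=-h(X_{1:n})$. Both then expand $\log_2 v$ as $\sum_t\log_2\ell_t$ and obtain nonnegativity and the equality case from Gibbs' inequality and $0<\ell_t\le 2\eps$. Your explicit check that $\int f_n|\log_2\overline f_n|<\infty$, which follows from $H(S_{1:n})<\infty$ and $\E|\log_2 v|<\infty$, is exactly the integrability step the paper compresses into ``the stated integrability assumptions justify these sums and integrals.''
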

\begin{proof}
By \eqref{eq.adaptive-rectangle}, the rectangles form a countable
partition up to their assigned boundaries, and each has positive finite
volume. On a rectangle of positive probability, the logarithm of
\eqref{eq.cell-average} is $\log_2 p_s-\log_2 v(s)$. Therefore
\[
 \int f_n\log_2\overline f_n
 =\sum_{s:p_s>0}p_s\log_2p_s
   -\sum_{s:p_s>0}p_s\log_2v(s)
 =-H(S_{1:n})-\E\log_2v(S_{1:n}).
\]
The stated integrability assumptions justify these sums and integrals.
Subtracting this expression from $\int f_n\log_2 f_n=-h(X_{1:n})$
and using the product formula for $v$ gives
\eqref{eq.entropy-decomposition}. Relative entropy is nonnegative,
with equality only for almost-everywhere equality of its two densities
\cite{CoverThomas2006}. Also $0<\ell_t\leq2\eps$, so each volume
penalty is nonnegative and vanishes precisely for a full-width side.
These equality conditions give the last assertion.
\end{proof}

\subsection{A quantitative density estimate}
For a density $f$ on $\R^n$, denote its Fisher information by
\begin{equation}\label{eq.block-fisher-definition}
 J(f)=4\int_{\R^n}|\nabla\sqrt f|^2\,dx,
 \qquad \sqrt f\in H^1(\R^n).
\end{equation}
Here $H^1$ denotes the first-order Sobolev space and  derivatives
are weak derivatives. For a positive differentiable density this is
$J(f)=\int f\|\nabla\ln f\|^2\,dx$; the score uses natural
logarithms even though entropies are measured in bits. Densities are allowed to vanish.

\begin{lemma}[Fisher information inside a rectangle]
\label{lem.rectangle-kl}
Let $C=\prod_{i=1}^n I_i$ have positive finite side lengths
$\ell_i=|I_i|$. Let $f\geq0$ satisfy
$p_C=\int_C f\,dx\in(0,\infty)$ and $\sqrt f\in H^1(C)$.
Write $P_C$ for the probability law with density $f/p_C$ on $C$,
and $U_C$ for the uniform law on $C$. Then
\begin{equation}\label{eq.rectangle-kl}
 D_2(P_C\|U_C)
 \leq\frac{2}{\pi^2\ln2\,p_C}
       \sum_{i=1}^n\ell_i^2\int_C|\partial_i\sqrt f|^2\,dx.
\end{equation}
In particular, if $f>0$ and its logarithm is coordinatewise
Lipschitz, with
\begin{equation}\label{eq.coordinate-lipschitz}
 |\ln f(x)-\ln f(x')|\leq L_i|x_i-x_i'|
 \quad\text{whenever $x$ and $x'$ differ only in coordinate $i$},
\end{equation}
then
\begin{equation}\label{eq.rectangle-lipschitz-kl}
 D_2(P_C\|U_C)
 \leq\frac1{2\pi^2\ln2}\sum_{i=1}^nL_i^2\ell_i^2.
\end{equation}
For any countable rectangular partition with side lengths at most
$2\eps$, a density $f$ with $\sqrt f\in H^1(\R^n)$ and its
cellwise average $\overline f$ satisfy
\begin{equation}\label{eq.partition-fisher-kl}
 D_2(f\|\overline f)
 \leq\frac{2\eps^2}{\pi^2\ln2}J(f).
\end{equation}
\end{lemma}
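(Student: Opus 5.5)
The plan is to deduce \eqref{eq.rectangle-kl} from a logarithmic Sobolev inequality for the uniform law on a rectangle, and then to obtain the other two estimates by specialization and by summing over cells.

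\textbf{Step 1: the one-dimensional inequality.} On an interval of length $\ell$, the uniform law satisfies the sharp logarithmic Sobolev inequality of~\cite{GhangMartinWaruhiu2014}. For $g$ with $\int g^2\,d\Unif=1$ it reads
\[
 \int g^2\ln g^2\,d\Unif\leq\frac{2\ell^2}{\pi^2}\int|g'|^2\,d\Unif .
\]
The constant $\ell^2/\pi^2$ is the inverse of the Neumann spectral gap, and for this measure the log-Sobolev constant coincides with twice that value. I would first check the normalization: with $g^2=f/p_C$ the right-hand side of \eqref{eq.rectangle-kl}, converted to bits, should come out exactly as stated.

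\textbf{Step 2: tensorization to the rectangle.} Logarithmic Sobolev inequalities tensorize, because entropy is subadditive over product measures. Hence for $U_C=\bigotimes_i\Unif(I_i)$ and $g$ normalized in $L^2(U_C)$,
\[
 \mathrm{Ent}_{U_C}(g^2)\leq\sum_i\frac{2\ell_i^2}{\pi^2}\int|\partial_i g|^2\,dU_C .
\]
Take $g=\sqrt{f\,|C|/p_C}$, so that $g^2\,dU_C=dP_C$ and $\mathrm{Ent}_{U_C}(g^2)=D(P_C\|U_C)$ in nats. Then $\int|\partial_i g|^2\,dU_C=p_C^{-1}\int_C|\partial_i\sqrt f|^2\,dx$. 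Dividing by $\ln 2$ gives \eqref{eq.rectangle-kl}.

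The main obstacle is regularity. The inequality must be applied to $\sqrt f\in H^1(C)$, where $f$ may vanish. I would first prove it for smooth functions on the closed rectangle, using the Neumann setting with no boundary condition, and then pass to the limit by density of $C^\infty(\overline C)$ in $H^1(C)$. Lower semicontinuity of entropy, or Fatou applied to $g^2\ln g^2$ after adding a constant, controls the left side, and the right side converges. This is where I expect most of the care to be needed.

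\textbf{Step 3: the Lipschitz case.} If $\ln f$ is $L_i$-Lipschitz in coordinate $i$, then $\sqrt f$ is locally Lipschitz with $|\partial_i\sqrt f|=\tfrac12|\partial_i\ln f|\sqrt f\leq\tfrac12L_i\sqrt f$ almost everywhere. Hence $\int_C|\partial_i\sqrt f|^2\leq L_i^2p_C/4$, and substituting this into \eqref{eq.rectangle-kl} gives \eqref{eq.rectangle-lipschitz-kl}.

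\textbf{Step 4: the partition bound.} On each cell $C$ with $p_C>0$, the density $\overline f$ equals $p_C/|C|$. This gives the decomposition
\[
 D_2(f\|\overline f)=\sum_C p_C\,D_2(P_C\|U_C),
\]
since cells with $p_C=0$ contribute nothing. Apply \eqref{eq.rectangle-kl} with $\ell_i\leq2\eps$; the factor $p_C$ cancels. Summing $\int_C|\nabla\sqrt f|^2$ over the partition gives $J(f)/4$, because the restriction of an $H^1(\R^n)$ function to each open cell lies in $H^1$ of that cell and the boundaries are null sets. The total is
\[
 \frac{2\cdot4\eps^2}{\pi^2\ln2}\cdot\frac{J(f)}{4}=\frac{2\eps^2}{\pi^2\ln2}\,J(f),
\]
which is \eqref{eq.partition-fisher-kl}.
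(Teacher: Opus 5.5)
Your proposal is correct and follows the paper's proof essentially step for step. The shared steps are the sharp interval logarithmic Sobolev inequality, tensorization over the product law $U_C$, the choice $g=\sqrt{|C|f/p_C}$, the pointwise bound $4|\partial_i\sqrt f|^2\leq L_i^2 f$ in the Lipschitz case, and the cellwise decomposition $D_2(f\|\overline f)=\sum_C p_C\,D_2(P_C\|U_C)$ with $\ell_i\leq 2\eps$. Your regularity argument, density of $C^\infty(\overline C)$ in $H^1(C)$ plus Fatou applied to $g^2\ln g^2+e^{-1}$, spells out the step the paper covers with ``follows by approximation.''
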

\begin{proof}
The sharp logarithmic Sobolev inequality for the uniform probability
law $U_I$ on an interval of length $\ell$ is
\[
 \operatorname{Ent}_{U_I}(g^2)
 \leq\frac{2\ell^2}{\pi^2}\int_I|g'|^2\,dU_I.
\]
Here $\operatorname{Ent}_\nu(q)=\int q\ln q\,d\nu-
(\int q\,d\nu)\ln\int q\,d\nu$ uses natural logarithms.
The inequality follows by scaling
\cite[Theorem~2.2]{GhangMartinWaruhiu2014}; its Sobolev formulation
follows by approximation. Standard tensorization of entropy
\cite{Boucheron2013} gives, for the product law $U_C$,
\[
 \operatorname{Ent}_{U_C}(g^2)
 \leq\frac2{\pi^2}\sum_{i=1}^n\ell_i^2
                  \int_C|\partial_i g|^2\,dU_C.
\]
Apply this inequality to $g=\sqrt{|C|f/p_C}$. Its square has
$U_C$-integral one, its entropy is $D(P_C\|U_C)$ in nats,
and
\[
 \int_C|\partial_i g|^2\,dU_C
 =\frac1{p_C}\int_C|\partial_i\sqrt f|^2\,dx.
\]
Dividing by $\ln2$ proves \eqref{eq.rectangle-kl}. These are
derivatives inside $C$; no derivative of a boundary indicator is
taken. Under \eqref{eq.coordinate-lipschitz}, weak differentiation
gives $4|\partial_i\sqrt f|^2\leq L_i^2f$ almost everywhere,
which proves \eqref{eq.rectangle-lipschitz-kl}.

For the last assertion, denote the rectangles of positive probability
by $C_s$, their probabilities by $p_s$, and their side lengths by
$\ell_i(s)$. Relative entropy decomposes over the rectangles as
\[
 D_2(f\|\overline f)=\sum_s p_sD_2(P_{C_s}\|U_{C_s}).
\]
Apply \eqref{eq.rectangle-kl} to each rectangle and use
$\ell_i(s)\leq2\eps$. All gradient integrals are nonnegative,
so countable additivity gives
\[
 \begin{split}
 D_2(f\|\overline f)
 &\leq\frac2{\pi^2\ln2}
       \sum_s\sum_{i=1}^n\ell_i(s)^2
                         \int_{C_s}|\partial_i\sqrt f|^2\,dx\leq\frac{8\eps^2}{\pi^2\ln2}
                  \int_{\R^n}|\nabla\sqrt f|^2\,dx
 =\frac{2\eps^2}{\pi^2\ln2}J(f).
 \end{split}
\]
Zero-probability cells contribute zero. This proves
\eqref{eq.partition-fisher-kl}.
\end{proof}

\begin{lemma}[Expected volume loss]\label{lem.volume-loss}
Suppose, for $1\leq t\leq n$, the conditional law of $X_t$ given
$X_{1:t-1}$ has a density bounded by $M$, almost surely; this includes
the unconditional density when $t=1$. Then
\begin{equation}\label{eq.volume-loss}
 \E\log_2\frac{2\eps}{\ell_t(S_{1:t})}
 \leq\frac{2M\eps\ m_\omega}{e\ln2}
 \qquad(1\leq t\leq n).
\end{equation}
where $m_\omega$ is given by \eqref{eq.momega}.
\end{lemma}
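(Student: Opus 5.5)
Condition on the address prefix $S_{1:t-1}$, equivalently on the past inputs $X_{1:t-1}$; the state $A_{t-1}$, and hence the scalar partition $\mathcal P_{R}$ with $R=\val(A_{t-1})$, is then fixed. The plan is to prove the bound conditionally and then average over the conditioning, using the tower property.

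\textbf{Conditional reduction.} Since the partition at time $t$ is determined by $X_{1:t-1}$, we have
\[
 \E\Bigl[\log_2\frac{2\eps}{\ell_t(S_{1:t})}\Bigm| X_{1:t-1}\Bigr]
 =\sum_{C\in\mathcal P_R}\Prob(X_t\in C\mid X_{1:t-1})\log_2\frac{2\eps}{|C|},
\]
where $|C|$ denotes the length of the cell $C$. Every cell of full length $2\eps$ contributes zero. By Lemma~\ref{lem.width-structure}, at most $m_\omega$ cells have length strictly less than $2\eps$. Each of these is an interval, and the conditional density of $X_t$ is bounded by $M$, so such a cell $C$ carries conditional probability at most $M|C|$.

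\textbf{Scalar estimate.} For a short cell of length $\ell\in(0,2\eps)$ and mass $p\leq M\ell$, the contribution is $p\log_2(2\eps/\ell)\leq M\ell\log_2(2\eps/\ell)$. Put $u=\ell/(2\eps)\in(0,1)$. Then
\[
 M\ell\log_2\frac{2\eps}{\ell}=2M\eps\,\frac{u\ln(1/u)}{\ln2}.
\]
The function $u\ln(1/u)$ attains its maximum $1/e$ at $u=1/e$. Hence each short cell contributes at most $2M\eps/(e\ln2)$.

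\textbf{Conclusion.} Summing over at most $m_\omega$ short cells gives the conditional bound $2M\eps\,m_\omega/(e\ln2)$, uniformly in the conditioning. Taking expectations over $X_{1:t-1}$ yields \eqref{eq.volume-loss}. For $t=1$ the partition is the deterministic initial one, and the same argument applies with the unconditional density.

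\textbf{Main obstacle.} The delicate step is measure-theoretic. We must justify that conditioning on $X_{1:t-1}$ fixes the cell lengths, and that the conditional density bound holds almost surely. The first point follows from decoder compatibility, Proposition~\ref{prop.synchronization}, which makes $A_{t-1}$ a measurable function of $X_{1:t-1}$. For the second, the full-width cells contribute exactly zero, so only finitely many terms, each nonnegative, need to be controlled. There is no integrability issue, because the conditional expectation is a bounded nonnegative quantity.
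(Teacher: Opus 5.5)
Your proof is correct and follows the paper's argument: fix the scalar partition by conditioning, bound the mass of each of the at most $m_\omega$ short cells by $M\ell$, maximize $u\ln(1/u)$ to get at most $2M\eps/(e\ln 2)$ per short cell, and average over the conditioning. The only difference is that the paper conditions on the address prefix and notes that averaging the conditional densities preserves the bound $M$, whereas you condition directly on $X_{1:t-1}$. Your route is valid, but the two conditionings are not ``equivalent'' as you say, since the prefix is only a coarsening of $X_{1:t-1}$.
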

\begin{proof}
The address prefix is a function of $X_{1:t-1}$. Conditioning on that
prefix averages the conditional source densities, so the resulting
density of $X_t$ is also bounded by $M$. For any fixed prefix, the
probability of a scalar cell of length $\ell$ is at most $M\ell$.
Only short cells contribute a nonzero term. By
Lemma~\ref{lem.width-structure} there are at most $m_\omega$ of them,
without any separation assumption. For $0<\ell\leq2\eps$, maximizing $u\ln(1/u)$ on $(0,1]$ gives
\[
 \ell\log_2\frac{2\eps}{\ell}
 \leq\frac{2\eps}{e\ln2}.
\]
 Multiplying by $M$,
summing over the short cells, and averaging over prefixes proves
\eqref{eq.volume-loss}.
\end{proof}

\subsection{Certified approximation to the process $\eps$-entropy}\label{sec.grid-certificates}
Uniform-grid initialization gives particularly strong certificates.
For every $\eps>0$ and every origin $\theta\in\R$, initialize any
$\omega$ distinct representatives in $\theta+2\eps\Z$.
With the consistent boundary ownership defined in
Section~\ref{sec.fixed-grid}, MRR realizes that uniform grid, and its
addresses and the grid indices have identical block entropy at every
length (Proposition~\ref{prop.lattice-bijection}). The original boundary
rule gives the same conclusion whenever the source assigns zero probability
to grid boundaries. The origin is arbitrary, and the spacing $2\eps$
can be any positive real number.

Two sharper comparisons follow. At each fixed lattice state, the scalar
address entropy is at most one bit above the scalar $\varepsilon$-entropy, with the same
boundary qualification (Corollary~\ref{cor.lattice-one-bit}), improving
the two-bit comparison for general separated geometries in
Theorem~\ref{thm.two-bit}. For stationary sources, the one-bit process
comparison combines with the Fisher-information estimate to give
\begin{equation}\label{eq.grid-summary-rate}
 0\leq R_{\mathrm{MRR}}(X;A_0)-R_\eps(X)
 \leq\min\left\{1,\frac{2\mathcal J\eps^2}{\pi^2\ln2}\right\}.
\end{equation}
The quadratic term requires the Fisher-information and entropy
 assumptions of Theorem~\ref{thm.near-floor} below. It follows from
Proposition~\ref{prop.lattice} because every grid cell has full width,
so the volume penalty vanishes. If the block log densities have uniformly
bounded coordinate scores $L$, then $\mathcal J\leq L^2$.
These bounds are uniform over the grid origin and representative capacity;
for a fixed regular source, the rate gap tends to zero as
$\eps\downarrow0$ with no restriction on the growth of $\omega$.
The theorem below gives the general-initialization bound; the grid
equivalence and its sharper consequences are proved in
Section~\ref{sec.fixed-grid}.

\begin{theorem}[Quantitative process comparison]\label{thm.near-floor}
Let $X$ be stationary and assume the following:
\begin{enumerate}[label=(\roman*),leftmargin=*]
\item for every $n\geq1$, $X_{1:n}$ has a density $f_n$ with
$\int f_n|\log_2f_n|<\infty$ and
$\sqrt{f_n}\in H^1(\R^n)$;
\item the asymptotic Fisher information per
coordinate satisfies
\begin{equation}\label{eq.fisher-rate}
 \mathcal J=\limsup_{n\to\infty}\frac{J(f_n)}{n}<\infty
\end{equation}
\item $H(\floor{X_1})<\infty$ and the differential entropy rate
$\overline h(X)=\lim_{n\to\infty}h(X_{1:n})/n$ is a finite real number.
\end{enumerate}
For any tolerance $\eps>0$, capacity $\omega\geq1$, and fixed initial
state $A_0\in\A_\omega$, define
\begin{equation}\label{eq.delta}
 \Delta_{\omega,\eps}(\mathcal J)
 =\frac{2\mathcal J\eps^2}{\pi^2\ln2}
  +\frac{m_\omega\eps\sqrt{\mathcal J}}{e\ln2},
 \qquad m_\omega=\floor{\frac{3\omega}{2}}-1.
\end{equation}
Denoting $J_n=J(f_n)$, at every block length we have
\begin{equation}\label{eq.finite-block-bracket}
 \begin{split}
 h(X_{1:n})-n\log_2(2\eps)
 \leq C_{n,\eps}(X)\leq H(S_{1:n})
 &\leq h(X_{1:n})-n\log_2(2\eps)
       +\frac{2\eps^2J_n}{\pi^2\ln2}
       +\frac{m_\omega\eps\sqrt{nJ_n}}{e\ln2}.
 \end{split}
\end{equation}
In particular,
\begin{equation}\label{eq.process-bracket}
 \overline h(X)-\log_2(2\eps)
 \leq R_\eps(X)
 \leq R_{\mathrm{MRR}}(X;A_0)
 \leq\overline h(X)-\log_2(2\eps)
       +\Delta_{\omega,\eps}(\mathcal J).
\end{equation}
Thus $0\leq R_{\mathrm{MRR}}-R_\eps\leq\Delta_{\omega,\eps}(\mathcal J)$.
If the initial geometry is $\eps$-separated, this upper bound on the
gap can be replaced by $\min\{2,\Delta_{\omega,\eps}(\mathcal J)\}$.
For a fixed source satisfying these assumptions, the gap tends to zero
as $\eps\downarrow0$ whenever $\omega(\eps)\eps\to0$.

If, additionally, the conditional density of $X_t$ given $X_{1:t-1}$
is bounded by  $M$ almost surely, for every $t\geq1$
 the last term in
\eqref{eq.finite-block-bracket} may be replaced by
\[
 \frac{m_\omega\eps}{e\ln2}
       \min\{\sqrt{nJ_n},\,2nM\}
\]
and all rate conclusions hold with
$\Delta_{\omega,\eps}(\mathcal J)$ replaced by
\[
 \Delta^{(M)}_{\omega,\eps}(\mathcal J)
 =\frac{2\mathcal J\eps^2}{\pi^2\ln2}
  +\frac{m_\omega\eps}{e\ln2}
       \min\{\sqrt{\mathcal J},\,2M\}.
\]
\end{theorem}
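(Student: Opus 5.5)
The plan is to read the upper bound off the exact decomposition of Proposition~\ref{prop.rectangle-identity} and to bound its two terms separately. The lower bounds in \eqref{eq.finite-block-bracket} are already available: the Shannon lower bound \eqref{eq.shannon-lower} gives the first inequality and \eqref{eq.causal-sandwich} the second. Before the decomposition can be used, its hypotheses must be checked.

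First, $H(S_{1:n})<\infty$. Assumption (iii), $H(\floor{X_1})<\infty$, gives $C_{1,\eps}<\infty$, since integer cells can be split into finitely many faithful cells. Stationarity and concatenation then give $C_{n,\eps}<\infty$. For a separated start, Corollary~\ref{cor.two-bit-process} yields finiteness of $H(S_{1:n})$. For an arbitrary start I would argue more crudely. Each state has at most $m_\omega+1$ cells per $2\eps$-interval plus full-width cells, so the same tree-counting argument of Proposition~\ref{prop.causal-overlap} works with $m$ equal to a bound on the number of cells meeting a $2\eps$-interval. That bound is at most $\omega+3$, since at most $\omega$ proximity cells and at most three full-width cells can meet such an interval.

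Second, $\E|\log_2 v|<\infty$. This follows from the volume-loss bound once it is shown to be finite, because $v\le(2\eps)^n$.

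With the hypotheses in place, the KL term is handled by \eqref{eq.partition-fisher-kl}. The address rectangles \eqref{eq.adaptive-rectangle} have sides at most $2\eps$, so $D_2(f_n\|\overline f_n)\le 2\eps^2J_n/(\pi^2\ln 2)$.

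The main obstacle is the volume term $\sum_t\E\log_2(2\eps/\ell_t)$ without a density bound. I would first condition on the prefix $S_{1:t-1}$, which fixes the state and hence at most $m_\omega$ short cells by Lemma~\ref{lem.width-structure}. Let $g$ be the conditional density of $X_t$ given the prefix, and let $p_C$ be the mass of a short cell $C$ of length $\ell$. Instead of $p_C\le M\ell$, I would write $\sqrt{p_C}$ against the $H^1$ norm. By Cauchy--Schwarz in one dimension, the supremum of $\sqrt g$ on the short cell is controlled by an average plus $\int|(\sqrt g)'|$. The aim is a per-step bound of the form
\[
\E\log_2\frac{2\eps}{\ell_t}\le \frac{m_\omega\eps}{e\ln2}\,\sqrt{J_t^{\mathrm{cond}}},
\]
where $J_t^{\mathrm{cond}}$ is the expected conditional Fisher information in coordinate $t$. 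The function $u\ln(1/u)$ maximization from Lemma~\ref{lem.volume-loss} supplies the constant $2\eps/(e\ln 2)$ per short cell. The delicate point is obtaining exactly $\sqrt J$ rather than $M$, possibly through a bound like $p_C\le \ell\sqrt{J}/2$ averaged over positions, since a density with Fisher information $J$ satisfies a sup bound $\sup g\le\sqrt{J}/2$ in one dimension. That sup bound follows from $g(x)=\int_{-\infty}^x g'\le\frac12\int|g'|\le\frac12\sqrt{J}$, which I would establish carefully, together with its conditional version.

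To finish, summing over $t$ with Cauchy--Schwarz gives $\sum_t\sqrt{J_t^{\mathrm{cond}}}\le\sqrt{n\sum_t J_t^{\mathrm{cond}}}$. The chain rule for Fisher information, $\sum_t J_t^{\mathrm{cond}}\le J_n$ (conditional score components of the joint score), then gives $\sqrt{nJ_n}$. Under the bounded conditional density assumption, Lemma~\ref{lem.volume-loss} directly gives $2nM$ instead, and taking the minimum yields the refined bracket.

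For the rate statements, divide by $n$ and take limsups using $h(X_{1:n})/n\to\overline h$ and $\limsup J_n/n=\mathcal J$, which bounds $\sqrt{nJ_n}/n$ by $\sqrt{\mathcal J}$ in the limit. In the separated case, combine with Corollary~\ref{cor.two-bit-process} to get the minimum with $2$. The final limit statement follows because both terms of $\Delta_{\omega,\eps}(\mathcal J)$ vanish when $\eps\to0$ and $m_\omega\eps\le\frac32\omega\eps\to0$.
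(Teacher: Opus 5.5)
Your proposal is correct and follows the paper's proof. You use the decomposition of Proposition~\ref{prop.rectangle-identity}, Lemma~\ref{lem.rectangle-kl} for the density term, and the one-dimensional bound $2\|q\|_\infty\le\int|q'|\le\sqrt{J(q)}$ on the at most $m_\omega$ short cells, followed by Jensen and Cauchy--Schwarz. Note that this bound comes from averaging the integrals from $-\infty$ and from $+\infty$, not from the left integral alone. Your finiteness check by tree counting also works, since any finite per-state bound on the number of cells meeting a $2\eps$-interval suffices; the paper instead counts cells per unit interval given $\floor{X_t}$.

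The one substantive difference is what you condition on. You condition on the past, so $\sum_t J_t^{\mathrm{cond}}\le J_n$ must come from the fact that the conditional score $\partial_t\ln f_t$ is the conditional expectation of $\partial_t\ln f_n$ given $X_{1:t}$. That is a projection inequality, not a chain-rule identity. The paper instead conditions on all other block coordinates $X_{-t}^{(n)}$, which still fixes the partition at time $t$. This gives the exact slicing identity $\E J(q_{n,t})=4\int|\partial_t\sqrt{f_n}|^2\,dx$, and these terms sum to $J_n$.
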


\begin{proof}
We first check that the discrete entropies are finite.
For any geometry, a unit interval intersects at most
\[
 K_{\omega,\eps}=2\omega+2+\ceil{1/(2\eps)}
\]
address cells. Indeed, at most $m_\omega$ cells are short by
Lemma~\ref{lem.width-structure}. Among the full-width cells meeting
the interval, at most two are not contained in it, and the contained
cells have disjoint interiors of length $2\eps$. The displayed
number is therefore a uniform upper bound. Conditional on an address
prefix and $\floor{X_t}$, the next address has at most
$K_{\omega,\eps}$ possibilities. The entropy chain rule gives
\[
 H(S_t\mid S_{1:t-1})
 \leq H(\floor{X_t})+\log_2K_{\omega,\eps}
 =H(\floor{X_1})+\log_2K_{\omega,\eps}<\infty.
\]
Consequently $H(S_{1:n})<\infty$ and $C_{n,\eps}<\infty$.

We next bound the expected volume loss using only $J_n=J(f_n)$.
For any probability density $q$ on $\R$ with $\sqrt q\in H^1(\R)$,
the function $q$ belongs to $W^{1,1}(\R)$ and its absolutely
continuous representative vanishes at both infinities. Hence
\[
 2\|q\|_\infty
 \leq\int_\R |q'(x)|\,dx
 \leq 2\|\sqrt q\|_2\|(\sqrt q)'\|_2
 =\sqrt{J(q)}.
\]
Fix $n$ and $1\leq t\leq n$, and write
\[
 X_{-t}^{(n)}=(X_1,\ldots,X_{t-1},X_{t+1},\ldots,X_n).
\]
Let $q_{n,t}$ denote the conditional density of $X_t$ given
$X_{-t}^{(n)}$. Sobolev slicing of $\sqrt{f_n}$ shows that
$\sqrt{q_{n,t}}\in H^1(\R)$ almost surely. Since the normalizing
marginal density does not depend on the $t$th coordinate,
\[
 \E J(q_{n,t})
 =4\int_{\R^n}|\partial_t\sqrt{f_n(x)}|^2\,dx
 =:J_{n,t},
 \qquad \sum_{t=1}^n J_{n,t}=J_n.
\]
Conditioning on $X_{-t}^{(n)}$ fixes the entire input past and hence
the scalar address partition at time $t$. Each cell of length $\ell$
has conditional probability at most $\ell\|q_{n,t}\|_\infty$.
Only the at most $m_\omega$ short cells contribute to the volume
loss, and
\[
 \sup_{0<\ell\leq2\eps}\ell\log_2\frac{2\eps}{\ell}
 =\frac{2\eps}{e\ln2}.
\]
It follows that
\[
 \E\!\left[
  \log_2\frac{2\eps}{\ell_t(S_{1:t})}
  \,\middle|\,X_{-t}^{(n)}\right]
 \leq\frac{2m_\omega\eps}{e\ln2}\|q_{n,t}\|_\infty
 \leq\frac{m_\omega\eps}{e\ln2}\sqrt{J(q_{n,t})}.
\]
Taking expectations, summing, and applying Jensen's and
Cauchy--Schwarz inequalities yields
\[
 \begin{split}
 V_n&:=\sum_{t=1}^n\E\log_2\frac{2\eps}{\ell_t(S_{1:t})}\leq\frac{m_\omega\eps}{e\ln2}
             \sum_{t=1}^n\sqrt{J_{n,t}}
 \leq\frac{m_\omega\eps\sqrt{nJ_n}}{e\ln2}.
 \end{split}
\]
In particular, since
\[
 \log_2v(S_{1:n})
 =n\log_2(2\eps)-\sum_{t=1}^n
                         \log_2\frac{2\eps}{\ell_t(S_{1:t})},
\]
the random variable $\log_2v(S_{1:n})$ is integrable, as required in
Proposition~\ref{prop.rectangle-identity}.

The address preimages in \eqref{eq.adaptive-rectangle} are
rectangles with every side length at most $2\eps$. Their cellwise
density average therefore satisfies, by
Lemma~\ref{lem.rectangle-kl},
\[
 D_2(f_n\|\overline f_n)
 \leq\frac{2\eps^2}{\pi^2\ln2}J_n.
\]
Substituting this estimate and the bound on $V_n$ into
\eqref{eq.entropy-decomposition} proves the upper bound in
\eqref{eq.finite-block-bracket}. The other bounds follow from
\eqref{eq.shannon-lower} and the admissibility of the address partition.

For a stationary source the limit in \eqref{eq.process-floor} exists,
and the differential entropy rate exists by assumption. Divide
\eqref{eq.finite-block-bracket} by $n$ and take limits, using
\eqref{eq.fisher-rate} and the limsup definition of
$R_{\mathrm{MRR}}$. This proves \eqref{eq.process-bracket}.
Subtracting the first bound from the last proves the gap estimate.
Both terms of \eqref{eq.delta} tend to zero under the stated
capacity--tolerance scaling. For separated initialization,
Corollary~\ref{cor.two-bit-process} supplies the additional two-bit
upper bound.

Under the additional conditional-density bound,
Lemma~\ref{lem.volume-loss} also gives
$V_n\leq2nM m_\omega\eps/(e\ln2)$. Taking the minimum of the
two bounds on $V_n$ proves the finite-block refinement. Dividing by
$n$ and taking the limsup gives the stated rate refinement.
\end{proof}

When the block log densities obey
\eqref{eq.coordinate-lipschitz} with the same constant $L$ for all
coordinates and block lengths, $J_n\leq nL^2$ and hence
$\mathcal J\leq L^2$. The Fisher formulation also applies when
scores are unbounded, including nondegenerate Gaussian laws, provided
their squared scores have the stated integral bound.

If $\overline h(X)$ and an upper bound on $\mathcal J$ are known,
the outer bounds in \eqref{eq.process-bracket} give an explicit
certified interval for $R_\eps(X)$. Its lower endpoint may be replaced
by its maximum with zero. A known conditional-density bound $M$
may sharpen the upper endpoint through the optional refinement.
The certificate does not require the representative state to be
sufficient for prediction.
\begin{corollary}[Autoregressive Laplace source]\label{cor.ar-laplace}
Let $(X_t)_{t\in\Z}$ be the stationary solution of
\begin{equation}\label{eq.ar-source}
 X_t=aX_{t-1}+Z_t,\qquad |a|<1,
 \qquad f_Z(z)=\frac1{2b}e^{-|z|/b},\quad b>0,
\end{equation}
where the innovations are independent. Put $r=\eps/b$. The process $\eps$-entropy satisfies
\begin{equation}\label{eq.ar-bound}
 \max\left\{0,\log_2\frac e r\right\}
 \leq R_\eps(X)
 \leq\log_2\frac e r+
             \frac{2(1+a^2)r^2}{\pi^2\ln2}.
\end{equation}
For every capacity $\omega\geq1$, MRR initialized on a lattice of
spacing $2\eps$ satisfies
\begin{equation}\label{eq.ar-lattice-rate}
 R_\eps(X)\leq R_{\mathrm{MRR}}(X;A_0)
 \leq\log_2\frac e r+
             \frac{2(1+a^2)r^2}{\pi^2\ln2}.
\end{equation}
For an arbitrary fixed initial state, the rate guarantee is
\begin{equation}\label{eq.ar-general-rate}
 \begin{split}
 R_\eps(X)&\leq R_{\mathrm{MRR}}(X;A_0)\leq\log_2\frac e r
   +\frac{2(1+a^2)r^2}{\pi^2\ln2}
   +\frac{r}{e\ln2}\left(\left\lfloor\frac{3\omega}{2}\right\rfloor-1\right).
 \end{split}
\end{equation}
\end{corollary}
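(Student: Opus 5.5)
The plan is to verify the hypotheses of Theorem~\ref{thm.near-floor} for this source and compute the two quantities that enter it, $\overline h(X)$ and $\mathcal J$. For the grid-initialized bound \eqref{eq.ar-lattice-rate} I would then invoke \eqref{eq.grid-summary-rate}, which rests on Proposition~\ref{prop.lattice}. For the general bound \eqref{eq.ar-general-rate} I would use the conditional-density refinement $\Delta^{(M)}_{\omega,\eps}$.

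\emph{Entropy rate and hypothesis (iii).} The stationary solution is $X_t=\sum_{k\ge0}a^kZ_{t-k}$. It has finite variance $2b^2/(1-a^2)$, so $H(\floor{X_1})<\infty$. It also has a density $f_1$, being a convolution of a Laplace law with an independent variable. For $h(X_1)$ I would use $h(Z)\le h(X_1)\le\frac12\log_2(2\pi e\operatorname{Var}X_1)$, which shows it is finite. Because $X$ is Markov with innovation $Z_t$ independent of $X_{1:t-1}$, the chain rule gives
\[
 h(X_{1:n})=h(X_1)+(n-1)h(Z)=h(X_1)+(n-1)\log_2(2eb).
\]
Hence $\overline h(X)=\log_2(2eb)$ and $\overline h(X)-\log_2(2\eps)=\log_2(e/r)$. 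This yields the lower bound in \eqref{eq.ar-bound}, combined with $R_\eps\ge0$. It also gives the leading term of every upper bound, and $\int f_n|\log_2 f_n|<\infty$ follows from the product form below together with Laplace tails.

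\emph{Fisher information.} The block density is $f_n(x)=f_1(x_1)\prod_{t=2}^n f_Z(x_t-ax_{t-1})$. The function $\sqrt{f_Z}$ is Lipschitz and in $H^1$, and $\sqrt{f_1}\in H^1$ because of the convolution inequality $J(X_1)\le J(Z)=1/b^2$. Together these give $\sqrt{f_n}\in H^1(\R^n)$. Write $\psi(z)=-\operatorname{sign}(z)/b$ for the Laplace score. For $2\le t\le n-1$ the coordinate score is $\psi(Z_t)-a\psi(Z_{t+1})$. The two terms are independent with mean zero, so the squared score has expectation $(1+a^2)/b^2$. The boundary coordinates contribute $O(1)$: at $t=n$ we get $1/b^2$, and at $t=1$ the $f_1$ score plus $-a\psi(Z_2)$ has finite second moment. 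Therefore $J_n=(n-2)(1+a^2)/b^2+O(1)$ and $\mathcal J=(1+a^2)/b^2$. Substituting this into $2\mathcal J\eps^2/(\pi^2\ln2)$ gives $2(1+a^2)r^2/(\pi^2\ln2)$.

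\emph{Assembly.} For a lattice initialization, every grid cell has full width $2\eps$, so the volume penalty vanishes. Since the Laplace law is atomless, grid boundaries are null, and \eqref{eq.grid-summary-rate} gives \eqref{eq.ar-lattice-rate}. Because $R_\eps\le R_{\mathrm{MRR}}$, this also supplies the upper bound in \eqref{eq.ar-bound}.

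For an arbitrary $A_0$, I would not use $\sqrt{\mathcal J}$, which would produce a factor $\sqrt{1+a^2}$. Instead I would use the $M$-refinement. Given $X_{1:t-1}$, the law of $X_t$ is $f_Z(\cdot-aX_{t-1})$, whose density is bounded by $M=1/(2b)$. The density of $X_1$ is a convolution with $f_Z$, so it also obeys $f_1\le1/(2b)$. Then $\frac{m_\omega\eps}{e\ln2}\min\{\sqrt{\mathcal J},2M\}\le m_\omega r/(e\ln2)$, which is \eqref{eq.ar-general-rate}.

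The main obstacle I expect is the regularity bookkeeping: showing $\sqrt{f_n}\in H^1$, and checking that the boundary coordinate $t=1$ has finite Fisher information so that $J_n/n\to(1+a^2)/b^2$. The Fisher-information convolution inequality is what handles both points.
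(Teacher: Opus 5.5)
Your proposal is correct and follows the paper's own proof. Both use the Markov factorization, the innovation-score decomposition giving $\mathcal J=(1+a^2)/b^2$ with $J(\pi)\le 1/b^2$ from the convolution (conditional-Jensen) inequality, and $\overline h(X)=\log_2(2eb)$ from the chain rule; both then apply Proposition~\ref{prop.lattice} for lattice initialization and the bounded-density refinement with $M=1/(2b)$ for arbitrary $A_0$.

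One citation needs fixing. \eqref{eq.grid-summary-rate} bounds only the gap $R_{\mathrm{MRR}}-R_\eps$, so it cannot by itself give the absolute bound \eqref{eq.ar-lattice-rate}, since $R_\eps\ge\log_2(e/r)$. Cite instead the upper bracket \eqref{eq.process-bracket} with the volume term set to zero, as in the proof of Proposition~\ref{prop.lattice}; this is what your ``volume penalty vanishes'' remark actually relies on.
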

\begin{proof}
The stationary solution is the convergent linear process
$X_t=\sum_{j\geq0}a^jZ_{t-j}$; existence and stationarity follow
from the standard autoregressive construction
\cite{BrockwellDavis1991}. It has a finite first absolute moment.
Its marginal density can be written
$\pi(x)=\E f_Z(x-aX_0)$, so it is positive, bounded by $1/(2b)$,
and satisfies
\[
 e^{-|x-y|/b}\pi(y)\leq\pi(x)
                         \leq e^{|x-y|/b}\pi(y).
\]
The Markov factorization is
\begin{equation}\label{eq.ar-density-factorization}
 f_n(x_1,\ldots,x_n)=\pi(x_1)
                         \prod_{t=2}^n f_Z(x_t-ax_{t-1}).
\end{equation}
Every one-step conditional density, including the first marginal,
is bounded by $M=1/(2b)$.

Let $\rho(z)=f_Z'(z)/f_Z(z)=-\operatorname{sgn}(z)/b$ almost
everywhere, and let $\rho_\pi=\pi'/\pi$ be the marginal score.
Differentiating the convolution in the weak sense gives
\[
 \rho_\pi(X_1)=\E[\rho(Z_1)\mid X_1].
\]
Conditional Jensen therefore gives
$J(\pi)=\E\rho_\pi(X_1)^2\leq\E\rho(Z_1)^2=1/b^2$.
In particular, $\sqrt\pi\in H^1(\R)$. Also
$\E\rho(Z_1)=0$ by symmetry.

For $n\geq2$, differentiation of
\eqref{eq.ar-density-factorization} gives the block scores
\[
 \begin{aligned}
 \partial_1\ln f_n(X_{1:n})
   &=\rho_\pi(X_1)-a\rho(Z_2),\\
 \partial_i\ln f_n(X_{1:n})
   &=\rho(Z_i)-a\rho(Z_{i+1})\quad(2\leq i\leq n-1),\\
 \partial_n\ln f_n(X_{1:n})&=\rho(Z_n).
 \end{aligned}
\]
The first cross term vanishes because $X_1$ and $Z_2$ are
independent and $\rho(Z_2)$ has mean zero. All interior cross terms
vanish because distinct innovations are independent and their scores
have mean zero. Squaring, taking expectations, and adding the
coordinate contributions therefore yields
\begin{equation}\label{eq.ar-laplace-fisher}
 J_n=J(\pi)+\frac{(n-1)(1+a^2)}{b^2},\qquad n\geq1,
 \qquad \mathcal J=\frac{1+a^2}{b^2}.
\end{equation}
For $n=1$ the identity is immediate. The positive block density is
locally absolutely continuous, and these integrable squared scores
also verify $\sqrt{f_n}\in H^1(\R^n)$.

The finite first absolute moment implies
$H(\floor{X_1})<\infty$ by comparison with a two-sided geometric
mass function and the entropy variational inequality
\cite{CoverThomas2006}. The ratio inequality for $\pi$ bounds
$|\ln\pi(x)|$ by a constant plus $|x|/b$. Together with
\eqref{eq.ar-density-factorization} and the finite first moments,
this proves $\int f_n|\ln f_n|<\infty$. The differential-entropy
chain rule gives
\[
 h(X_{1:n})=h(X_1)+(n-1)h(Z_1),\qquad
 \overline h(X)=h(Z_1)=\log_2(2eb).
\]
 The bounded-density refinement of Theorem~\ref{thm.near-floor},
 with $M=1/(2b)$, proves \eqref{eq.ar-general-rate}.
 Proposition~\ref{prop.lattice}
eliminates the volume term for lattice initialization and yields
\eqref{eq.ar-lattice-rate}. Such an initialization is an admissible
quantizer, so its rate bounds the process $\eps$-entropy from above.
The Shannon lower bound \eqref{eq.shannon-lower} and nonnegativity
give the lower endpoint in \eqref{eq.ar-bound}.
\end{proof}

For example, $a=0.8$ and $\eps/b=10^{-3}$ give the
outward-rounded certified bracket
\begin{equation}\label{eq.ar-numeric}
 11.408479325551\ \leq\ R_\eps(X)\ \leq\ 11.408479805007
 \quad\text{bits per sample}.
\end{equation}
The displayed interval has width less than $4.795\times10^{-7}$
bits per sample. It is a certificate for the process $\eps$-entropy regardless
of how any separately implemented MRR is initialized. A
lattice-initialized MRR has rate in the same bracket at every
capacity. For arbitrary initialization with $\omega=8$, the additional
volume term in \eqref{eq.ar-general-rate} is approximately
$0.005838116300$ bits per sample. These certificates use the specified
source parameters rather than an observed sample path.

\subsection{Invariant lattices and fixed-grid representation}\label{sec.fixed-grid}
Fix any $\theta\in\R$ and any $\eps>0$, and define
\begin{equation}\label{eq.lattice}
 \Lambda_{\theta,\eps}=\theta+2\eps\Z.
\end{equation}
The associated integer index and half-open cells are
\begin{equation}\label{eq.fixed-grid-index}
 K_{\theta,\eps}(x)=\floor{\frac{x-\theta}{2\eps}+\frac12},
 \qquad
 C_k=[\theta+(2k-1)\eps,\theta+(2k+1)\eps),\quad k\in\Z.
\end{equation}
Write $Q_\Lambda(x)=\theta+2\eps K_{\theta,\eps}(x)$ and
$\mathcal B_{\theta,\eps}=\{\theta+(2k+1)\eps:k\in\Z\}$
for the fixed set of cell boundaries.

\begin{proposition}[Invariant lattice family]\label{prop.lattice}
If $R_0\subset\Lambda_{\theta,\eps}$, every later MRR geometry is
contained in this lattice. Every nonempty address cell has length
$2\eps$, and, away from $\mathcal B_{\theta,\eps}$,
\begin{equation}\label{eq.lattice-reconstruction}
 \pi_{R_t}(x)=Q_\Lambda(x)\qquad(t\geq0).
\end{equation}
For a source having block densities, the volume term in
\eqref{eq.entropy-decomposition} is zero.  If the source satisfies the assumptions of
 Theorem~\ref{thm.near-floor}, the rate gap satisfies 
\begin{equation}\label{eq.lattice-near}
 0\leq R_{\mathrm{MRR}}-R_\eps\leq\frac{2\mathcal J\eps^2}{\pi^2\ln2}.
\end{equation}
These conclusions hold for every capacity $\omega\geq1$.
\end{proposition}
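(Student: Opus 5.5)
The argument is an induction on time for lattice invariance, a direct computation of the cells at a lattice geometry, and then an appeal to the entropy decomposition and Theorem~\ref{thm.near-floor} for the rate bound.

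\emph{Step 1: lattice invariance.} Suppose $R=\val(A_t)\subset\Lambda_{\theta,\eps}$. Proximity addresses leave the state unchanged, so only refinements need checking.
\begin{itemize}
\item Interior refinement. For consecutive lattice representatives, $s_{i+1}-s_i=2\eps k$ with $k\geq1$. If $k=1$ there is no uncovered gap. If $k\geq2$, the gap $G_i=(s_i+\eps,s_{i+1}-\eps)$ has length $d_i=2\eps(k-1)$, an exact multiple of $2\eps$. Hence $n_i=k-1$ and every gap cell has length exactly $2\eps$, so the last cell is not clipped by $b_i$. Each midpoint is $q_{ij}=s_i+2(j+1)\eps\in\Lambda_{\theta,\eps}$.
\item Exterior refinement. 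The reconstructions $s_{\omega-1}+2(j+1)\eps$ and $s_0-2(j+1)\eps$ lie in the lattice by inspection.
\end{itemize}
Since replacement only deletes one value and inserts the reconstruction, induction gives $R_t\subset\Lambda_{\theta,\eps}$ for all $t$, for every capacity $\omega$.

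\emph{Step 2: cell lengths and reconstruction.} At a lattice geometry the cells are as follows.
\begin{itemize}
\item When two representatives are $2\eps$ apart, their halos meet only at the single point $s_i+\eps$. The nearest-neighbor rule then leaves each side of that proximity cell with extent $\eps$.
\item When neighbors are further apart, the halo is unclipped on that side.
\item An extreme representative has extent $\eps$ on its unbounded side.
\end{itemize}
So every proximity cell is $[s_i-\eps,s_i+\eps]$ up to its endpoints, of length $2\eps$. Gap and exterior cells have length $2\eps$ by Step~1. All of these cells are exactly the lattice cells $C_k$ up to endpoints, and each is reconstructed at its lattice centre. This gives \eqref{eq.lattice-reconstruction} off $\mathcal B_{\theta,\eps}$. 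The only discrepancy from $Q_\Lambda$ is the ownership of boundary points, which is why the statement excludes $\mathcal B_{\theta,\eps}$.

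\emph{Step 3: rate gap.} Every side length satisfies $\ell_t(S_{1:t})=2\eps$, so the volume term $\sum_t\E\log_2(2\eps/\ell_t)$ in \eqref{eq.entropy-decomposition} vanishes identically. Under the hypotheses of Theorem~\ref{thm.near-floor}, the integrability conditions of Proposition~\ref{prop.rectangle-identity} hold, as verified in that theorem's proof. Lemma~\ref{lem.rectangle-kl} then gives
\[
 H(S_{1:n})\leq h(X_{1:n})-n\log_2(2\eps)+\frac{2\eps^2J_n}{\pi^2\ln2}.
\]
The Shannon lower bound \eqref{eq.shannon-lower} gives the matching lower bound $C_{n,\eps}(X)\geq h(X_{1:n})-n\log_2(2\eps)$. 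Dividing by $n$ and passing to the limit, using \eqref{eq.fisher-rate} and the finite entropy rate $\overline h(X)$, yields \eqref{eq.lattice-near}. Nonnegativity of the gap is \eqref{eq.causal-sandwich}.

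\emph{Main obstacle.} The delicate point is Step~2's boundary bookkeeping. Adjacent representatives at distance exactly $2\eps$, together with the tie rule favoring the smaller value, must be checked to produce full-length cells that agree with the half-open $C_k$ away from $\mathcal B_{\theta,\eps}$. Boundary points form a Lebesgue-null set, so under block densities they do not affect the entropy identity. That is enough for this proposition.
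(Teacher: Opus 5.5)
Your proposal is correct and follows essentially the same route as the paper. You use the gap length $2\eps(k-1)$ between consecutive lattice representatives to get full-width cells with lattice midpoints and induct on refinements. You then apply the entropy decomposition with zero volume penalty, bound the density term by Lemma~\ref{lem.rectangle-kl}, and finish with the Shannon lower bound and a limsup. The paper checks the integrability hypotheses directly: a unit interval meets boundedly many cells, and $\log_2 v=n\log_2(2\eps)$ is constant. Your appeal to the proof of Theorem~\ref{thm.near-floor} amounts to the same verification.
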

\begin{proof}
Consecutive stored lattice values are separated by $2m\eps$ for an
integer $m\geq1$. Their uncovered gap has length $2(m-1)\eps$.
The halo-edge subdivision therefore gives exactly $m-1$ full-width
cells, centered at the missing lattice values between them. If $m=1$
the gap is empty. Each proximity cell has full width $2\eps$, and
the exterior formulas give the remaining lattice cells. Thus every
refinement midpoint belongs to the lattice. The update preserves the
lattice geometry by induction. The same argument covers $\omega=1$,
when there are no interior gaps.

The cell interiors coincide with the interiors of the $C_k$, proving
\eqref{eq.lattice-reconstruction}. The proximity-priority convention
can change the owner of a boundary as the stored subset changes, but
does not change cell lengths. Every side of every address rectangle
therefore has length $2\eps$. A unit interval meets a uniformly
bounded number of these cells, so $H(\floor{X_1})<\infty$ ensures
the required address entropies are finite by the argument in
Theorem~\ref{thm.near-floor}. Also $\log_2v=n\log_2(2\eps)$ is
constant. The entropy identity therefore applies. Its volume penalty is zero, and its density-variation
term is at most $2\eps^2J_n/(\pi^2\ln2)$ by
Lemma~\ref{lem.rectangle-kl}. Divide by $n$, take the limsup,
and use \eqref{eq.fisher-rate} to obtain \eqref{eq.lattice-near}.
\end{proof}
\begin{remark}[Grid offset and entropy at scale $2\varepsilon$]
\label{rem:offset}
By Proposition~\ref{prop.lattice}, lattice initialization reproduces
the reconstruction map of a uniform quantizer of step $2\varepsilon$
whose offset $\theta$ is fixed by $R_0$. For a source with a density,
the first address of a lattice-initialized MRR is therefore, almost
surely, in bijection with the lattice index
\[
  K^{\theta}(X_1) = \Bigl\lfloor \frac{X_1-\theta}{2\varepsilon}
                    + \frac12 \Bigr\rfloor .
\]
Thus $H(S_1) = H(K^{\theta}(X_1))$ depends on $\theta$, but not on the
capacity or on which lattice values are stored. When $H(\lfloor X_1\rfloor)<\infty $, averaging over a
uniformly distributed offset gives the entropy of $X_1$ at scale
$2\varepsilon$ in the sense of Breuillard and
Varj\'u~\cite[Section~2.2]{BreuillardVarju2019}:
\[
  \frac{1}{2\varepsilon}\int_0^{2\varepsilon}
     H\bigl(K^{\theta}(X_1)\bigr)\,d\theta
  \;=\; \int_0^1 H\Bigl(\Bigl\lfloor \frac{X_1}{2\varepsilon} + t \Bigr\rfloor\Bigr)\,dt
  \;=\; h(X_1 + I) - \log_2(2\varepsilon),
\]
where $I$ is uniform on $[0,2\varepsilon]$ and independent of $X_1$.
The first equality is the change of variable
$t = \tfrac12 - \theta/(2\varepsilon)$ modulo one.
For the second, put $r=2\varepsilon$ and $W=X_1+I$. Its density is
$f_W(w)=r^{-1}\,\mathbb{P}(w-r\le X_1\le w)\le r^{-1}$. For almost every $t$,
$\mathbb{P}\bigl(\lfloor X_1/r+t\rfloor=k\bigr)=r f_W\bigl(r(k+1-t)\bigr)$ for
every $k$. The terms $-rf_W\log_2(rf_W)$ are nonnegative, so summing over $k$,
integrating over $t\in[0,1)$ and applying Tonelli's theorem gives
$-\int f_W\log_2(rf_W)\,dw=h(W)-\log_2 r$. This is the identity
of~\cite[Lemma~10]{BreuillardVarju2019}, stated there for bounded $X_1$. Both sides are finite since $H(\lfloor X_1\rfloor)<\infty$.

Every half-open cell of one offset meets at most two cells of
any other, so the overlap inequality of Section~\ref{sec:overlap}
shows that $H(K^{\theta}(X_1))$ differs from this average by at most
one bit (compare the translation estimate
in~\cite[Lemma~3.2]{Hochman2014}). The offset-averaged quantity is
therefore a natural fixed-grid benchmark for adaptive, non-lattice
geometries: it does not depend on an arbitrary choice of $\theta$,
whereas optimizing $\theta$ requires knowledge of the source.
\end{remark}
For arbitrary source laws, boundary ownership can be made consistent
with \eqref{eq.fixed-grid-index}. On a lattice geometry $R$, assign
the entire cell $C_k$ the local address of its center:
\begin{equation}\label{eq.consistent-lattice-address}
 S_R^\circ(x)=S_R\bigl(Q_\Lambda(x)\bigr).
\end{equation}
Retain the reconstruction map $d_R$ and the update map $V$. The address
is a proximity address precisely when the center belongs to $R$;
otherwise it is the refinement address centered at that lattice value.
Outside the invariant lattice family, retain the rules of
Section~\ref{sec.mrr}. This defines the consistent-boundary MRR on
the original state space. Its maps are Borel, its reconstruction error
is at most $\eps$, and it is still a causal quantizer with a shared state satisfying properties (P1)-(P4). Indeed, the modification changes only the
ownership of scalar cell endpoints. A refinement still inserts a new
lattice value at distance at least $2\eps$ from every retained value,
using the same eviction and timestamp rule.

\begin{proposition}[Causal equivalence with the fixed grid]
\label{prop.lattice-bijection}
Start from any shared deterministic $A_0$ whose $\omega$ distinct
representatives lie in $\Lambda_{\theta,\eps}$. Put
$K_t=K_{\theta,\eps}(X_t)$ and let $S_t^\circ$ be the addresses of
the consistent-boundary MRR. For every input law and every $n\geq1$,
the strings $K_{1:n}$ and $S_{1:n}^\circ$ determine each other, and
\begin{equation}\label{eq.lattice-string-entropy}
 H(S_{1:n}^\circ)=H(K_{1:n}).
\end{equation}
The entropies may be infinite. If
\begin{equation}\label{eq.lattice-null-boundaries}
 \Prob(X_t\in\mathcal B_{\theta,\eps})=0\qquad(t\geq1),
\end{equation}
the original MRR obeys the same identity with $S_{1:n}$ in place of
$S_{1:n}^\circ$.

If $X$ is stationary and $H(K_1)<\infty$, the common normalized
block entropy has a limit equal to the entropy rate
$\overline H(K)=\lim_n H(K_{1:n})/n$. This rate is independent of
the capacity, the initial stored subset, and its timestamp ordering.
In particular, under \eqref{eq.lattice-null-boundaries},
\begin{equation}\label{eq.lattice-rate-identity}
 R_{\mathrm{MRR}}(X;A_0)=\overline H(K).
\end{equation}
\end{proposition}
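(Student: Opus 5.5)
We prove the bijection by induction on $t$, tracking the shared state along the way. The induction hypothesis at time $t-1$ is that the state $A_{t-1}$ of the consistent-boundary MRR is a deterministic function of $K_{1:t-1}$ (equivalently of $S^\circ_{1:t-1}$), and that its geometry $R_{t-1}$ lies in $\Lambda_{\theta,\eps}$. The base case is $A_0$ itself, which is deterministic with $R_0\subset\Lambda_{\theta,\eps}$. Proposition~\ref{prop.lattice} shows that the lattice is invariant, and this invariance is unaffected by boundary ownership because the consistent rule only reassigns endpoints. So at every step we are on the invariant lattice family, where \eqref{eq.consistent-lattice-address} applies.

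Fix a lattice geometry $R$. By \eqref{eq.consistent-lattice-address}, $S^\circ_R(x)=S_R(Q_\Lambda(x))$, and $Q_\Lambda(x)=\theta+2\eps K_{\theta,\eps}(x)$. Hence $S^\circ_t$ is a function of $(A_{t-1},K_t)$. Conversely, $d_R(S^\circ_R(x))=\pi_R(Q_\Lambda(x))$. Lattice points lie in the interiors of their cells, so Proposition~\ref{prop.lattice} and idempotence \eqref{eq.idempotence} give $\pi_R(Q_\Lambda(x))=Q_\Lambda(x)$. Therefore $K_t=(d_R(S^\circ_t)-\theta)/(2\eps)$ with $R=\val(A_{t-1})$, so $K_t$ is a function of $(A_{t-1},S^\circ_t)$. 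In both directions the next state is $A_t=V(A_{t-1},S^\circ_t)$, which closes the induction. It follows that $K_{1:n}$ and $S^\circ_{1:n}$ are deterministic functions of each other, and entropy is invariant under bijections of countably valued variables, including when it is infinite. This gives \eqref{eq.lattice-string-entropy}.

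For the original MRR, note that the original address $S_R(x)$ and the consistent address $S^\circ_R(x)$ agree for every $x\notin\mathcal B_{\theta,\eps}$. The reason is that both partitions have the same cells up to endpoints, and those endpoints lie in $\mathcal B_{\theta,\eps}$. On the event $\{X_t\notin\mathcal B_{\theta,\eps}\ \forall t\leq n\}$, induction then shows that the two systems produce identical addresses and states. Under \eqref{eq.lattice-null-boundaries} this event has probability one, so $S_{1:n}=S^\circ_{1:n}$ almost surely and the entropies coincide.

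For the rate statement, stationarity of $X$ makes $K$ a stationary countably valued process. With $H(K_1)<\infty$, subadditivity gives $H(K_{1:n})/n\to\overline H(K)$; this is standard, see \cite{CoverThomas2006}. The sequence $K$ does not involve $\omega$, $R_0$, or the timestamps, so the limit is independent of them. Combined with the identity above and the definition \eqref{eq.mrr-rate}, this yields \eqref{eq.lattice-rate-identity}. The main point needing care is the inversion step, namely recovering $K_t$ from $S^\circ_t$. It relies on the consistent assignment sending the whole half-open cell $C_k$ to the address whose reconstruction is the center $\theta+2\eps k$. That in turn requires checking that every refinement address at a lattice state reconstructs exactly at a lattice point, which the proof of Proposition~\ref{prop.lattice} already establishes.
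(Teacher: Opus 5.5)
Your proposal is correct and follows the paper's proof essentially step for step. The shared steps are the forward recursion $S^\circ_t=S_{R_{t-1}}(\theta+2\eps K_t)$ followed by $V$, and recovery of $K_t$ from the synchronized reconstruction $\theta+2\eps K_t$. They also include almost-sure agreement of the original and consistent-boundary addresses off $\mathcal B_{\theta,\eps}$ by induction on the common state, and the subadditive entropy-rate limit for the stationary process $K$, whose law involves neither $\omega$ nor $A_0$. You go slightly further than the paper by checking that the consistent rule keeps every state in the lattice family, since each refinement address at a lattice geometry reconstructs at a lattice point; the paper only asserts this when defining the consistent-boundary MRR.
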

\begin{proof}
Given $K_{1:n}$ and $A_0$, recursively compute the address at step $t$
as $S_{R_{t-1}}(\theta+2\eps K_t)$, and then apply $V$ to obtain
$A_t$. By \eqref{eq.consistent-lattice-address}, this is precisely
the address and state generated by the consistent-boundary MRR.
Conversely, given its address string and $A_0$, synchronized decoding
recovers each reconstruction $\widehat X_t$. The reconstruction of
the selected cell is $\theta+2\eps K_t$, so
$K_t=(\widehat X_t-\theta)/(2\eps)$. These recursions give inverse
deterministic maps on the sets of realizable strings and prove
\eqref{eq.lattice-string-entropy}.

Under \eqref{eq.lattice-null-boundaries}, with probability one none
of the finitely many inputs in a specified block is a lattice boundary.
The original and consistent-boundary rules agree at the first step and,
by induction on their common state, at all subsequent steps of the
block. Thus their address strings agree almost surely.

The coordinatewise fixed map $K_{\theta,\eps}$ preserves stationarity.
The entropy-rate limit for a stationary countable-alphabet process with
finite one-letter entropy exists by the standard subadditivity argument
\cite{CoverThomas2006}. The block identities then identify the MRR
rate with this limit. Since the law of $K$ involves neither $\omega$
nor $A_0$, the asserted independence follows.
\end{proof}
\begin{remark}[Equal entropy, different representations]
Proposition~\ref{prop.lattice-bijection} identifies the address and grid-index strings
up to a causal bijection, so their block entropies and entropy rates coincide
at every capacity. It does not imply that coders with limited context or
finitely many samples perform equally on both. The retained state and the
state-relative addresses change with $\omega$ even though the grid sequence
does not, and this changes which information a simple coding model sees. At
capacity one, for instance, the address is the increment $K_t-K_{t-1}$ of the
grid index, so MRR performs differential coding of the indices. An increment
model has far fewer parameters than a model conditioned on the previous index
and can be learned from fewer samples, although conditioning is never worse in
the limit, since
$H(K_t\mid K_{t-1})=H(K_t-K_{t-1}\mid K_{t-1})\leq H(K_t-K_{t-1})$.   The entropy rate statements  here concern ideal block coding;
  finite-sample coding performance requires a separate analysis.
\end{remark}
\begin{corollary}[Sharp one-bit lattice comparison]
\label{cor.lattice-one-bit}
For every probability law $\mu$ with $F_\eps(\mu)<\infty$,
\begin{equation}\label{eq.lattice-scalar-one-bit}
 F_\eps(\mu)\leq H_\mu(K_{\theta,\eps}(X))
                 \leq F_\eps(\mu)+1.
\end{equation}
For every random vector $X_{1:n}$ with $C_{n,\eps}(X)<\infty$,
\begin{equation}\label{eq.lattice-block-one-bit}
 C_{n,\eps}(X)\leq H(K_{1:n})\leq C_{n,\eps}(X)+n.
\end{equation}
Consequently, if $X$ is stationary and $C_{1,\eps}(X)<\infty$,
\begin{equation}\label{eq.lattice-process-one-bit}
 0\leq\overline H(K)-R_\eps(X)\leq1.
\end{equation}
All three constants are sharp. At any fixed lattice state, the scalar
inequality also holds for the local address in place of the grid index.
By Proposition~\ref{prop.lattice-bijection}, the block and process
conclusions hold for consistent-boundary MRR at every capacity, and for
the original MRR under \eqref{eq.lattice-null-boundaries}. These statements
do not identify unconditional single-symbol entropies after mixing over
different states. Under the density and
entropy assumptions of Proposition~\ref{prop.lattice}, lattice initialization gives the
combined bound
\begin{equation}\label{eq.lattice-combined-gap}
 0\leq R_{\mathrm{MRR}}-R_\eps
 \leq\min\left\{1,\frac{2\mathcal J\eps^2}{\pi^2\ln2}\right\}.
\end{equation}
\end{corollary}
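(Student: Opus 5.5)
The plan is to use the overlap machinery with $m=2$ in place of $m=4$. First I will establish the scalar fact behind everything: a closed interval $I=[y-\eps,y+\eps]$ of length $2\eps$ meets at most two of the half-open cells $C_k$ in \eqref{eq.fixed-grid-index}. If $I$ met three, it would contain the whole middle cell, of length $2\eps$, together with a point of each neighbour. Half-openness then forces length strictly greater than $2\eps$: the left neighbour ends before the left endpoint of the middle cell, and the right neighbour starts at its right endpoint, so the point in the left neighbour lies strictly to the left of the middle cell. This is a contradiction. Every admissible competitor cell in \eqref{eq.scalar-floor} lies in such an interval, so it meets at most two grid cells. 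The partition-overlap inequality \eqref{eq.partition-overlap} then gives $H_\mu(K_{\theta,\eps}(X))\le H_\mu(Q(X))+1$. Taking the infimum over $Q$ yields the upper bound in \eqref{eq.lattice-scalar-one-bit}. The lower bound holds because the grid is admissible. At a fixed lattice state the local address is a bijective relabelling of the grid index, via \eqref{eq.consistent-lattice-address}, so the same inequality holds for it.

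For the block statement I will invoke Proposition~\ref{prop.causal-overlap} applied to the fixed grid, viewed as a single-state causal quantizer with $m=2$. This gives \eqref{eq.lattice-block-one-bit}. Dividing by $n$ and using \eqref{eq.process-floor} gives \eqref{eq.lattice-process-one-bit}, since $H(K_{1:n})/n\to\overline H(K)$. Finiteness of $H(K_1)$ follows from $H(K_1)\le C_{1,\eps}+1<\infty$. Proposition~\ref{prop.lattice-bijection} transfers the block and process statements to consistent-boundary MRR, and to the original MRR under \eqref{eq.lattice-null-boundaries}. The combined bound \eqref{eq.lattice-combined-gap} is then the minimum of this one-bit bound and \eqref{eq.lattice-near}. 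Here I need $C_{1,\eps}<\infty$, which follows from $H(\lfloor X_1\rfloor)<\infty$ by the finiteness argument in Theorem~\ref{thm.near-floor}.

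Sharpness is the step that needs a construction. For the scalar constant, I would put mass $1/2$ on each of the two points $\theta+\eps-\eta$ and $\theta+\eps$, for small $\eta>0$. These lie in different grid cells, so the grid index has entropy one bit, while a single faithful cell contains both points, so $F_\eps(\mu)=0$. For the block and process constants, I would take $X$ i.i.d.\ with this two-point marginal. Then $H(K_{1:n})=n$. The single reconstruction vector with every coordinate equal to $\theta+\eps-\eta/2$ is faithful on the support, and zero-probability cells cover the rest of the space, so $C_{n,\eps}=0$ and $R_\eps=0$. This shows all three constants are attained exactly, not merely approached. The main subtlety I expect is the boundary bookkeeping: the two-cell count relies on the half-open convention in \eqref{eq.fixed-grid-index}, which is exactly why the transfer to the original MRR needs \eqref{eq.lattice-null-boundaries}. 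The sharpness example deliberately puts an atom on a grid boundary, so its MRR statement is made for the consistent-boundary version.
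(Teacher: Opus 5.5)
Your proposal is correct and follows the paper's proof essentially step for step. You use the two-cell count for the half-open grid cells, then the product branching bound with the overlap inequality; you get this by citing Proposition~\ref{prop.causal-overlap} with $m=2$ for the single-state grid, which is exactly the argument the paper writes out inline. The limit passage, the transfer via Proposition~\ref{prop.lattice-bijection}, and the i.i.d.\ two-point sharpness example also match. The only difference is that the paper puts its atoms at $\theta+\eps\pm\delta$, off the grid boundaries, so its example also shows sharpness for the original MRR; yours puts an atom at $\theta+\eps$ and so, as you note, covers only the grid index and the consistent-boundary version, which is all the stated sharpness claims require.
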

\begin{proof}
An interval of length at most $2\eps$ meets at most two of the
half-open cells $C_k$. To verify the endpoint case, suppose that it
contained $x\in C_j$ and $y\in C_{j+2}$. Then
$x<\theta+(2j+1)\eps$ and
$y\geq\theta+(2j+3)\eps$, implying $y-x>2\eps$, a
contradiction. Cells farther apart give the same contradiction.

Each cell of an admissible block quantizer lies in a product of $n$
closed intervals of length $2\eps$, centered at its reconstruction.
Each coordinate interval meets at most two grid cells, so the block
cell meets at most $2^n$ cells of the product grid. Inequality
\eqref{eq.partition-overlap}
therefore bounds the grid entropy by the competitor's entropy plus
$n$. Taking the infimum over competitors proves the upper bound in
\eqref{eq.lattice-block-one-bit}. The product grid itself satisfies
the error constraint, giving the lower bound. The scalar statement is
the case $n=1$. In particular, finite $C_{1,\eps}$ implies
$H(K_1)<\infty$. For a stationary source, divide the block bounds
by $n$ and use the two entropy-rate limits to obtain
\eqref{eq.lattice-process-one-bit}.

For sharpness, fix $0<\delta<\eps$ and let the inputs be independent,
each taking $\theta+\eps-\delta$ and
$\theta+\eps+\delta$ with equal probability. The grid indices are
independent fair binary variables. A constant reconstruction
$\theta+\eps$ is faithful on the source support, and its product
reconstruction is faithful on every block support. The complement can
be partitioned into additional faithful cells of probability zero.
Thus $C_{n,\eps}=0$ and $H(K_{1:n})=n$ for every $n$, proving
sharpness at all three levels. Finally,
\eqref{eq.lattice-combined-gap} combines
\eqref{eq.lattice-process-one-bit},
\eqref{eq.lattice-rate-identity}, and \eqref{eq.lattice-near}.
\end{proof}
\begin{remark}[Price of causality under the pathwise criterion]
\label{rem.price-causality}
Lattice-initialized MRR is a causal code. For a stationary source satisfying
the assumptions of~\eqref{eq.lattice-combined-gap},
\eqref{eq.causal-sandwich} therefore gives
\[
  0\le R^{\mathrm{c}}_\varepsilon(X)-R_\varepsilon(X)
  \le\min\Bigl\{1,\ \frac{2{\mathcal J}\varepsilon^2}{\pi^2\ln 2}\Bigr\},
\]
and the one-bit bound alone requires only $C_{1,\varepsilon}(X)<\infty$.
Under squared error, Linder and Zamir~\cite{LinderZamir2006} show that the
price of causality tends at high resolution to the space-filling loss of the
uniform scalar quantizer, $\tfrac12\log_2(\pi e/6)\approx0.254$ bit per
sample. That loss compares a cube with a Euclidean ball. Under the sup norm,
the error ball is itself a cube, which the product grid tiles exactly. The
only remaining gap is the density term, which vanishes at rate
$\varepsilon^2$. At a fixed tolerance the price of causality can be strictly
positive even for independent inputs; see the example in
Appendix~\ref{app.computation}.
\end{remark}

The boundary condition is essential for the one-bit comparison with
the original MRR. For example, take $\eps=1$, $\omega=1$,
$R_0=\{0\}$, and let $X$ be uniform on $\{1,2,3\}$. Its three
original MRR reconstructions are respectively $0$, $2$, and $4$,
whereas the constant reconstruction $2$ is faithful on the source
support. Hence $H(S_1)=\log_2 3$ and $F_1(\mu)=0$.

For arbitrary source laws, original MRR with lattice initialization
still has the unconditional bounds
\begin{equation}\label{eq.lattice-original-three}
 C_{n,\eps}(X)\leq H(S_{1:n})\leq C_{n,\eps}(X)+n\log_2 3.
\end{equation}
Indeed, all scalar cells have length $2\eps$; an interval of that
length can meet at most three such cells, regardless of their endpoint
ownership. Meeting four would put the two middle full-width cells
between its extreme intersections, requiring length at least $4\eps$.
Proposition~\ref{prop.causal-overlap} applies with $m=3$ and gives
\eqref{eq.lattice-original-three}. For stationary sources with finite
$C_{1,\eps}$ it also gives
$0\leq R_{\mathrm{MRR}}-R_\eps\leq\log_2 3$.

\begin{corollary}[Capacity one is a fixed grid]
\label{cor.capacity-one-grid}
Let $\omega=1$ and let the fixed initial representative be $s_0$.
Every subsequent representative belongs to
$s_0+2\eps\Z$, and every address cell has length $2\eps$.
With consistent boundary ownership, for every input law and every
$n\geq1$,
\begin{equation}\label{eq.capacity-one-grid}
 H(S_{1:n}^\circ)
 =H\bigl(K_{s_0,\eps}(X_1),\ldots,K_{s_0,\eps}(X_n)\bigr).
\end{equation}
The same identity holds for the proximity-priority MRR whenever
$\Prob(X_t\in\mathcal B_{s_0,\eps})=0$ for every $t$.
Under the regularity and entropy assumptions of
Proposition~\ref{prop.lattice}, the bound
\eqref{eq.lattice-near} therefore holds from every fixed
capacity-one initial state.
\end{corollary}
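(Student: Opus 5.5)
The corollary is the case $\omega=1$ of the lattice results, so the plan is to check the lattice hypothesis for a single representative and then invoke Propositions~\ref{prop.lattice} and~\ref{prop.lattice-bijection}.

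\emph{Step 1: the lattice hypothesis.} With $\omega=1$ the geometry is the single point $R_0=\{s_0\}$. Trivially $R_0\subset\Lambda_{s_0,\eps}=s_0+2\eps\Z$, taking $\theta=s_0$. The update is easy to see directly.
\begin{itemize}
\item A proximity event leaves $\{s\}$ unchanged.
\item An exterior refinement inserts $s\pm2(j+1)\eps$ and evicts $s$, the oldest and only stored value.
\item There are no interior gaps.
\end{itemize}
Induction therefore keeps every representative in $s_0+2\eps\Z$. The cells are the proximity halo $[s-\eps,s+\eps]$, of length $2\eps$, together with exterior cells, each of length $2\eps$ by construction. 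This is exactly the content of Proposition~\ref{prop.lattice} at $\omega=1$, which the paper says covers $\omega=1$.

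\emph{Step 2: the entropy identity.} Apply Proposition~\ref{prop.lattice-bijection} with $\theta=s_0$ to the consistent-boundary MRR. It gives that $S^\circ_{1:n}$ and $K_{s_0,\eps}(X_{1:n})$ determine each other for every input law, hence \eqref{eq.capacity-one-grid}, with both sides possibly infinite. For the proximity-priority rule, the null-boundary clause of the same proposition gives almost-sure equality of the address strings when $\Prob(X_t\in\mathcal B_{s_0,\eps})=0$. This works because the boundary sets coincide: the cell edges are $s+(2k+1)\eps$ with $s\in s_0+2\eps\Z$, which is exactly $\mathcal B_{s_0,\eps}$.

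\emph{Step 3: the rate bound.} Under the regularity assumptions of Theorem~\ref{thm.near-floor}, Proposition~\ref{prop.lattice} gives \eqref{eq.lattice-near} for any lattice initialization. Every capacity-one initial state is such an initialization, so the bound holds from every fixed capacity-one state.

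The only delicate point is that \eqref{eq.lattice-near} as proved in Proposition~\ref{prop.lattice} uses the original MRR without a boundary hypothesis. That is legitimate because all cells have length $2\eps$ regardless of which cell owns each endpoint, so the volume term vanishes. For a source with densities, the boundary set is in any case null. I expect no real obstacle here, only the bookkeeping of boundary ownership.
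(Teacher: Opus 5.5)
Your proposal is correct and follows the paper's proof essentially verbatim. The singleton $\{s_0\}$ lies in $\Lambda_{s_0,\eps}$, so Proposition~\ref{prop.lattice} supplies invariance, full cell widths and the bound \eqref{eq.lattice-near}, and Proposition~\ref{prop.lattice-bijection} with $\theta=s_0$ supplies the address identity and its boundary qualification. Your explicit check of the $\omega=1$ update, the match of the cell edges with $\mathcal B_{s_0,\eps}$, and your remark that endpoint ownership leaves cell lengths (hence the vanishing volume term) unchanged are all accurate additions that the paper leaves implicit.
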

\begin{proof}
The singleton initial geometry is contained in
$\Lambda_{s_0,\eps}$. Proposition~\ref{prop.lattice} gives invariance,
full cell widths, and the rate bound. The address identity and its
boundary qualification follow from
Proposition~\ref{prop.lattice-bijection}.
\end{proof}

\subsection{Exact and near-exact floor evaluation}\label{sec.exact-floor}
\begin{theorem}[Exact floor for separated uniform cells]
\label{thm.exact-separated}\label{thm.exact-lattice}
Let $\mathcal I$ be a finite or countable index set and let
$(c_j)_{j\in\mathcal I}$ be real centers satisfying
\begin{equation}\label{eq.exact-center-separation}
 |c_i-c_j|\geq2\eps\qquad(i\ne j).
\end{equation}
Let $(J_t)_{t\in\Z}$ be stationary, with values in $\mathcal I$ and
$H(J_1)<\infty$. Let $(U_t)$ be independent, identically distributed
with law $\Unif[-\eps,\eps)$, independently of the whole process
$J$. Define
\begin{equation}\label{eq.exact-source}
 X_t=c_{J_t}+U_t.
\end{equation}
For every $n\geq1$,
\begin{equation}\label{eq.exact-block}
 C_{n,\eps}(X)=H(J_{1:n})
             =h(X_{1:n})-n\log_2(2\eps).
\end{equation}
Consequently,
\begin{equation}\label{eq.exact-rate}
 R_\eps(X)=\overline H(J)
         =\overline h(X)-\log_2(2\eps),
\end{equation}
where $\overline H(J)=\lim_n H(J_{1:n})/n$ and
$\overline h(X)=\lim_n h(X_{1:n})/n$.
MRR attains the block floor and process $\eps$-entropy  under either of the
following fixed shared initializations, with arbitrary distinct
timestamps:
\begin{enumerate}[label=(\roman*),leftmargin=*]
\item $\mathcal I$ has $m<\infty$ elements, the capacity is
$\omega=m$, and $R_0=\{c_j:j\in\mathcal I\}$;
\item all centers lie in $\Lambda_{\theta,\eps}$ for some
$\theta\in\R$,  $\omega\geq1$ and $R_0$
consists of any $\omega$ distinct values in that lattice.
\end{enumerate}
In either case,
\begin{equation}\label{eq.exact-mrr-attainment}
 H(S_{1:n})=H(J_{1:n}),\qquad
 R_{\mathrm{MRR}}(X;A_0)=R_\eps(X).
\end{equation}
In case (i), the state remains equal to $A_0$ almost surely.
\end{theorem}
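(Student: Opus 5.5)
The proof is a sandwich: an upper bound from the explicit partition by center cells, a lower bound from the Shannon lower bound \eqref{eq.shannon-lower}, and an exact computation of $h(X_{1:n})$ showing that the two bounds coincide.

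\textbf{Differential entropy.} Put $I_j=[c_j-\eps,c_j+\eps)$. By \eqref{eq.exact-center-separation} these intervals are pairwise disjoint. Since $U_t\in[-\eps,\eps)$, the index $J_t$ is a deterministic function of $X_t$, namely the unique $j$ with $X_t\in I_j$. The block density is therefore
\[
 f_n(x)=\Prob(J_{1:n}=j_{1:n})\,(2\eps)^{-n}\quad\text{on }\textstyle\prod_t I_{j_t}.
\]
Integrating $-f_n\log_2 f_n$ over this countable disjoint family of boxes gives
\[
 h(X_{1:n})=H(J_{1:n})+n\log_2(2\eps).
\]
Both sides are finite because $H(J_{1:n})\le nH(J_1)<\infty$.

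\textbf{Lower bound on $C_{n,\eps}$.} The Shannon lower bound \eqref{eq.shannon-lower} now gives
\[
 C_{n,\eps}(X)\ge H(J_{1:n}).
\]
It applies because $f_n$ has finite differential entropy. Should the integrability hypotheses behind \eqref{eq.shannon-lower} be delicate here, I would instead argue directly. Every faithful cell $B$ lies in a cube of side $2\eps$, so its probability is at most $\sum_{j_{1:n}}p_{j_{1:n}}(2\eps)^{-n}\,|B\cap\prod I_{j_t}|$. This is a convexity (grouping) argument showing that any partition into cells of volume at most $(2\eps)^n$ has entropy at least $\int -f_n\log_2 f_n - n\log_2(2\eps)$.

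\textbf{Upper bound on $C_{n,\eps}$.} The partition by the boxes $\prod_t I_{j_t}$, with reconstruction $(c_{j_t})_t$, is faithful, using closed error balls. The complement has probability zero and can be covered by additional null faithful cells. This partition has entropy exactly $H(J_{1:n})$, which proves \eqref{eq.exact-block}. Dividing by $n$ and using \eqref{eq.process-floor} together with the existence of $\overline H(J)$ for a stationary process with $H(J_1)<\infty$ gives \eqref{eq.exact-rate}.

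\textbf{MRR attainment.} It suffices to show that $S_{1:n}$ and $J_{1:n}$ determine each other almost surely; then \eqref{eq.exact-mrr-attainment} follows from \eqref{eq.causal-sandwich}.

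In case (i), $R_0$ consists of all the centers, which are $2\eps$-separated. Every halo $[c_j-\eps,c_j+\eps]$ therefore lies in the proximity region. The only overlap of adjacent halos is at a shared endpoint $c_j+\eps=c_{j+1}-\eps$, which occurs with probability zero. Hence, almost surely, every input lies in the proximity cell of $c_{J_t}$. No refinement ever occurs, the state stays equal to $A_0$, and $S_t=(\mathrm P,\text{rank of }c_{J_t})$ is a fixed bijective function of $J_t$.

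In case (ii), Proposition~\ref{prop.lattice-bijection} applies. The boundary set $\mathcal B_{\theta,\eps}$ is null because $U_t$ has a density, and the centers lie in $\Lambda_{\theta,\eps}$, so every $I_j$ is exactly one grid cell $C_k$. Consequently $K_t$ is a bijective function of $J_t$, and
\[
 H(S_{1:n})=H(K_{1:n})=H(J_{1:n}).
\]

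\textbf{Main obstacle.} The main care point is the lower bound. One must verify that \eqref{eq.shannon-lower} applies when $\mathcal I$ is countable, where $f_n$ is piecewise constant on infinitely many boxes. This needs $H(J_{1:n})<\infty$, which holds, so I expect it to go through cleanly. The remaining steps are a matter of bookkeeping for boundary ownership.
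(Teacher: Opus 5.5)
Your proposal is correct and follows the paper's proof essentially step for step. Both compute $h(X_{1:n})=H(J_{1:n})+n\log_2(2\eps)$ exactly on the disjoint boxes $\prod_t I_{j_t}$, match the Shannon lower bound \eqref{eq.shannon-lower} against the center-reconstruction partition, use proximity-only dynamics in case (i), and apply Proposition~\ref{prop.lattice-bijection} with null grid boundaries in case (ii). The appeal to \eqref{eq.causal-sandwich} is unnecessary, since $H(S_{1:n})=H(J_{1:n})=C_{n,\eps}(X)$ already gives $R_{\mathrm{MRR}}(X;A_0)=R_\eps(X)$ directly.
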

\begin{proof}
Put $I_j=[c_j-\eps,c_j+\eps)$. The separation condition makes
these intervals pairwise disjoint. Conditional on $J_{1:n}=j_{1:n}$,
the input block is uniform on
$I_{j_1}\times\cdots\times I_{j_n}$, of volume $(2\eps)^n$.
Different index strings give disjoint rectangles. If
$p_{j_{1:n}}=\Prob(J_{1:n}=j_{1:n})$, the source density on that
rectangle is $p_{j_{1:n}}/(2\eps)^n$. Moreover,
$H(J_{1:n})\leq nH(J_1)<\infty$, and
\[
 \sum_{j_{1:n}:p_{j_{1:n}}>0}
 p_{j_{1:n}}\left|\log_2
          \frac{p_{j_{1:n}}}{(2\eps)^n}\right|
 \leq H(J_{1:n})+n|\log_2(2\eps)|<\infty.
\]
Thus the block differential entropy is well defined and direct
integration gives
\begin{equation}\label{eq.exact-differential}
 h(X_{1:n})=H(J_{1:n})+n\log_2(2\eps).
\end{equation}
The lower bound \eqref{eq.shannon-lower} therefore implies
$C_{n,\eps}(X)\geq H(J_{1:n})$. Conversely, reconstruct each
rectangle as $(c_{j_1},\ldots,c_{j_n})$. This is faithful throughout
the rectangle and has output entropy $H(J_{1:n})$. Intersect the
complement of their union with a fixed product grid of width
$2\eps$ to obtain additional Borel cells with faithful grid-center
reconstructions. These cells have source probability zero. This
extends the quantizer to every input block and proves
\eqref{eq.exact-block}.

For initialization (i), any input in the interior of $I_j$ is
strictly closer than $\eps$ to $c_j$ and strictly farther than
$\eps$ from every other stored center. Hence it emits the proximity
address of $c_j$ and leaves the state unchanged. The countable set
of interval endpoints has probability zero at each time; almost
surely it is never visited. Induction shows that $A_t=A_0$ for all
$t$ and that the address at each step is a fixed bijective label of
$J_t$. This proves \eqref{eq.exact-mrr-attainment} in case (i).

For initialization (ii), write $c_j=\theta+2\eps k_j$ with
distinct integers $k_j$. Almost surely,
$K_{\theta,\eps}(X_t)=k_{J_t}$, so the grid-index string and
$J_{1:n}$ determine one another. The source has a density, and
Proposition~\ref{prop.lattice-bijection} therefore gives
$H(S_{1:n})=H(J_{1:n})$. Finally, the entropy rate of a stationary
countable-alphabet process with finite one-letter entropy exists by
subadditivity \cite{CoverThomas2006}. Divide
\eqref{eq.exact-differential} and the block equalities by $n$ and
take limits. This proves \eqref{eq.exact-rate}, the existence and
finiteness of $\overline h(X)$, and the remaining rate assertions.
\end{proof}
\paragraph{Scalar entropy, block entropy, and the Shannon lower bound.}
Posner and Rodemich \cite{posner1971} distinguish the entropy
minimized over single-experiment partitions from the absolute
epsilon entropy obtained by coding independent experiments jointly.
Their diameter tolerance corresponds to $2\eps$ in the present
notation, and their natural-logarithm entropies are converted here
to bits. For a real-valued source with law $\mu$, their scalar
quantity corresponds to $F_\eps(\mu)$, whereas their absolute
quantity corresponds to $R_\eps$ for the independent process with
marginal law $\mu$. Their coding results give the latter quantity
an operational interpretation as the asymptotic information rate
required under the prescribed fidelity constraint.

In particular, the radial formulation identifies the independent-source
block $\eps$-entropy with the mutual-information infimum
\begin{equation}\label{eq.posner-radial-floor}
 R_\eps
 =
 \inf_{\substack{P_{Y\mid X}\\
                 \Prob(|X-Y|\leq\eps)=1}}
 I(X;Y),
 \qquad X\sim\mu,
\end{equation}
where reconstruction values range over $\R$.
This is the zero-distortion value for the violation distortion
$d_\eps(x,y)=\mathbf{1}_{\{|x-y|>\eps\}}$.
It is an identification with a rate-distortion optimization,
rather than with its Shannon lower bound.
For a source having a density and the requisite finite entropies,
the relevant inequalities are
\begin{equation}\label{eq.scalar-block-shannon-order}
 h(X)-\log_2(2\eps)
 \leq R_\eps
 \leq F_\eps(\mu).
\end{equation}
Neither inequality is an equality in general.
The scalar--block comparison of Posner and Rodemich
\cite[Theorem~2]{posner1971} yields relative agreement as the
entropy becomes large; it does not assert exact equality at a
fixed positive tolerance.

\begin{corollary}[Equality with the Shannon lower bound]
\label{cor.uniform-cell-shannon-equality}
Let $(c_j)_{j\in\mathcal I}$ satisfy
\eqref{eq.exact-center-separation}. Let $J$ have values in
$\mathcal I$ with $H(J)<\infty$, and let $U$ be independent of $J$
and uniform on $[-\eps,\eps)$. If
\[
 X=c_J+U,
 \qquad \mu=\mathcal L(X),
\]
then
\begin{equation}\label{eq.scalar-uniform-cell-equality}
 F_\eps(\mu)
 =H(J)
 =h(X)-\log_2(2\eps).
\end{equation}
For the independent process with marginal law $\mu$, these
quantities also equal $R_\eps$.

More generally, for the stationary process in
Theorem~\ref{thm.exact-separated}, with marginal law
$\mu=\mathcal L(X_1)$,
\begin{equation}\label{eq.stationary-uniform-cell-scalar}
 F_\eps(\mu)
 =H(J_1)
 =h(X_1)-\log_2(2\eps),
\end{equation}
and
\begin{equation}\label{eq.stationary-uniform-cell-rate}
 R_\eps
 =\overline H(J)
 =\overline h(X)-\log_2(2\eps),
\end{equation}
with $R_{\mathrm{MRR}}=R_\eps$ under either initialization in
Theorem~\ref{thm.exact-separated}. Finite, possibly non-lattice
centers may all be stored; for centers on a common lattice, every
positive capacity is sufficient with lattice initialization.
\end{corollary}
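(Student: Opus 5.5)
The plan is to treat this corollary as the case $n=1$ of Theorem~\ref{thm.exact-separated}, together with the identification $F_\eps(\mu)=C_{1,\eps}$ for a single real variable. First I will observe that the definition \eqref{eq.scalar-floor} of $F_\eps(\mu)$ coincides with the block floor \eqref{eq.block-floor} at $n=1$. Both are infima of output entropies over countable Borel partitions of $\R$ whose cells carry a reconstruction within closed distance $\eps$. So $F_\eps(\mathcal L(X_1))=C_{1,\eps}(X)$ for any process with marginal law $\mathcal L(X_1)$.

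For the single-variable statement, I will embed $(J,U)$ in the stationary setting of Theorem~\ref{thm.exact-separated}. Take $(J_t)$ i.i.d.\ with the law of $J$, which is trivially stationary with $H(J_1)=H(J)<\infty$. Take $(U_t)$ i.i.d.\ uniform on $[-\eps,\eps)$ and independent of $J$. Then $X_t=c_{J_t}+U_t$ is the independent process with marginal $\mu$, and $X_1$ has the law of $X$. Applying \eqref{eq.exact-block} with $n=1$ gives $C_{1,\eps}=H(J_1)=h(X_1)-\log_2(2\eps)$, which is \eqref{eq.scalar-uniform-cell-equality}. Applying \eqref{eq.exact-rate} gives $R_\eps=\overline H(J)=H(J)$, since the block entropies of an i.i.d.\ sequence are additive. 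Likewise $\overline h(X)=h(X_1)$ by independence, so the common value is also $R_\eps$ for the independent process. This also shows that both inequalities in \eqref{eq.scalar-block-shannon-order} are equalities here.

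For the stationary version, the same argument applies with the given $J$. Equation \eqref{eq.exact-block} at $n=1$ yields \eqref{eq.stationary-uniform-cell-scalar}, because only the marginal law of $X_1=c_{J_1}+U_1$ enters $C_{1,\eps}$. Equation \eqref{eq.exact-rate} is exactly \eqref{eq.stationary-uniform-cell-rate}. The attainment $R_{\mathrm{MRR}}=R_\eps$ under initializations (i) and (ii) is \eqref{eq.exact-mrr-attainment}. The closing sentences restate cases (i) and (ii) of the theorem: finite non-lattice centers can all be stored with $\omega=m$, and lattice centers work with every capacity.

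There is no real obstacle. The only point needing care is the identification of $F_\eps$ with $C_{1,\eps}$, including the zero-probability filler cells and the closed-ball convention, which match in the two definitions. The rest is bookkeeping: checking that finiteness of $H(J)$ gives the hypothesis $H(J_1)<\infty$ of the theorem.
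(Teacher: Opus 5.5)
Your proposal is correct and follows essentially the same route as the paper: apply Theorem~\ref{thm.exact-separated} to an i.i.d.\ copy of $(J,U)$, read off the scalar identity from its $n=1$ case via $F_\eps(\mu)=C_{1,\eps}$, use $H(J_{1:n})=nH(J)$ for the independent rate, and then apply the same theorem to a general stationary $J$ together with its MRR attainment statement. Your explicit check that the definitions of $F_\eps$ and $C_{1,\eps}$ coincide (closed balls, zero-probability filler cells) makes precise a step the paper uses without comment.
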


\begin{proof}
Apply Theorem~\ref{thm.exact-separated} to independent copies of
$J$ and $U$. At $n=1$, its block equality reads
\[
 F_\eps(\mu)=C_{1,\eps}(X)=H(J)
 =h(X)-\log_2(2\eps).
\]
For independent indices, $H(J_{1:n})=nH(J)$, so the same theorem
gives $R_\eps=H(J)$. Applied to an arbitrary stationary $J$, its
scalar case gives \eqref{eq.stationary-uniform-cell-scalar}, and
\eqref{eq.exact-rate} gives
\eqref{eq.stationary-uniform-cell-rate}. Its MRR attainment
statement supplies the claimed initialization conditions.
\end{proof}
For dependent center indices, the scalar $\varepsilon$-entropy involves
$H(J_1)$, while the process $\eps$-entropy involves $\overline H(J)$;
these quantities need not coincide. The equality with the
respective Shannon lower bounds follows from the uniform
within-cell structure at the specified tolerance.
For example, if $J$ is a stationary finite-state Markov chain with
stationary probabilities $\rho_i$ and transition probabilities $P_{ij}$,
the standard Markov entropy-rate formula \cite{CoverThomas2006} gives
the exact value
\begin{equation}\label{eq.exact-markov}
 R_\eps(X)=-\sum_{i,j}\rho_iP_{ij}\log_2P_{ij},
\end{equation}
with $0\log_2 0=0$. The exact-attainment class
\eqref{eq.exact-source} is specified at the chosen tolerance.
\begin{remark}[Histograms]\label{rem.histograms}
Suppose $\mu$ has a piecewise uniform density with bin width $2\varepsilon$: for some
$\theta\in\R$ and probabilities $(p_k)_{k\in\Z}$ with bin entropy
$H(K)=-\sum_k p_k\log_2 p_k<\infty$,
\[
  f(x)=\frac{p_k}{2\varepsilon},\qquad
  x\in[\theta+(2k-1)\varepsilon,\ \theta+(2k+1)\varepsilon).
\]
Then $X=\theta+2\varepsilon K+U$, where $K$ has law $(p_k)$ and $U$ is uniform
on $[-\varepsilon,\varepsilon)$ and independent of $K$. This is the lattice
case of Theorem~\ref{thm.exact-separated}, so
Corollary~\ref{cor.uniform-cell-shannon-equality} gives
\[
  F_\varepsilon(\mu)=H(K)=h(X)-\log_2(2\varepsilon).
\]
Thus, at tolerance $\varepsilon$, every histogram with bin width $2\varepsilon$
and finite bin entropy has scalar $\varepsilon$-entropy equal to its Shannon
lower bound. For independent samples this is also the process $\eps$-entropy. By case
(ii) of Theorem~\ref{thm.exact-separated}, MRR initialized with any
$\omega\ge1$ distinct points of $\theta+2\varepsilon\Z$ attains it at every
block length. Under the independent within-bin noise assumptions of
Theorem~\ref{thm.exact-separated}, the same initialization
also attains the process $\eps$-entropy for stationary dependent
bin indices.
\end{remark}
\begin{proposition}[A non-lattice absorbing geometry beats every grid offset]
\label{prop.capacity-advantage}
Fix $\eps>0$ and an integer $m\ge2$, and set
\[
 c_r=\left(4+\frac{2}{m}\right)r\eps,
 \qquad 0\le r<m.
\]
Let $(J_t)$ be stationary on $\{0,\ldots,m-1\}$ with uniform
marginal distribution. Let $(U_t)$ be independent with common
law $\Unif[-\eps,\eps)$, and independent of the entire process
$(J_t)$. Define $X_t=c_{J_t}+U_t$, so that its marginal law is
\begin{equation}\label{eq.two-interval-source}
 \mu=\frac1m\sum_{r=0}^{m-1}
       \Unif[c_r-\eps,c_r+\eps).
\end{equation}
Write
\[
 \overline H(J)=\lim_{n\to\infty}\frac1nH(J_{1:n}),
 \qquad
 \Delta_m=\frac1m\sum_{r=0}^{m-1}h_2(r/m),
\]
where $h_2(p)=-p\log_2p-(1-p)\log_2(1-p)$, with
$h_2(0)=h_2(1)=0$.

Capacity-$m$ MRR initialized with
$R_0=\{c_0,\ldots,c_{m-1}\}$ and any distinct timestamps
satisfies, for every $n\ge1$,
\begin{equation}\label{eq.two-interval-exact}
 \begin{gathered}
 C_{n,\eps}(X)=H(S_{1:n})=H(J_{1:n}),\qquad
 F_\eps(\mu)=\log_2m,\\
 R_\eps(X)=R_{\mathrm{MRR}}(X;A_0)=\overline H(J).
 \end{gathered}
\end{equation}
For the half-open fixed grids of width $2\eps$ defined in
\eqref{eq.fixed-grid-index},
\begin{equation}\label{eq.two-interval-grid}
 \inf_{\theta\in\R}
 H\bigl(K_{\theta,\eps}(X_1),\ldots,
        K_{\theta,\eps}(X_n)\bigr)
 =H(J_{1:n})+n\Delta_m.
\end{equation}
A minimizing grid offset can be chosen independently of $n$.
Consequently, with the infima taken over fixed shared initial
states,
\begin{equation}\label{eq.capacity-one-two}
 \begin{aligned}
 \inf_{A_0\in\A_1}R_{\mathrm{MRR}}(X;A_0)
   &=\overline H(J)+\Delta_m,\\
 \inf_{A_0\in\A_m}R_{\mathrm{MRR}}(X;A_0)
   &=\overline H(J).
 \end{aligned}
\end{equation}
For $m=2$, the centers are $0$ and $5\eps$ and
$\Delta_2=1/2$. If the band indices are also independent,
then $\overline H(J)=\log_2m$, recovering rates $3/2$ and $1$
in the two-band case.
All rates are in bits per sample, and the grid comparison
permits optimal block entropy coding of the grid indices.
\end{proposition}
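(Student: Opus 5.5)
The proof has two parts: an exact evaluation that reduces to Theorem~\ref{thm.exact-separated}, and an explicit computation of grid-index entropies followed by a one-variable optimization over the offset.

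\textbf{Exact optimum.} Consecutive centers differ by $(4+2/m)\eps>2\eps$, so \eqref{eq.exact-center-separation} holds. Theorem~\ref{thm.exact-separated}, case (i), with $\omega=m$ and $R_0=\{c_r\}$, then gives $C_{n,\eps}(X)=H(S_{1:n})=H(J_{1:n})$ and $R_\eps(X)=R_{\mathrm{MRR}}(X;A_0)=\overline H(J)$. Corollary~\ref{cor.uniform-cell-shannon-equality} gives $F_\eps(\mu)=H(J_1)=\log_2 m$, because the marginal of $J_1$ is uniform. For the second line of \eqref{eq.capacity-one-two}, every $A_0\in\A_m$ has rate at least $R_\eps(X)$ by \eqref{eq.causal-sandwich}, and this initialization attains it.

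\textbf{Grid entropy as a function of the offset.} Fix $\theta$. Each band $I_r=[c_r-\eps,c_r+\eps)$ has length $2\eps$, so it meets at most two half-open cells $C_k$. Distinct bands are separated by a gap larger than $2\eps$, so they never share a grid cell. Hence $K_t=K_{\theta,\eps}(X_t)$ determines $J_t$, and also a bit $B_t$ recording which of the at most two cells of $I_{J_t}$ contains $X_t$. Thus $K_{1:n}$ and $(J_{1:n},B_{1:n})$ determine each other.

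Given $J_{1:n}$, the bits $B_t$ are independent, because they are functions of the independent $U_t$ with $J_t$ fixed. Each $B_t$ is Bernoulli with parameter $p_{J_t}(\theta)$, the fraction of $I_{J_t}$ lying left of the grid boundary inside it. Therefore
\[
H(K_{1:n})=H(J_{1:n})+\sum_{t=1}^n\E\,h_2\bigl(p_{J_t}(\theta)\bigr)
=H(J_{1:n})+\frac nm\sum_{r=0}^{m-1}h_2\bigl(p_r(\theta)\bigr),
\]
where the last step uses the uniform marginal of each $J_t$.

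Since $c_r/(2\eps)=r(2+1/m)\equiv r/m \pmod 1$, there is a phase $\varphi\in[0,1/m)$, depending only on $\theta$, such that the multiset $\{p_r(\theta)\}$ equals $\{\varphi+r/m:0\le r<m\}$. None of these values wraps past $1$, since $\varphi+(m-1)/m<1$. I will check the half-open endpoint conventions here: at $\varphi=0$ a grid boundary aligns with a band endpoint, and the corresponding bit is degenerate, consistent with $h_2(0)=0$.

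\textbf{Optimization over the offset.} The function $g(\varphi)=m^{-1}\sum_r h_2(\varphi+r/m)$ is concave on $[0,1/m]$, so its minimum is attained at an endpoint. At both endpoints the multiset of arguments is $\{0,1/m,\dots,(m-1)/m\}$ or $\{1/m,\dots,1\}$, and both give $g=\Delta_m$ because $h_2(0)=h_2(1)=0$. The minimizing offset corresponds to $\varphi=0$ for every $n$, which proves \eqref{eq.two-interval-grid} and shows that the optimal offset does not depend on $n$.

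For the first line of \eqref{eq.capacity-one-two}, Corollary~\ref{cor.capacity-one-grid} identifies capacity-one MRR from $s_0$ with the grid of offset $s_0$. The source has a density, so grid boundaries have probability zero. Dividing \eqref{eq.two-interval-grid} by $n$ and taking limits gives $\inf_{A_0\in\A_1}R_{\mathrm{MRR}}=\overline H(J)+\Delta_m$.

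\textbf{Special cases and main obstacle.} For $m=2$, $\Delta_2=\tfrac12\bigl(h_2(0)+h_2(1/2)\bigr)=1/2$. For independent indices $\overline H(J)=\log_2 m$, which gives rates $3/2$ and $1$ in the two-band case. I expect the main obstacle to be the bookkeeping in the grid step: verifying that the conditional independence of the $B_t$ given $J_{1:n}$ holds under half-open boundary ownership, and that the phase set is exactly $\{\varphi+r/m\}$ for every $\theta$.
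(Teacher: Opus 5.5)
Your proposal is correct and follows essentially the same route as the paper: the grid step (conditional independence of the within-band bits given $J_{1:n}$, band phases $r/m$ equally spaced modulo one, concavity of the averaged $h_2$ with equal endpoint values $\Delta_m$, and the identification of capacity-one MRR with a fixed grid) matches the paper's argument step for step. The only difference is that you get the exact optimum by citing Theorem~\ref{thm.exact-separated}(i) and Corollary~\ref{cor.uniform-cell-shannon-equality}, which rest on the Shannon lower bound. The paper instead argues directly: the center spacing exceeds $4\eps$, so the intervals $[c_r-2\eps,c_r+2\eps]$ are disjoint and every admissible block label determines $J_{1:n}$; both arguments are valid.
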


\begin{proof}
Put $B_r=[c_r-\eps,c_r+\eps)$.
Since the center spacing exceeds $4\eps$, every observation
in $B_r$ produces a proximity hit at $c_r$.
No refinement occurs, the stored state remains fixed, and
the addresses identify the band indices. Hence
\[
 H(S_{1:n})=H(J_{1:n}).
\]
Conversely, an admissible reconstruction of a point in $B_r$
belongs to $[c_r-2\eps,c_r+2\eps]$.
These intervals are pairwise disjoint, so every admissible
block label determines $J_{1:n}$. Its entropy is therefore
at least $H(J_{1:n})$. Reconstruction at the band centers
attains this bound. Outside the bands, extend the quantizer
using a width-$2\eps$ grid with separate labels; these
additional labels have probability zero.
This proves the block identity in
\eqref{eq.two-interval-exact}, and its scalar specialization
gives $F_\eps(\mu)=\log_2m$.

To optimize over grids, parameterize them by their boundary
offset $a\in\R$ and write
\[
 K_{a,t}=\left\lfloor\frac{X_t-a}{2\eps}\right\rfloor.
\]
This ranges over the same family of half-open grids as
\eqref{eq.fixed-grid-index}.
The distance between consecutive bands exceeds $2\eps$,
so a grid cell cannot meet two bands. Thus $K_{a,1:n}$
determines $J_{1:n}$.

Set
\[
 b_r=c_r-\eps,\qquad
 p_r(a)=\left\{\frac{b_r-a}{2\eps}\right\},
\]
where braces denote fractional part.
The band $B_r$ meets at most two grid cells, with
probabilities $p_r(a)$ and $1-p_r(a)$, up to null boundary
events. Conditional on $J_{1:n}$, the grid outputs are
independent because the noises $U_t$ are independent.
The chain rule and the uniform band marginals give
\[
 H(K_{a,1:n})
 =H(J_{1:n})+\frac{n}{m}
   \sum_{r=0}^{m-1}h_2(p_r(a)).
\]

Since
\[
 \frac{b_r}{2\eps}=2r+\frac rm-\frac12,
\]
the phases are equally spaced modulo one.
Their unordered set can be written as
$\{u,u+1/m,\ldots,u+(m-1)/m\}$ for some $0\le u<1/m$.
The function
\[
 F_m(u)=\frac1m\sum_{r=0}^{m-1}h_2(u+r/m),
 \qquad 0\le u\le1/m,
\]
is concave and satisfies
\[
 F_m(0)=F_m(1/m)=\Delta_m.
\]
Therefore $F_m(u)\ge\Delta_m$, with equality when one band
aligns with a grid cell. This proves
\eqref{eq.two-interval-grid}, with an optimizing offset
independent of $n$.

Stationarity and the finite alphabet ensure that
$\overline H(J)$ exists. Dividing the block identities by
$n$ and taking limits gives the rate identities.
Every capacity-one initialization realizes a uniform grid,
apart from boundary conventions that are immaterial for
these absolutely continuous block laws; every grid origin
is available through such an initialization.
This proves the first equality in
\eqref{eq.capacity-one-two}.
The second follows from the universal lower bound
$R_{\mathrm{MRR}}\ge R_\eps$ and attainment by the
specified capacity-$m$ state.
\end{proof}
In the two-band case $m=2$ of
Proposition~\ref{prop.capacity-advantage},
the centers $0$ and $5\eps$ do not lie in any common lattice of
spacing $2\eps$. Proposition~\ref{prop.capacity-advantage} thus
distinguishes non-lattice reconstruction geometry from a causal
relabelling of a fixed grid. The comparison holds even after
optimizing the fixed-grid offset and allowing unrestricted block
entropy coding. It uses a source-matched, absorbing initialization:
no representative moves on the source path. It establishes a
benefit of the stored geometry, without implying convergence to
that geometry from arbitrary initialization or monotonic rate
improvement for general sources as capacity increases.
\begin{corollary}[Optimized-grid penalty for separated uniform bands]
\label{cor:sharp-uniform-band-penalty}
Let $\mathcal B_\varepsilon$ be the class
of stationary sources
\[
X_t=c_{J_t}+U_t,
\]
where $(J_t)$ takes values in a finite set,
the distinct centers satisfy
$|c_r-c_s|>4\varepsilon$, and $(U_t)$ is i.i.d.,
uniform on $[-\varepsilon,\varepsilon)$, and independent
of the process $J$.
For $X\in\mathcal B_\varepsilon$, define
\[
\Gamma_\varepsilon(X)
=\inf_{a\in\mathbb R}R_{\mathrm{grid}}(a)-R_\varepsilon(X)\qquad R_{\mathrm{grid}}(a)=\mathop{\lim}_{n\to\infty} \frac{H(K_{a,1:n})}{n}
\]
where $R_{\mathrm{grid}}(a)$ is the entropy rate of the grid indices at offset $a$. Then
\[
0\le\Gamma_\varepsilon(X)\le\frac1{2\ln2},
\qquad
\sup_{X\in\mathcal B_\varepsilon}\Gamma_\varepsilon(X)
=\frac1{2\ln2}.
\]
In particular, the sources of
Proposition~\ref{prop.capacity-advantage} satisfy
\[
\Gamma_\varepsilon(X)=\Delta_m,
\qquad
\lim_{m\to\infty}\Delta_m
=\frac1{2\ln2}=0.7213475204\ldots.
\]
For each source in this class, MRR initialized at 
its band centers, with capacity equal to their number, attains the optimal rate $R_\varepsilon(X)$.
\end{corollary}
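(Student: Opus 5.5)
The plan is to reduce everything to a single-band computation, following the proof of Proposition~\ref{prop.capacity-advantage}. Fix $X\in\mathcal B_\varepsilon$ with $m$ centers $c_0,\dots,c_{m-1}$ and band marginals $w_r=\Prob(J_1=r)$. Since centers are more than $4\varepsilon$ apart, they are in particular $2\varepsilon$-separated. Theorem~\ref{thm.exact-separated}(i) then gives $R_\varepsilon(X)=\overline H(J)$, and shows that capacity-$m$ MRR initialized at the centers attains it; this disposes of the last assertion. For a grid with boundary offset $a$, write $p_r(a)=\{(c_r-\varepsilon-a)/(2\varepsilon)\}$. Exactly as in that proof, each band meets at most two grid cells and no grid cell meets two bands, so $K_{a,1:n}$ determines $J_{1:n}$. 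Conditional on $J_{1:n}$, the grid outputs are independent Bernoulli choices with parameters $p_{J_t}(a)$. The chain rule and stationarity then give
\[
 \frac1nH(K_{a,1:n})=\frac1nH(J_{1:n})+\sum_r w_r h_2(p_r(a)).
\]
Hence $R_{\mathrm{grid}}(a)=\overline H(J)+G(a)$ with $G(a)=\sum_r w_rh_2(p_r(a))$, and $\Gamma_\varepsilon(X)=\inf_a G(a)\ge0$.

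For the upper bound I would average over the offset rather than optimize. As $a$ ranges uniformly over a period $[0,2\varepsilon)$, each $p_r(a)$ is uniform on $[0,1)$. Therefore
\[
 \inf_aG(a)\le\frac1{2\varepsilon}\int_0^{2\varepsilon}G(a)\,da=\sum_rw_r\int_0^1h_2(p)\,dp=\frac1{2\ln2},
\]
using $\int_0^1 -p\log_2p\,dp=1/(4\ln2)$ twice. This gives $\Gamma_\varepsilon\le1/(2\ln2)$ for every source in the class.

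For sharpness I would use the sources of Proposition~\ref{prop.capacity-advantage}. Their centers are $(4+2/m)\varepsilon$ apart, which exceeds $4\varepsilon$, so they lie in $\mathcal B_\varepsilon$. That proposition gives $\Gamma_\varepsilon(X)=\Delta_m$: its grid identity \eqref{eq.two-interval-grid} divided by $n$ gives the rate at the optimal offset, and $R_\varepsilon=\overline H(J)$. It remains to show $\Delta_m=\frac1m\sum_{r=0}^{m-1}h_2(r/m)\to\int_0^1h_2=1/(2\ln2)$. This is a Riemann sum of a continuous function on $[0,1]$ and converges, which proves that the supremum equals $1/(2\ln2)$.

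I expect the only delicate points to be these. First, the existence of the limit defining $R_{\mathrm{grid}}(a)$, which follows from the display above and the existence of $\overline H(J)$. Second, the claim that $p_r(a)$ is uniform under averaging, which needs the uniform $U_t$ and the half-open conventions; boundary events are null. The rest is direct reuse of the cited results.
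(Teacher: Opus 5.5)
Your proposal is correct and follows essentially the same route as the paper: you reduce to $R_{\mathrm{grid}}(a)=\overline H(J)+G(a)$, bound $\inf_a G(a)$ by its offset average $\int_0^1 h_2=1/(2\ln 2)$, and obtain sharpness from the Riemann sums $\Delta_m$ of Proposition~\ref{prop.capacity-advantage}; the only cosmetic difference is that you cite Theorem~\ref{thm.exact-separated}(i) for $R_\varepsilon=\overline H(J)$ and for MRR attainment, whereas the paper reuses the separation argument of Proposition~\ref{prop.capacity-advantage}. One small correction to your closing remark: $p_r(a)$ is uniform under offset averaging because of the fractional-part map alone, while the uniform $U_t$ is what makes the two cell probabilities equal to $p_r(a)$ and $1-p_r(a)$.
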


\begin{proof}
Write $\pi_r=\Pr(J_t=r)$. The separation argument in
Proposition~\ref{prop.capacity-advantage} gives
$C_{n,\varepsilon}(X)=H(J_{1:n})$ and, for every offset $a$,
\[
H(K_{a,1:n})=H(J_{1:n})+nG(a),
\]
where
\[
G(a)=\sum_r\pi_r
h_2\!\left(
\left\{\frac{c_r-\varepsilon-a}{2\varepsilon}\right\}
\right).
\]
Consequently,
$\Gamma_\varepsilon(X)=\inf_aG(a)\ge0$.
Averaging over one period of the grid offset gives
\[
\frac1{2\varepsilon}\int_0^{2\varepsilon}G(a)\,da
=\sum_r\pi_r\int_0^1h_2(u)\,du
=\frac1{2\ln2}.
\]
Thus $\inf_aG(a)\le1/(2\ln2)$.

For the equally weighted sources of
Proposition~\ref{prop.capacity-advantage},
the exact penalty is $\Delta_m$. Since $h_2$ is continuous
on $[0,1]$, its Riemann sums satisfy
\[
\lim_{m\to\infty}\Delta_m
=\int_0^1h_2(u)\,du
=\frac1{2\ln2}.
\]
This proves sharpness of the universal bound.
The MRR attainment statement follows because initialization
at all band centers produces only proximity hits and
preserves the band-index sequence.
\end{proof}
\begin{remark}[An isolated absorbing geometry]
\label{rem.isolated-geometry}
For the two-band case $m=2$ of
Proposition~\ref{prop.capacity-advantage}, the unordered
capacity-two geometry $\{0,5\eps\}$ is the unique absorbing
geometry.
It corresponds to two age orderings and to any assignment of distinct
absolute timestamps. It is also a strict local minimizer of the
one-step address entropy: if $r_1<r_2$,
$|r_1|<\eps/2$, and $|r_2-5\eps|<\eps/2$, then
\begin{equation}\label{eq.isolated-instantaneous-minimum}
 H_\mu(S_{\{r_1,r_2\}}(X))=1
 \quad\Longleftrightarrow\quad
 (r_1,r_2)=(0,5\eps).
\end{equation}
Thus the exact absorbing geometry is isolated among geometries attaining
the minimum one-step entropy locally. It is not an isolated point of
the ambient state space, and no isolation claim is made for minimizers
of the asymptotic rate over initial states.
\end{remark}
\begin{proof}
An absorbing geometry must cover both source intervals, up to null
sets, by its proximity halos. A halo has length $2\eps$ and cannot
meet both intervals with positive probability, since their separating
gap has length $3\eps$. With only two halos, one must therefore cover
each entire source interval. Both intervals have length $2\eps$,
forcing the halo centers to be $0$ and $5\eps$. Conversely, these
centers make every observation a proximity event almost surely.
For the local assertion, every address cell has diameter at most
$2\eps$, so the address determines the mixture component $J$.
Consequently its entropy is
$1+\tfrac12H(S\mid J=1)+\tfrac12H(S\mid J=2)$.
In the stated neighborhood each representative lies inside its own
source interval and the other representative's halo does not meet
that interval. If a representative is displaced from the interval's
center, its proximity cell captures a positive proper fraction of
that interval; the remaining positive fraction emits other addresses.
The corresponding conditional entropy is strictly positive. At the
two exact centers both conditional entropies vanish, proving
\eqref{eq.isolated-instantaneous-minimum}.
\end{proof}

\begin{corollary}[Exact grid rate for an independent Laplace source]
\label{cor.iid-laplace-grid}
Let $(X_t)$ be independent with common density
\[
 f(x)=\frac1{2b}\exp(-|x|/b),\qquad b>0.
\]
Initialize MRR with any $\omega\geq1$ distinct representatives in
$\Lambda_{0,\eps}$ from \eqref{eq.lattice}, and with any distinct
timestamps. Set
\[
 r=\eps/b,\qquad q=e^{-r},\qquad \rho=e^{-2r},
 \qquad
 h_2(u)=-u\log_2u-(1-u)\log_2(1-u).
\]
Then its entropy rate is exactly
\begin{equation}\label{eq.iid-laplace-grid-rate}
 \mathcal H(r)=h_2(q)+q\left[
  1-\log_2(1-\rho)-\frac{\rho}{1-\rho}\log_2\rho
 \right],
 \qquad
 R_{\mathrm{MRR}}(X;A_0)=\mathcal H(r).
\end{equation}
The process $\eps$-entropy satisfies the certified bracket
\begin{equation}\label{eq.iid-laplace-floor-bracket}
 \max\left\{0,\log_2\frac e r\right\}
 \leq R_\eps(X)\leq\mathcal H(r).
\end{equation}
Thus \eqref{eq.iid-laplace-grid-rate} computes the achievable MRR
rate exactly; it does not assert equality with the process $\eps$-entropy.
For $r=10^{-3}$, outward rounding gives
\begin{equation}\label{eq.iid-laplace-grid-numeric}
 11.408479325551\ \leq\ R_\eps(X)\ \leq\ 11.408479565820
 \quad\text{bits per sample}.
\end{equation}
The width of the displayed certified interval is less than $2.403\times10^{-7}$ bits per sample.
\end{corollary}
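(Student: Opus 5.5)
Since the source is i.i.d., the grid-index process $K_t=K_{0,\eps}(X_t)$ is i.i.d. as well. Proposition~\ref{prop.lattice-bijection} applies because the Laplace law is atomless, so the boundary condition \eqref{eq.lattice-null-boundaries} holds. That proposition gives $R_{\mathrm{MRR}}(X;A_0)=\overline H(K)=H(K_1)$ for every capacity and every lattice initialization. The first step is therefore to verify $H(K_1)<\infty$, which follows from the geometric tails computed below. The rate identity \eqref{eq.iid-laplace-grid-rate} then reduces to the exact computation of the one-letter entropy of $K_1=\floor{X_1/(2\eps)+1/2}$.

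Next I would compute the cell probabilities. The central cell $[-\eps,\eps)$ has probability $1-e^{-\eps/b}=1-q$. By symmetry, the cells $k\geq1$ and $k\leq-1$ carry total mass $q/2$ on each side. Within one side, the cell $[(2k-1)\eps,(2k+1)\eps)$ has probability $\tfrac12 q\rho^{k-1}(1-\rho)$, so conditioned on $|K|=k\geq1$ the index is geometric with ratio $\rho$. The grouping (chain) rule then decomposes $H(K_1)$ as
\begin{itemize}
\item $h_2(q)$, for the event $K=0$ versus $K\neq0$;
\item $q\cdot1$, for the sign given $K\neq0$;
\item $q$ times the entropy of a geometric law with ratio $\rho$, which equals $-\log_2(1-\rho)-\frac{\rho}{1-\rho}\log_2\rho$.
\end{itemize}
This is exactly \eqref{eq.iid-laplace-grid-rate}. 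The only care needed is the standard geometric-entropy sum $\sum_{j\ge0}(1-\rho)\rho^j[\log_2(1-\rho)+j\log_2\rho]$.

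For the bracket \eqref{eq.iid-laplace-floor-bracket}:
\begin{itemize}
\item The upper bound comes from \eqref{eq.causal-sandwich}, since $R_\eps\le R_{\mathrm{MRR}}$.
\item The lower bound comes from the Shannon lower bound \eqref{eq.shannon-lower}, using $h(X_{1:n})=n\log_2(2eb)$ for i.i.d. Laplace samples. This gives $h-\log_2(2\eps)=\log_2(e/r)$. Combined with the nonnegativity of entropy, this yields the $\max\{0,\cdot\}$ form.
\end{itemize}
This mirrors Corollary~\ref{cor.ar-laplace} with $a=0$, whose proof supplies the entropy and integrability checks.

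The main obstacle is the numerical certificate \eqref{eq.iid-laplace-grid-numeric}. The lower endpoint coincides with that of \eqref{eq.ar-numeric}, since both equal $\log_2(e\cdot 10^3)$. For the upper endpoint I would evaluate $\mathcal H(10^{-3})$ in interval arithmetic with outward rounding. To control cancellation, I would expand $\mathcal H(r)-\log_2(e/r)$ in powers of $r$, where the leading term should be $O(r^2)$, of size about $2.4\times10^{-7}$. As a sanity check, this is consistent with the Fisher bound $2r^2/(\pi^2\ln2)\approx2.9\times10^{-7}$ from Proposition~\ref{prop.lattice} with $\mathcal J=1/b^2$. Evaluating the expansion with a rigorous remainder bound gives the stated interval width.
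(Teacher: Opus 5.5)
Your proposal is correct and follows the paper's proof essentially step for step. Null lattice boundaries plus Proposition~\ref{prop.lattice-bijection} reduce the rate to $H(K_1)$. The cell masses $1-q$ and $\tfrac q2(1-\rho)\rho^{k-1}$ give the zero/sign/geometric-magnitude decomposition. The Shannon lower bound with $\overline h(X)=\log_2(2eb)$, together with admissibility of MRR, gives the bracket.

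The only differences are minor. The paper simply evaluates both endpoints at $r=10^{-3}$ to high precision, rather than using your cancellation-aware expansion (whose leading term is indeed $r^2/(6\ln2)$). Also, your phrase ``conditioned on $|K|=k\geq1$'' should read ``conditioned on $K\neq0$, $|K|$ is geometric on $\{1,2,\ldots\}$ with ratio $\rho$.''
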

\begin{proof}
Let $K_t=\floor{X_t/(2\eps)+1/2}$. Since the source has a density,
the lattice cell boundaries have probability zero. By
Proposition~\ref{prop.lattice-bijection},
$H(S_{1:n})=H(K_{1:n})=nH(K_1)$, where the last equality follows
from independence.

Integration over the lattice cells gives
\begin{equation}\label{eq.iid-laplace-grid-masses}
 \Prob(K_1=0)=1-q,\qquad
 \Prob(K_1=k)=\Prob(K_1=-k)
 =\frac q2(1-\rho)\rho^{k-1},\quad k\geq1.
\end{equation}
Conditional on $K_1\ne0$, its sign is uniform on $\{-1,1\}$ and
independent of its magnitude, whose law is geometric on
$\{1,2,\ldots\}$ with success probability $1-\rho$. The entropy
chain rule and the standard geometric-entropy formula
\cite{CoverThomas2006} therefore give
\[
 H(K_1)=h_2(q)+q\left[
  1-\log_2(1-\rho)-\frac{\rho}{1-\rho}\log_2\rho
 \right]<\infty.
\]
This proves \eqref{eq.iid-laplace-grid-rate}. The independent Laplace
source has differential entropy rate $\log_2(2eb)$. The Shannon
lower bound \eqref{eq.shannon-lower}, nonnegativity of entropy, and
the admissibility of the MRR reconstruction give
\eqref{eq.iid-laplace-floor-bracket}. Substitution of $r=10^{-3}$
gives the unrounded endpoints
\[
 \log_2(e/r)=11.40847932555105045097\ldots,
 \qquad
 \mathcal H(r)=11.40847956581995316836\ldots.
\]
Their difference is $2.40268902717\ldots\times10^{-7}$, which
proves the numerical statements.
\end{proof}

\begin{corollary}[Autoregressive Gaussian source]
\label{cor.ar-gaussian}
Let $X_t=aX_{t-1}+Z_t$, $|a|<1$, be stationary, where
the innovations are independent $N(0,\sigma^2)$ variables with
$\sigma>0$. Put
\[
 B_{\eps,\sigma}=\frac12\log_2
                         \frac{\pi e\sigma^2}{2\eps^2},
 \qquad
 G_{\eps,a,\sigma}=
             \frac{2(1+a^2)\eps^2}{\pi^2\sigma^2\ln2}.
\]
Then
\begin{equation}\label{eq.gaussian-floor-bracket}
 \max\{0,B_{\eps,\sigma}\}
 \leq R_\eps(X)\leq B_{\eps,\sigma}+G_{\eps,a,\sigma}.
\end{equation}
For every capacity and every fixed shared initial geometry contained
in a lattice of spacing $2\eps$,
\begin{equation}\label{eq.gaussian-lattice-rate}
 R_\eps(X)\leq R_{\mathrm{MRR}}(X;A_0)
            \leq B_{\eps,\sigma}+G_{\eps,a,\sigma}.
\end{equation}
For arbitrary fixed initialization, the upper rate bound becomes
\begin{equation}\label{eq.gaussian-general-rate}
 R_{\mathrm{MRR}}(X;A_0)
 \leq B_{\eps,\sigma}+G_{\eps,a,\sigma}
       +\frac{2\eps}     {e\sqrt{2\pi}\sigma\ln2}\left( \left\lfloor\frac{3\omega}{2}\right\rfloor-1\right).
\end{equation}
The case $a=0$ includes independent Gaussian sources, for which
$G_{\eps,0,\sigma}=2\eps^2/(\pi^2\sigma^2\ln2)$.
\end{corollary}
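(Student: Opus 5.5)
The plan mirrors the proof of Corollary~\ref{cor.ar-laplace}: verify the hypotheses of Theorem~\ref{thm.near-floor} for the Gaussian AR(1) process, compute $\overline h(X)$ and $\mathcal J$ exactly, and then read off the three brackets. The stationary solution is the linear process $X_t=\sum_{j\ge0}a^jZ_{t-j}$ with marginal $N(0,s^2)$, where $s^2=\sigma^2/(1-a^2)$. The block density factorizes as
\[
 f_n(x_{1:n})=\pi(x_1)\prod_{t=2}^n\varphi_\sigma(x_t-ax_{t-1}),
\]
where $\pi$ is the $N(0,s^2)$ density and $\varphi_\sigma$ the $N(0,\sigma^2)$ density. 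Each $f_n$ is a nondegenerate Gaussian density, so $\int f_n|\log_2 f_n|<\infty$ and $\sqrt{f_n}\in H^1(\R^n)$. Finite second moments give $H(\floor{X_1})<\infty$. The chain rule gives $h(X_{1:n})=h(X_1)+(n-1)h(Z_1)$, hence
\[
 \overline h(X)=\tfrac12\log_2(2\pi e\sigma^2),
 \qquad
 \overline h(X)-\log_2(2\eps)=\tfrac12\log_2\frac{\pi e\sigma^2}{2\eps^2}=B_{\eps,\sigma}.
\]

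Next compute $J_n$ exactly, as in \eqref{eq.ar-laplace-fisher}. With innovation score $\rho(z)=-z/\sigma^2$, the block scores are
\[
 \begin{aligned}
 \partial_1\ln f_n&=-x_1/s^2-a\rho(Z_2),\\
 \partial_i\ln f_n&=\rho(Z_i)-a\rho(Z_{i+1})\quad(2\le i\le n-1),\\
 \partial_n\ln f_n&=\rho(Z_n).
 \end{aligned}
\]
All cross terms vanish because $X_1,Z_2,\ldots,Z_n$ are independent and the scores have mean zero. Since $\E\rho(Z)^2=1/\sigma^2$ and $J(\pi)=1/s^2$, this yields
\[
 J_n=\frac1{s^2}+\frac{(n-1)(1+a^2)}{\sigma^2},
 \qquad
 \mathcal J=\frac{1+a^2}{\sigma^2}.
\]
Consequently $2\mathcal J\eps^2/(\pi^2\ln2)=G_{\eps,a,\sigma}$.

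Theorem~\ref{thm.near-floor} with Proposition~\ref{prop.lattice} then gives \eqref{eq.gaussian-lattice-rate}: lattice initialization removes the volume term, leaving exactly the $G$ term. Since the lattice MRR is an admissible quantizer, its rate bounds $R_\eps(X)$ from above. The Shannon lower bound \eqref{eq.shannon-lower} in the limit, together with nonnegativity of entropy, supplies the left endpoint $\max\{0,B_{\eps,\sigma}\}$. Together these give \eqref{eq.gaussian-floor-bracket}.

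For arbitrary initialization I will use the bounded-density refinement of Theorem~\ref{thm.near-floor}. The conditional density of $X_t$ given the past is $N(aX_{t-1},\sigma^2)$ for $t\ge2$, bounded by $1/(\sqrt{2\pi}\sigma)$. The marginal density is bounded by $1/(\sqrt{2\pi}s)\le1/(\sqrt{2\pi}\sigma)$, since $s\ge\sigma$. So one may take $M=1/(\sqrt{2\pi}\sigma)$, and the volume term becomes
\[
 \frac{m_\omega\eps\cdot2M}{e\ln2}=\frac{2\eps\,m_\omega}{e\sqrt{2\pi}\,\sigma\ln2}.
\]
This matches \eqref{eq.gaussian-general-rate} once we check that $2M$, rather than $\sqrt{\mathcal J}=\sqrt{1+a^2}/\sigma$, is the minimum. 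That holds because $2/\sqrt{2\pi}\approx0.80<1\le\sqrt{1+a^2}$. The case $a=0$ is immediate.

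The only step needing care is the Fisher computation, in particular the first-coordinate cross term and the marginal score. Because everything is Gaussian and explicit, I expect no real obstacle there.
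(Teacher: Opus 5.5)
Your proposal is correct and follows the paper's proof essentially step for step. It uses the same Markov block factorization and cross-term-free score computation giving $J_n=(n+(n-2)a^2)/\sigma^2$ and $\mathcal J=(1+a^2)/\sigma^2$, the same $\overline h(X)=h(Z_1)$, Proposition~\ref{prop.lattice} for lattice initialization, and the bounded-density refinement of Theorem~\ref{thm.near-floor} with $M=1/(\sqrt{2\pi}\sigma)$ for arbitrary initialization. Your check that $2M<\sqrt{\mathcal J}$ is correct but unnecessary, since $\min\{\sqrt{\mathcal J},2M\}\le 2M$ already gives the stated upper bound.
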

\begin{proof}
The stationary marginal law is
$N(0,\sigma^2/(1-a^2))$. Its Fisher information is
$J(\pi)=(1-a^2)/\sigma^2$, while the innovation Fisher information
is $J(f_Z)=1/\sigma^2$. The block factorization and score calculation
in the proof of Corollary~\ref{cor.ar-laplace} apply with
$\rho(z)=-z/\sigma^2$ and
$\rho_\pi(x)=-(1-a^2)x/\sigma^2$. The innovation scores have
mean zero, and the same independence argument removes every cross
term. Consequently,
\begin{equation}\label{eq.ar-gaussian-fisher}
 J_n=\frac{1-a^2+(n-1)(1+a^2)}{\sigma^2}
    =\frac{n+(n-2)a^2}{\sigma^2},
 \qquad
 \mathcal J=\frac{1+a^2}{\sigma^2}.
\end{equation}
All block laws are nondegenerate Gaussian, so the square-root
Sobolev and entropy integrability assumptions hold. Their finite
second moments imply $H(\floor{X_1})<\infty$. The conditional
density bound is $M=1/(\sqrt{2\pi}\sigma)$; it also bounds the
first marginal density, since that marginal has variance at least
$\sigma^2$. Finally,
\[
 \overline h(X)=h(Z_1)=\frac12\log_2(2\pi e\sigma^2).
\]
The bounded-density refinement of Theorem~\ref{thm.near-floor},
 with $M=1/(\sqrt{2\pi}\sigma)$, gives
 \eqref{eq.gaussian-general-rate}. The invariant lattice family in
Proposition~\ref{prop.lattice} removes the volume term, giving
\eqref{eq.gaussian-lattice-rate}. Its admissibility and the Shannon
lower bound give \eqref{eq.gaussian-floor-bracket}.
\end{proof}

\subsection{I.i.d. inputs and invariant state laws}
\label{sec.iid-state-rates}
For i.i.d. inputs, the address history contains no predictive
information about the next input beyond that needed to determine the
current partition. This yields an exact state-conditional expression for
the block entropy, even when the state process is not stationary.

\begin{lemma}[Independent-source rate identity]\label{lem.iid-rate}
Let the inputs be independent with common law $\mu$, and initialize MRR
at a fixed shared state $A_0$. Write $a_t$ for its age-ranked state,
identify $\val(a)$ with the set of coordinates of an age-ranked tuple,
and define
\begin{equation}\label{eq.state-entropy-observable}
 p_s(a)=\mu\{x:S_{\val(a)}(x)=s\},\qquad
 g_\mu(a)=-\sum_{s\in\mathcal S}p_s(a)\log_2p_s(a),
\end{equation}
with $0\log_2 0=0$. Suppose $H(\floor{X_1})<\infty$. Then all the
following entropies are finite, and for every $n\geq1$,
\begin{equation}\label{eq.iid-state-entropy-identity}
 H(S_{1:n})
 =\sum_{t=1}^n H(S_t\mid A_{t-1})
 =\sum_{t=1}^n\E g_\mu(a_{t-1}).
\end{equation}
If $\eta_t=\mathcal L(a_t)$ and
$\bar\eta_n=n^{-1}\sum_{t=0}^{n-1}\eta_t$, then
\begin{equation}\label{eq.iid-occupation-rate}
 \frac{H(S_{1:n})}{n}=\int g_\mu\,d\bar\eta_n,
 \qquad
 R_{\mathrm{MRR}}(X;A_0)
 =\limsup_{n\to\infty}\int g_\mu\,d\bar\eta_n.
\end{equation}
\end{lemma}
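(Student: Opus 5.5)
The plan is to apply the entropy chain rule along the address string, and use decoder compatibility to replace the address history by the state. First I establish finiteness, exactly as in the proof of Theorem~\ref{thm.near-floor}. A unit interval meets at most $K_{\omega,\eps}$ address cells at any geometry, so
\[
H(S_t\mid S_{1:t-1})\leq H(\floor{X_1})+\log_2K_{\omega,\eps}<\infty .
\]
Hence every $H(S_{1:n})$ is finite. The same bound applies to $H(S_t\mid A_{t-1}=a)$ for each fixed $a$, and therefore to $g_\mu(a)$, which is bounded uniformly in $a$.

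Next I identify each chain-rule term. By Proposition~\ref{prop.synchronization}, $A_{t-1}$ is a deterministic function of $(A_0,S_{1:t-1})$, and it takes countably many values. The address $S_t=S_{\val(A_{t-1})}(X_t)$ is a function of $(A_{t-1},X_t)$. Here $X_t$ is independent of $X_{1:t-1}$, and $S_{1:t-1}$ is a function of $X_{1:t-1}$.

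Conditional on $S_{1:t-1}=s_{1:t-1}$ (of positive probability), the state is a fixed value $a$, and $X_t$ still has law $\mu$. So the conditional law of $S_t$ is $(p_s(a))_s$, and its entropy is $g_\mu(a)$. Averaging over prefixes gives $H(S_t\mid S_{1:t-1})=\E g_\mu(a_{t-1})$.

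The same argument with conditioning only on $A_{t-1}=a$ gives $H(S_t\mid A_{t-1})=\E g_\mu(a_{t-1})$. Two facts are needed here. First, the timestamps play no role, since $p_s$ depends only on $\val$. Second, conditioning on the event $\{A_{t-1}=a\}$, which is measurable in $X_{1:t-1}$, leaves the law of $X_t$ equal to $\mu$. Summing over $t$ via the chain rule yields \eqref{eq.iid-state-entropy-identity}.

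Finally, $\E g_\mu(a_{t-1})=\int g_\mu\,d\eta_{t-1}$. Dividing the sum by $n$ gives the occupation-measure formula, and taking the limsup gives the rate via \eqref{eq.mrr-rate}. Measurability of $g_\mu$ follows from the Borel structure in Theorem~\ref{thm.realization}; on the countable reachable set it is not even needed. Integrability follows from the uniform bound above.

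I expect the main obstacle to be the careful justification that conditioning on the full prefix equals conditioning on the state, including null prefixes and infinite alphabets. Finiteness handles the alphabets, and countability of the reachable states makes the conditioning elementary.
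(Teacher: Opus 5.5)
Your proposal is correct and follows essentially the same route as the paper. Finiteness comes from the uniform cell-count bound $K_{\omega,\eps}$, and the chain rule, decoder compatibility, and the independence of $X_t$ from the $\sigma(X_{1:t-1})$-measurable address history identify both $H(S_t\mid S_{1:t-1})$ and $H(S_t\mid A_{t-1})$ with $\E g_\mu(a_{t-1})$; the only cosmetic difference is that you bound $H(S_t\mid S_{1:t-1})$ directly, whereas the paper bounds $g_\mu$ uniformly over states, which is the same estimate.
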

\begin{proof}
Decoder compatibility and the fixed initialization make $A_{t-1}$,
and hence $a_{t-1}$, measurable functions of $S_{1:t-1}$. The address
history is a function of $X_{1:t-1}$, so it is independent of $X_t$.
Consequently, for every address $s$,
\[
 \Prob(S_t=s\mid S_{1:t-1})
 =p_s(a_{t-1})
 =\Prob(S_t=s\mid A_{t-1})
 \quad\text{almost surely}.
\]
The conditional entropies given the history and given the current state
are therefore equal, and both are $\E g_\mu(a_{t-1})$.

For completeness, finiteness follows from a uniform bound that does not
require bounded support. Every unit interval meets at most
$K_{\omega,\eps}=2\omega+2+\ceil{1/(2\eps)}$ address cells:
there are at most $m_\omega$ short cells in the entire partition,
and the remaining cells have disjoint interiors of length $2\eps$.
Thus, at every fixed state,
\[
 0\leq g_\mu(a)
 \leq H(\floor{X_1})+\log_2K_{\omega,\eps}<\infty.
\]
The entropy chain rule now proves
\eqref{eq.iid-state-entropy-identity}. Integrating with respect to the
state laws gives \eqref{eq.iid-occupation-rate}.
\end{proof}
The state-conditional entropy identity requires independent
inputs: for dependent sources, the current state may discard
predictive information contained in the address history.
\begin{proposition}[Causal floor for independent inputs]\label{prop.causal-iid}
Let the inputs be independent with common law $\mu$. Then
\[
R^{\mathrm{c}}_\varepsilon(X)=F_\varepsilon(\mu).
\]
Consequently $F_\varepsilon(\mu)\le R_{\mathrm{MRR}}(X;A_0)$ for every fixed
shared initial state. If $A_0$ is $\varepsilon$-separated, then
$R_{\mathrm{MRR}}(X;A_0)\le F_\varepsilon(\mu)+2$.
\end{proposition}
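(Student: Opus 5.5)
Two inequalities give the equality $R^{\mathrm c}_\eps(X)=F_\eps(\mu)$. The upper bound comes from a memoryless scalar code. The lower bound is a chain-rule argument exploiting independence and causality. The MRR consequences then follow from the sandwich \eqref{eq.causal-sandwich} and Theorem~\ref{thm.two-bit}.

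\emph{Upper bound.} If $F_\eps(\mu)=\infty$ there is nothing to prove. Otherwise, fix $\eta>0$ and an admissible scalar partition $Q$ with $H_\mu(Q(X))\le F_\eps(\mu)+\eta$. Use the causal code $g_t(x_{1:t})=Q(x_t)$, with $\phi_t$ returning the reconstruction of the last label. It is faithful for every input sequence. By independence, $H(L_{1:n})=nH_\mu(Q(X))$. Letting $\eta\downarrow0$ gives $R^{\mathrm c}_\eps\le F_\eps(\mu)$.

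\emph{Lower bound.} Take any causal code with labels $L_t=g_t(X_{1:t})$, and use the chain rule
\[
 H(L_{1:n})=\sum_{t=1}^nH(L_t\mid L_{1:t-1})\ge\sum_{t=1}^n H(L_t\mid X_{1:t-1}),
\]
where the inequality holds because $L_{1:t-1}$ is a function of $X_{1:t-1}$, so conditioning on more cannot increase entropy. Fix a past $x_{1:t-1}$. The map $x\mapsto g_t(x_{1:t-1},x)$ partitions $\R$ into countably many Borel sections. The decoder output $\phi_t(g_1(x_1),\dots,g_{t-1}(x_{1:t-1}),\ell)$ depends only on the label $\ell$ once the past is fixed. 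Faithfulness for every input then makes each section lie within $\eps$ of a single value, so this is an admissible partition in \eqref{eq.scalar-floor}. By independence, the conditional law of $X_t$ given $X_{1:t-1}=x_{1:t-1}$ is $\mu$. Hence $H(L_t\mid X_{1:t-1}=x_{1:t-1})\ge F_\eps(\mu)$ for every past, and averaging gives $H(L_{1:n})\ge nF_\eps(\mu)$. Dividing by $n$ and taking the limsup gives $R^{\mathrm c}_\eps\ge F_\eps(\mu)$; this also covers the infinite case.

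\emph{Main obstacle.} The delicate step is the conditioning on a continuous past. Conditional entropy given $X_{1:t-1}$ has to be defined through regular conditional distributions, and the sections must be Borel. I would handle this with the disintegration: $H(L_t\mid X_{1:t-1})=\int H_\mu\bigl(g_t(x_{1:t-1},\cdot)\bigr)\,\mu^{\otimes(t-1)}(dx_{1:t-1})$, which is valid because $X_t$ is independent of the past. Sections of Borel maps are Borel, and the inequality $H(L_t\mid L_{1:t-1})\ge H(L_t\mid X_{1:t-1})$ is the standard data-processing fact for a countably valued $L_t$.

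\emph{Consequences.} \eqref{eq.causal-sandwich} gives $F_\eps(\mu)=R^{\mathrm c}_\eps\le R_{\mathrm{MRR}}$. For an $\eps$-separated $A_0$, Proposition~\ref{prop.separation} keeps every reachable state separated. Lemma~\ref{lem.iid-rate}, or directly the conditional form \eqref{eq.conditional-two-bit} with $B=A_{t-1}$, gives $H(S_t\mid S_{1:t-1})=H(S_t\mid A_{t-1})\le F_\eps(\mu)+2$. Here the conditional law of $X_t$ given the state is $\mu$ by independence. Summing over $t$ and dividing by $n$ yields $R_{\mathrm{MRR}}\le F_\eps(\mu)+2$. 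Lemma~\ref{lem.iid-rate} assumes $H(\floor{X_1})<\infty$; when $F_\eps(\mu)$ is finite, this follows from the overlap argument, and otherwise the bound is trivial.
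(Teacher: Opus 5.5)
Your proposal is correct and follows the paper's proof essentially step for step. It uses a memoryless scalar code for the upper bound, and for the lower bound the chain rule conditioned on the full input past with the sections of $g_t$ serving as admissible scalar partitions. For the MRR consequences it uses \eqref{eq.causal-sandwich} together with Proposition~\ref{prop.separation}, Theorem~\ref{thm.two-bit}, and Lemma~\ref{lem.iid-rate}, where $H(\lfloor X_1\rfloor)<\infty$ comes from the overlap inequality. Your alternative via \eqref{eq.conditional-two-bit} with $B=A_{t-1}$ and your explicit remarks on conditioning on a continuous past are minor elaborations, not a different route.
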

 
\begin{proof}
For the upper bound, apply an admissible scalar partition $Q$ in
\eqref{eq.scalar-floor} to each input separately. The labels are independent with
entropy $H_\mu(Q(X))$ per sample. Taking the infimum over $Q$ gives
$R^{\mathrm{c}}_\varepsilon(X)\le F_\varepsilon(\mu)$, trivially if $F_\eps(\mu)=\infty$.
 
For the lower bound, fix a causal code. Since $L_{1:t-1}$ is a function of
$X_{1:t-1}$, conditioning on $X_{1:t-1}$ cannot increase entropy, and
\[
  H(L_{1:n})=\sum_{t=1}^n H(L_t\mid L_{1:t-1})
  \ge\sum_{t=1}^n H(L_t\mid X_{1:t-1}).
\]
Fix $x_{1:t-1}$. The map $u\mapsto g_t(x_{1:t-1},u)$ partitions $\mathbb{R}$
into countably many Borel cells. On the cell with label $\ell$, the value
$\phi_t\bigl(g_1(x_1),\dots,g_{t-1}(x_{1:t-1}),\ell\bigr)$ lies within
$\varepsilon$ of every point, so the partition is admissible in
\eqref{eq.scalar-floor}. Since $X_t$ is independent of $X_{1:t-1}$, the conditional
law of $L_t$ given $X_{1:t-1}=x_{1:t-1}$ is the image of $\mu$ under this
partition, and its entropy is at least $F_\varepsilon(\mu)$. Averaging over
$x_{1:t-1}$ and summing over $t$ gives $H(L_{1:n})\ge nF_\varepsilon(\mu)$,
hence $R^{\mathrm{c}}_\varepsilon(X)\ge F_\varepsilon(\mu)$.
 
The lower bound for MRR follows from~\eqref{eq.causal-sandwich}. For the upper
bound we may assume $F_\varepsilon(\mu)<\infty$. A faithful cell meets at most
$\lceil 2\varepsilon\rceil+1$ unit intervals, so~\eqref{eq.partition-overlap} gives
$H(\lfloor X_1\rfloor)<\infty$ and Lemma~\ref{lem.iid-rate} applies. By
Proposition~\ref{prop.separation}, every reachable geometry is
$\varepsilon$-separated. Theorem~\ref{thm.two-bit} therefore gives
$g_\mu(a)\le F_\varepsilon(\mu)+2$ at every reachable age-ranked state, and
Lemma~\ref{lem.iid-rate} gives
$H(S_{1:n})\le n\bigl(F_\varepsilon(\mu)+2\bigr)$.
\end{proof}
 
With lattice initialization and consistent boundary ownership,
Proposition~\ref{prop.lattice-bijection}
and~\eqref{eq.lattice-scalar-one-bit} improve the upper
bound to $F_\varepsilon(\mu)+1$.  
Under the pathwise criterion, therefore, the causal floor of an independent
source is its scalar $\varepsilon$-entropy, and MRR is within two bits of it from any separated
initialization.
Neuhoff and
Gilbert~\cite{NeuhoffGilbert1982} need time-sharing between two scalar
quantizers because average distortion can be traded against rate; a pathwise
constraint admits no such trade-off.

\begin{proposition}[Rates from limiting state occupation laws]
\label{prop.invariant-state-rate}
Under the hypotheses of Theorem~\ref{thm.feller}, take $\mu$ to be the
common input law and fix $a_0\in\mathcal K_{J,\delta}$. The function
$g_\mu$ in \eqref{eq.state-entropy-observable} is bounded and continuous
on $\mathcal K_{J,\delta}$. Every weak subsequential limit $\pi$ of
$\bar\eta_n$ is invariant for the age-ranked transition kernel $P$.
If $\bar\eta_{n_j}\Rightarrow\pi$, then
\begin{equation}\label{eq.subsequential-invariant-rate}
 \lim_{j\to\infty}\frac{H(S_{1:n_j})}{n_j}
 =\int g_\mu(a)\,\pi(da).
\end{equation}
In particular, if $\bar\eta_n\Rightarrow\pi$, the normalized block
entropy has a limit and
\begin{equation}\label{eq.invariant-state-rate}
 R_{\mathrm{MRR}}(X;A_0)
 =\lim_{n\to\infty}\frac{H(S_{1:n})}{n}
 =\int g_\mu(a)\,\pi(da).
\end{equation}
Convergence of $\bar\eta_n$ holds, in particular, if $P$ has a unique
invariant probability measure on $\mathcal K_{J,\delta}$.
\end{proposition}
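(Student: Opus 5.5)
The plan is to combine the occupation-measure identity of Lemma~\ref{lem.iid-rate} with the Feller property from Theorem~\ref{thm.feller}. That lemma gives $H(S_{1:n})/n=\int g_\mu\,d\bar\eta_n$ exactly. Once $g_\mu$ is shown to be bounded and continuous on the compact space $\mathcal K_{J,\delta}$, weak convergence $\bar\eta_{n_j}\Rightarrow\pi$ passes directly to the limit and yields \eqref{eq.subsequential-invariant-rate}. Invariance of subsequential limits is the standard Krylov--Bogoliubov argument. For $\varphi$ continuous on $\mathcal K_{J,\delta}$,
\[
\int P\varphi\,d\bar\eta_n-\int\varphi\,d\bar\eta_n=\frac1n\Bigl(\int\varphi\,d\eta_n-\int\varphi\,d\eta_0\Bigr),
\]
which is bounded by $2\|\varphi\|_\infty/n$. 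Since $P\varphi$ is continuous by the Feller property, letting $n_j\to\infty$ gives $\int P\varphi\,d\pi=\int\varphi\,d\pi$. The initial law $\eta_0=\delta_{a_0}$ lies in the compact invariant space, so all $\bar\eta_n$ are supported there, and weak limits exist by tightness.

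Boundedness of $g_\mu$ is easy. Inputs lie in $[L_0,U_0]$, and every state is $\delta$-separated inside $J$. Hence the number of address cells meeting $[L_0,U_0]$ is uniformly bounded: there are at most $m_\omega$ short cells, and the full-width cells are counted by length. So $g_\mu\le\log_2$ of that uniform count.

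Continuity is the main step, and I would reuse the analysis in the proof of Theorem~\ref{thm.feller}. Take $a^{(m)}\to a$. The spatial order eventually stabilizes, and for each $x\notin E_a$ (a finite, hence $\mu$-null, set) the address $S_{\val(a^{(m)})}(x)$ eventually equals $S_{\val(a)}(x)$. By dominated convergence, $p_s(a^{(m)})\to p_s(a)$ for every address $s$. This requires care in labeling: interior addresses $(\mathrm I,i,j)$ refer to spatial indices $i$, and these agree once the order stabilizes. Cells that appear in the limit with vanishing length carry $\mu$-mass tending to $0$, because $\mu$ is atomless and the cell shrinks toward a point. Passing from convergence of each $p_s$ to convergence of the entropy is where the uniform bound helps. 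Only finitely many addresses, uniformly in $m$, have cells meeting $[L_0,U_0]$, so every other $p_s$ vanishes. The entropy is then a continuous function of finitely many coordinates on a fixed finite index set, which gives $g_\mu(a^{(m)})\to g_\mu(a)$. I expect the main obstacle to be this bookkeeping: controlling the finite address set uniformly near $a$ and handling the extra remainder cell that appears when a gap length crosses a multiple of $2\eps$.

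Finally, if $\bar\eta_n\Rightarrow\pi$ along the full sequence, the same limit gives \eqref{eq.invariant-state-rate}, and the limsup defining $R_{\mathrm{MRR}}$ becomes a limit. If $P$ has a unique invariant probability, then every subsequence of the tight family $(\bar\eta_n)$ has a further subsequence converging to an invariant law, which must be that unique law. Hence the whole sequence converges.
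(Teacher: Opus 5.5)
Your proposal is correct and follows essentially the same route as the paper. It combines the occupation identity of Lemma~\ref{lem.iid-rate}, continuity of $g_\mu$ via the almost-everywhere address stabilization from the proof of Theorem~\ref{thm.feller} plus dominated convergence over a uniformly finite address set, and invariance of subsequential limits from $\bar\eta_nP-\bar\eta_n=(\eta_n-\eta_0)/n$ tested against continuous $\varphi$, using that $P\varphi$ is continuous. The only difference is cosmetic: you bound $g_\mu$ by counting cells that meet $[L_0,U_0]$, whereas the paper uses a single finite address set $\mathcal S_0$ for both boundedness and continuity.
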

\begin{proof}
The input interval $[L_0,U_0]$ and the state interval $J$ are bounded.
The formulas for the refinement indices therefore provide a common
finite set $\mathcal S_0\subset\mathcal S$ containing every address
that can be emitted on this input interval at any state in
$\mathcal K_{J,\delta}$. In particular,
$0\leq g_\mu(a)\leq\log_2|\mathcal S_0|$ uniformly in $a$.

Suppose $a^{(m)}\to a$ in $\mathcal K_{J,\delta}$. The positive
separation $\delta$ makes the spatial ordering of the coordinates
eventually constant. As in the proof of Theorem~\ref{thm.feller},
exclude the finite set of halo endpoints, nearest-neighbor boundaries,
and refinement cuts for the limiting state. Away from this set, every
comparison selecting an address is eventually unchanged; a refinement
cell that appears when a gap crosses a multiple of $2\eps$ has
vanishing length at its endpoint. It follows that, for every
$s\in\mathcal S_0$,
\[
 \1_{\{S_{\val(a^{(m)})}(x)=s\}}
 \longrightarrow\1_{\{S_{\val(a)}(x)=s\}}
 \quad\text{for }\mu\text{-almost every }x.
\]
Here atomlessness makes the excluded finite set null. Dominated
convergence gives $p_s(a^{(m)})\to p_s(a)$ for every $s\in\mathcal S_0$.
Continuity of $-u\log_2u$ on $[0,1]$, with its value at zero defined
as zero, proves continuity of $g_\mu$.

Compactness ensures that the sequence of probability measures
$\bar\eta_n$ has weakly convergent subsequences. The Markov evolution
gives the identity
\[
 \bar\eta_nP-\bar\eta_n=\frac{\eta_n-\eta_0}{n}.
\]
Let $\bar\eta_{n_j}\Rightarrow\pi$ and let $\varphi$ be continuous
on $\mathcal K_{J,\delta}$. The function $P\varphi$ is continuous
by Theorem~\ref{thm.feller}; both functions are bounded by compactness.
Testing the preceding identity against $\varphi$ and passing to the
limit yields $\int P\varphi\,d\pi=\int\varphi\,d\pi$.
Thus $\pi$ is invariant. Applying weak convergence to the bounded
continuous function $g_\mu$ and using
\eqref{eq.iid-occupation-rate} proves
\eqref{eq.subsequential-invariant-rate} and
\eqref{eq.invariant-state-rate}. If the invariant probability measure
is unique, every convergent subsequence of $\bar\eta_n$ has the same
limit. Compactness then implies convergence of the entire sequence.
\end{proof}

\section{Discussion}\label{sec.discussion}
The representative-state framework separates the geometry of reconstruction,
the information needed to update that geometry, and the coding of the
resulting address sequence. The midpoint realization has a deterministic
error guarantee for every input sequence and every positive capacity.
Its scalar entropy comparison in Theorem~\ref{thm.two-bit} is uniform over
source distributions, subject to separation of the representative geometry.
The same geometric bound controls the branching of the causal address
tree and gives a distribution-free process comparison in
Corollary~\ref{cor.two-bit-process}. Fisher information of block
densities gives the tolerance-dependent improvement in
Theorem~\ref{thm.near-floor}. Both comparisons are against arbitrary
admissible block partitions.

The two terms in the process bound identify different effects. The density
term measures variation of the source density within a reconstruction
rectangle. The volume term accounts for the loss caused by short proximity and
remainder cells. 
Finite Fisher information controls both terms; a uniform bound on
 the one-step conditional densities is needed only for the optional
 refinement of the volume estimate.
Lattice alignment removes the latter term and gives the
sharpest AR-Laplace process $\eps$-entropy certificate used here. This holds for every
grid origin and every positive spacing $2\eps$, with no power-of-two
restriction on resolution. The grid conclusions highlighted in
Section~\ref{sec.grid-certificates} give a sharp one-bit scalar comparison
and the process bound \eqref{eq.lattice-combined-gap}, uniformly over
capacity and grid origin. For a fixed source satisfying the stated
Fisher-information and entropy assumptions, the rate gap therefore
vanishes as $\eps\downarrow0$ even when capacity grows without restriction.
In the
invariant lattice family, however, consistent boundary ownership makes
the numerical reconstruction identical to a fixed uniform quantizer.
The stored subset and local addresses still evolve, but their evolution
does not change block entropy or entropy rate. The one-bit lattice
comparison follows from the entropy chain rule and a cell-overlap count;
the causal address correspondence explains its applicability at every
representative capacity. Corollary~\ref{cor.capacity-one-grid} makes
this equivalence unavoidable at capacity one: every initial representative
anchors a fixed grid. The entropy
identity in Proposition~\ref{prop.rectangle-identity} applies to both cases
and can be used to study alternative refinement and replacement rules.

Non-lattice geometries can also have zero expected volume loss when the
source is supported on full-width proximity cells.
Theorem~\ref{thm.exact-separated} identifies the exact block floor for
uniform noise around separated centers and gives explicit sufficient
initializations for MRR to attain it, without a common-lattice assumption
on the centers. In
Proposition~\ref{prop.capacity-advantage}, storing two non-lattice centers
attains one bit per sample, whereas every fixed grid of width $2\eps$
has rate at least $3/2$ bits per sample, even with the best offset and
optimal block entropy coding. This demonstrates a benefit of representative
geometry under source-matched initialization. The attaining state is
absorbing, so the comparison does not establish a benefit of FIFO adaptation
or convergence from an arbitrary state. In this two-band case, Remark~\ref{rem.isolated-geometry}
shows that this unordered absorbing geometry is unique and is a strict
local minimizer of the one-step address entropy; it is not an isolated
point of the continuous state space. The capacity-dependent universal
upper bounds likewise do not imply monotonicity of the actual rate. 
More generally,
Corollary~\ref{cor:sharp-uniform-band-penalty} identifies
$1/(2\ln 2)$ bits per sample as the sharp worst-case excess
rate of the best uniform grid over the separated uniform-band
source class.

For i.i.d. inputs, Lemma~\ref{lem.iid-rate} gives a direct link
between the representative dynamics and the entropy rate. The normalized
block entropy is exactly the integral of the state-conditional address
entropy against the averaged state law. Under the compact Feller
hypotheses, Proposition~\ref{prop.invariant-state-rate} identifies
subsequential limits with integrals against invariant occupation laws.
Convergence of these averaged laws, for example under uniqueness of the
invariant measure, yields a rate limit. Existence of an invariant measure
alone does not imply this convergence. For dependent inputs, the current
state can discard predictive information available in the address history,
so the same state-conditional identity need not hold.

The capacity constraint counts stored representatives. It does not bound
the number of bits needed to represent arbitrary real values, the size of
timestamps, or the memory of an entropy coder. Block coding of addresses
can exploit dependence beyond the representative state. The statements
about $R_\eps$ concern this entropy rate; a coding implementation with a
specified memory and delay requires its own analysis.

The Fisher-information estimate applies to unbounded Gaussian scores and
retains a finite-tolerance guarantee. The asymptotic Shannon lower-bound
results of Linder--Zamir and Koch concern expected distortion
\cite{LinderZamir1994,Koch2016}; the bounds here concern a
coordinatewise error constraint along each path. Further questions include density
conditions beyond finite Fisher information, quantitative convergence of the
age-ranked state, and replacement rules that control the expected
cell-volume loss directly. These questions can be posed without changing
the four structural properties in Section~\ref{sec.axioms}.
\newcommand{\etalchar}[1]{$^{#1}$}
\providecommand{\bysame}{\leavevmode\hbox to3em{\hrulefill}\thinspace}
\providecommand{\MR}{\relax\ifhmode\unskip\space\fi MR }
\providecommand{\MRhref}[2]{%
  \href{http://www.ams.org/mathscinet-getitem?mr=#1}{#2}
}
\providecommand{\href}[2]{#2}

\appendix
\section{Certified evaluation of the   process $\eps$-entropy}\label{app.computation}
Let $X$ be independent and identically distributed on
$x_1<\cdots<x_N$, with probabilities $p_i>0$. Form the graph with
vertex set $\{1,\ldots,N\}$ and an edge between $i$ and $j$ precisely
when $|x_i-x_j|>2\eps$. An independent set is exactly a set of source
values admitting a common $\eps$-accurate reconstruction. Let
$C_1,\ldots,C_m$ be its maximal independent sets and put
\begin{equation}\label{eq.incidence}
 A_{ij}=\1_{\{i\in C_j\}}.
\end{equation}
Every maximal set is a consecutive index interval. Indeed, inserting any
omitted point between its extreme values preserves its diameter. Its
right endpoint is the largest admissible index for its left endpoint;
therefore $m\leq N$. Sorting followed by a moving right endpoint
enumerates all candidates, after which contained candidates are removed.
This gives an interval-incidence representation of size at most $N^2$.

The asymptotic chromatic-entropy characterization of graph entropy
\cite{AlonOrlitsky1996} and its convex formulation
\cite{Csiszar1990} give
\begin{equation}\label{eq.graph-floor}
 R_\eps(X)=\min_{q\in\Delta_m}U(q),\qquad
 U(q)=-\sum_{i=1}^N p_i\log_2 b_i(q),\qquad
 b_i(q)=\sum_{j=1}^m A_{ij}q_j,
\end{equation}
where $\Delta_m=\{q\in[0,1]^m:\sum_jq_j=1\}$. The graph used here is
the incompatibility graph, so independent sets, rather than cliques,
are the admissible reconstruction sets. A block cell is independent
in the disjunctive graph power exactly when its source vectors have
coordinatewise diameter at most $2\eps$. This identifies the
partition benchmark in \eqref{eq.block-floor} with the chromatic
entropy to which the cited asymptotic characterization applies.

For any feasible $q$ with $b_i(q)>0$, define
\begin{equation}\label{eq.graph-certificate}
 c(q)=\max_{1\leq j\leq m}
             \sum_{i=1}^N\frac{p_iA_{ij}}{b_i(q)}.
\end{equation}
The dual certificate of \cite[Theorem~27]{Harangi2023}, specialized to
the unconditional case, is
\begin{equation}\label{eq.certified-floor}
 U(q)-\log_2c(q)\leq R_\eps(X)\leq U(q).
\end{equation}
Here $c(q)\geq1$, since the $q$-weighted average of the quantities
inside the maximum is one. Equality $c(q)=1$ is equivalent to
optimality.

One implementation uses their alternating minimization iteration,
which in this interval-incidence form is
\begin{equation}\label{eq.graph-iteration}
 q_j^{(k+1)}=q_j^{(k)}
     \sum_{i=1}^N\frac{p_iA_{ij}}{b_i(q^{(k)})},
 \qquad q_j^{(0)}>0.
\end{equation}
The entries remain nonnegative and sum to one. Convergence of the
objective to \eqref{eq.graph-floor} follows from
\cite[Theorem~3]{Harangi2023}. At any iteration,
$\log_2c(q^{(k)})$ is a certified optimization error. Rigorous numerical
certification requires outward rounding when evaluating the two
endpoints in \eqref{eq.certified-floor}; an unrounded optimizer output
alone does not provide such a certificate.

For an independent source uniform on $\{0,1,2\}$ and
$\eps=0.6$, the maximal admissible sets are $\{0,1\}$ and $\{1,2\}$.
Taking $q=(1/2,1/2)$ gives $b=(1/2,1,1/2)$,
$U(q)=2/3$, and $c(q)=1$. Thus
\begin{equation}\label{eq.discrete-example}
 R_{0.6}(X)=\frac23,
 \qquad F_{0.6}(\mathcal L(X))=h_2(1/3)>\frac23,
\end{equation}
where $h_2(u)=-u\log_2u-(1-u)\log_2(1-u)$.
The scalar value follows because no cell can contain both $0$ and $2$;
a scalar partition must separate them, and the minimum is attained by
joining $1$ to either endpoint. This example distinguishes the scalar
and process $\eps$-entropies for the same independent source.
By Proposition~\ref{prop.causal-iid}, $F_{0.6}(\mathcal{L}(X))$ is also the
optimal causal rate for this source. The example therefore exhibits a strictly
positive price of causality at a fixed tolerance:
$h_2(1/3)-2/3=\log_2 3-4/3\approx0.252$ bit per sample.

\end{document}